\algrenewcommand\algorithmicrequire{\textbf{Input:}}
\algrenewcommand\algorithmicensure{\textbf{Output:}}
\newcommand{\gammatwo}{{\mathop{\gamma_2}\limits^\leftarrow}}
\newcommand{\gammatwodual}{{\mathop{\gamma_2}\limits^{\mathbin{\rotatebox[origin=c]{180}{$\curvearrowright$}}}}}
\newcommand{\gammatwoplus}{\gammatwo_+}
\newcommand{\gammatwodualplus}{\gammatwodual_+}
\numberwithin{theorem}{section}
\begin{document}

\title{An adversary bound for quantum signal processing}
\author{Lorenzo Laneve}
\affiliation{Faculty of Informatics — Universit\`a della Svizzera Italiana, 6900 Lugano, Switzerland}
\email{lorenzo.laneve1@gmail.com}
\orcid{0000-0003-2319-5456}
\thanks{}
\maketitle

\begin{abstract}
  Quantum signal processing (QSP) and quantum singular value transformation (QSVT), have emerged as unifying frameworks in the context of quantum algorithm design. These techniques allow to carry out efficient polynomial transformations of matrices block-encoded in unitaries, involving a single ancilla qubit. Recent efforts try to extend QSP to the multivariate setting (M-QSP), where multiple matrices are transformed simultaneously. However, this generalization faces problems not encountered in the univariate counterpart: in particular, the class of polynomials achievable by M-QSP seems hard to characterize. In this work we borrow tools from query complexity, namely the state conversion problem and the adversary bound: we first recast QSP as a state conversion problem over the Hilbert space of square-integrable functions. We then show that the adversary bound for a state conversion problem in this space precisely identifies all and only the QSP protocols in the univariate case. Motivated by this first result, we extend the formalism to several variables: {the existence of a feasible solution to the adversary bound implies the existence of a M-QSP protocol, and the computation of a protocol of minimal space is reduced to a rank minimization problem involving the feasible solution space of the adversary bound.}
\end{abstract}

\tableofcontents

\pagebreak
\section{Introduction}
One of the main tasks in quantum computation involves quantum algorithm design: given some computational problem, how can we devise procedures that offer provable speed-up compared to classical models? To solve this task effectively, we need a set of generally applicable techniques that we can combine to construct useful algorithms. For this purpose, many different approaches were proposed throughout the years including amplitude amplification schemes~\cite{brassardQuantumAmplitudeAmplification2002}, linear combination of unitaries~\cite{berryHamiltonianSimulationNearly2015,childsHamiltonianSimulationUsing2012,anLinearCombinationHamiltonian2023,anLaplaceTransformBased2024}, phase estimation~\cite{nielsenQuantumComputationQuantum2010a}, matrix inversion~\cite{harrowQuantumAlgorithmLinear2009} and quantum walks~\cite{aharonovQuantumWalksGraphs2001,ambainisQuantumWalkAlgorithm2004,szegedyQuantumSpeedMarkovChain2004,belovsQuantumWalksElectric2013,apersUnifiedFrameworkQuantum2021}.

More recently, a technique called \emph{quantum signal processing} (QSP) was developed which --- along with its lift to the \emph{quantum singular value transformation} (QSVT)~\cite{gilyenQuantumSingularValue2019} --- was proven to unify and simplify a vast number of the aforementioned techniques and beyond~\cite{lowHamiltonianSimulationUniform2017,lowOptimalHamiltonianSimulation2017,lowQuantumSignalProcessing2017,lowHamiltonianSimulationQubitization2019,martynGrandUnificationQuantum2021,berryDoublingEfficiencyHamiltonian2024}. QSP/QSVT involves the application of a polynomial transformation to matrices that are block-encoded in a unitary. The major advantage of this ansatz is that it requires a single ancilla qubit to work and the class of implementable polynomials gives great freedom. Moreover, numerical algorithms for the computation of QSP protocols starting from a desired transformation are efficient and well-understood~\cite{haahProductDecompositionPeriodic2019,chaoFindingAnglesQuantum2020,dongEfficientPhasefactorEvaluation2021,yingStableFactorizationPhase2022,wangEnergyLandscapeSymmetric2022,dongRobustIterativeMethod2024,yamamotoRobustAngleFinding2024,berntsonComplementaryPolynomialsQuantum2025,alexisInfiniteQuantumSignal2024,laneveGeneralizedQuantumSignal2025,niFastPhaseFactor2024,niInverseNonlinearFast2025}. Notable mentions across the QSP literature include randomized QSP~\cite{martynHalvingCostQuantum2025}, parallel QSP~\cite{martynParallelQuantumSignal2024}, recursive QSP~\cite{mizutaRecursiveQuantumEigenvalue2024}, composable QSP~\cite{rossiModularQuantumSignal2025} and protocol synthesis through a Solovay-Kitaev-like approximation~\cite{rossiSolovayKitaevTheoremQuantum2025}.

Given the success of QSP, which involves the eigenvalue or singular value transformation of a matrix, recent research tries to generalize the ansatz to multiple signals (we call this \emph{multivariate quantum signal processing}, or M-QSP~\cite{rossiMultivariableQuantumSignal2022,moriCommentMultivariableQuantum2024,itoPolynomialTimeConstructive2024}), which in turn allows the joint transformation of multiple matrices. Unfortunately, the flexibility of univariate QSP --- in terms of achievable polynomials --- does not carry over to more than one variable, and some impossibility results were found~\cite{nemethVariantsMultivariateQuantum2023,laneveMultivariatePolynomialsAchievable2025}. Some approaches combine QSP with linear combination of unitaries~\cite{borns-weilQuantumAlgorithmFunctions2023}, giving up the single-qubit structure, while others go for a modular, bottom-up approach~\cite{rossiModularQuantumSignal2025,gomesMultivariableQSPBosonic2024}, which requires a bit of engineering to achieve a desired transformation --- as opposed to the original top-down approach, where the target polynomial can be algorithmically compiled into an implementing protocol.

On the other hand, the theory of quantum query complexity is more mature and well-understood: the query complexity, i.e., the number of calls to an oracle needed for an algorithm to solve a certain task --- was studied in terms of lower bounds. A first example of the application of this theory is to prove that Grover's search is optimal~\cite{hoyerLowerBoundsQuantum2005}. Two main techniques to derive lower bounds for quantum query complexity include the polynomial method~\cite{bealsQuantumLowerBounds2001,ambainisPolynomialDegreeVs2004} and the adversary bound~\cite{ambainisQuantumLowerBounds2000,spalekMultiplicativeQuantumAdversary2008,spalekAllQuantumAdversary2005,belovsVariationsQuantumAdversary2015,hoyerTightAdversaryBounds2006}. While the former was already used in the context of lower bounds for matrix functions, where QSP/QSVT are protagonists~\cite{montanaroQuantumClassicalQuery2024}, the latter was proven to be stronger {in the bounded error setting}~\cite{hoyerNegativeWeightsMake2007,reichardtSpanProgramsQuantum2009,belovsDirectReductionPolynomial2024}: in particular, the adversary bound not only constitutes a lower bound, but also gives an upper bound, i.e., from a feasible solution to the adversary bound, one can construct a quantum query algorithm. A central task in the theory of query complexity is \emph{state conversion}~\cite{ambainisSymmetryassistedAdversariesQuantum2011,leeQuantumQueryComplexity2011,belovsOneWayTicketVegas2023}, where we are given one copy of some state $\ket{\xi_x}$ which needs to be converted into some final state $\ket{\tau_x}$, extracting information on $x$ only from the oracle.

In this work we show that (M-)QSP can be seen as an instance of state conversion {over the Hilbert space of square-integrable functions}. In particular, the adversary bound --- which already characterizes the original state conversion problem --- also characterizes QSP: in the case of univariate single-qubit QSP, we find a bijection between the feasible solutions of the adversary bound and the $SU(2)$-QSP protocols, which motivates the use of this formalism to explore M-QSP. Moreover, we find that, for a given desired polynomial transformation $P$ (either univariate or multivariate), the feasible solution space of the adversary bound is a convex set of all the possible protocols implementing $P$, for which we ideally want to find a minimal-rank solution to optimize space. {These results imply that the properties of QSP protocols can be studied by looking directly at the adversary bound. In particular for the multivariate case a feasible solution to the adversary bound \emph{certifies} the existence of a M-QSP protocol.}

The remainder of this work is structured as follows: Section~\ref{sec:finite-state-conversion} is a review of the theory of state conversion over finite label sets, mainly taken from~\cite{leeQuantumQueryComplexity2011,belovsOneWayTicketVegas2023}, introduces a new variant called \emph{partial state conversion}, {and defines the original adversary bound}. Section~\ref{sec:qsp} gives an overview of QSP, defining QSP protocols in both univariate and multivariate case. Section~\ref{sec:state-conversion-l2}, {containing our main technical results,} is devoted to the merge of these two theories, defining \emph{polynomial state conversion} and characterizing the structure of the feasible solution space of the corresponding adversary bound. Section~\ref{sec:discussion} concludes with remarks and outlook for future directions. The full technical proofs for Section~\ref{sec:state-conversion-l2} can be found in Appendix~\ref{apx:catalyst-proofs}, {and a quick introduction to the theory of linear operators over Hilbert spaces can be found in Appendix~\ref{apx:trace-class}.}

\subsection{Preliminaries and notation}
Given vectors $v_1, v_2, \ldots, v_n$, the direct sum can be defined as
\begin{align*}
    v_1 \oplus \cdots \oplus v_n := {\sum_{k = 1}^n} \ket{k} \ket{v_k} \ .
\end{align*}
The vectors might be of different lengths, and in this case the second register is taken large enough so that it can contain all the $v_k$'s. A polynomial vector is a vector whose entries are polynomials or, equivalently, a polynomial whose coefficients are vectors in some Hilbert space
\begin{align*}
    v(z) = \sum_{j} P_j(z) \ket{j} = \sum_k v_k z^k \ .
\end{align*}
The vector space $\vspan{ v(z) }_z$, which coincides with $\vspan{ v_k }_k$ is the \emph{range} of $v(z)$. {We refer to the dimension of the range of $v(z)$ as its \emph{dimension} or \emph{rank}.} In particular, if $v(z)$ has rank $d$, then there exists a unitary $U$ such that the range of $U v(z)$ is {$\vspan{\ket{0}, \ldots, \ket{d-1}}$}. The \emph{support} of $v(z)$ is the set of $k$ for which $v_k \neq 0$, while the \emph{degree} is the maximum such $k$. These definitions can be extended to square-integrable functions $v(z)$, for which the support will not necessarily be finite. Moreover, the definition can be extended to multiple variables, where we can define the general degree --- which is the degree of the maximum monomial appearing in the sum --- and the degree with respect to each variable, following the univariate definition. Here \emph{degree} will refer to the former unless explicitly said otherwise. A \emph{polynomial state} $\tau(z)$ is a polynomial vector satisfying $\norm{\tau(z)} = 1$ for any choice of $z \in \T$ ($\T$ being the complex unit circle), and analogously for the multivariate case on $\T^m$. If the condition relaxes to $\norm{\tau(z)} \le 1$, then we refer to this object as a \emph{partial state}.

For some set of vectors $v = \{ v_x \}_{x \in D}$, we define the Gram matrix $G_v$ as the $D \times D$ matrix, whose entries are $(G_v)_{xy} = \langle v_x, v_y \rangle$. It is a well-known fact in linear algebra that two sets of vectors $v, v'$ have the same Gram matrix if and only if there exists a unitary transformation $U$ such that $U v_x = v'_x$ for every $x \in D$, and we will make extensive use of this fact.

\section{Finite state conversion and the adversary bound}
\label{sec:finite-state-conversion}

\begin{figure}
    \begin{center}
        \input{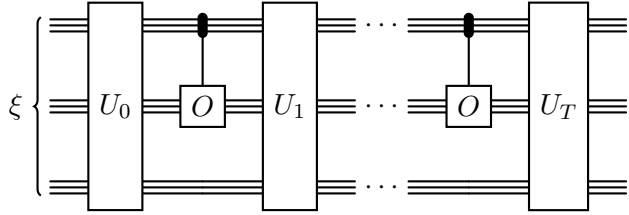}
        \caption{Circuit for a generic quantum query algorithm. The oracle $O$ can be controlled by some qubits, and does not need to act on all of them.}
        \label{fig:query-algorithm}
    \end{center}
\end{figure}

In this section we review the theory of state conversion, a central problem in query complexity. {We start by introducing the notion of query complexity in~\ref{sec:query-algorithms}, mainly following~\cite{belovsOneWayTicketVegas2023} (which we recommend for a more comprehensive treatment). We proceed in~\ref{sec:partial-state-conversion},~\ref{sec:gamma-bounds} by defining \emph{partial state conversion}, and the so-called $\gamma_2$-bounds. We use these semidefinite programs to define the adversary bound in~\ref{sec:adversary-bound-definition}, and we conclude in~\ref{sec:query-complexity-bounds} by proving that the feasible solutions of the adversary bound can be used to both provide a lower bound to the query complexity and construct algorithms to solve the state conversion problem.}

{\subsection{Quantum query algorithms and the state conversion problem}
\label{sec:query-algorithms}}
A \emph{quantum query algorithm} has access to an oracle, which is modeled as a linear operator (typically a unitary) $O : \calM \rightarrow \calM$. Without loss of generality, any such algorithm has the form (see Figure~\ref{fig:query-algorithm})
\begin{align*}
    \calA(O) = U_T \Tilde{O} U_{T-1} \Tilde{O} \cdots \Tilde{O} U_0
\end{align*}
where $U_k$ are unitary operators independent of $O$ acting on a Hilbert space $\calH = (\calM \otimes \calW) \oplus \calN$, while
\begin{equation}
    \label{eq:query-oracle-embedding}
    \Tilde{O} = (O \otimes \id_{\calW}) \oplus \id_{\calN}
\end{equation}
applies the oracle to some subspace of the total working space of the algorithm (this allows the procedure to decide whether to apply the oracle and in which subsystem). The most traditional definition of query complexity is the \emph{Monte Carlo query complexity}, which is simply the number $T$ of calls to $O$, while a second definition was introduced in~\cite{belovsOneWayTicketVegas2023}.
\begin{definition}
    \label{def:las-vegas-complexity}
    Given a starting state $\xi$ and an oracle $O$, the \emph{Las Vegas query complexity} of $\calA(O)$ is given by
    \begin{align*}
        L(\calA, O, \xi) = \sum_{t = 1}^T \norm{ \Pi U_t \Tilde{O} U_{t-1} \Tilde{O} \cdots \Tilde{O} U_0 \xi}^2
    \end{align*}
    where $\Pi$ is the projector onto {$\calM \otimes \calW$, the subspace where the controlled oracle $O \otimes \id_\calW$ is applied}, as in~(\ref{eq:query-oracle-embedding}).
\end{definition}
\noindent In other words, we take the probability mass of the component of the state that gets processed by the oracle at each step, {so this metric should be regarded as an \emph{expected cost} unlike the Monte Carlo definition}. This aligns with the classical case: the expected number of queries made by a randomized algorithm is $\sum_t p_t$, where $p_t$ is the probability that the $t$-th call is made, by linearity of expectation. 

We remark that, unlike Monte Carlo complexity, the Las Vegas definition depends on the starting state. The latter, however, satisfies an \emph{exactness} property~\cite{belovsTamingQuantumTime2024}, which essentially states that compositions of algorithms result in compositions of the respective Las Vegas complexities, without any log-factor penalty incurred for probability amplification. Moreover, the Las Vegas complexity introduced by the subroutines will depend on the distribution of the inputs they receive, which in certain cases might result in better total complexity (this is the \emph{thriftiness} property~\cite{belovsTamingQuantumTime2024}).

Using the setting defined so far, we look at a query complexity problem called \emph{state conversion}~\cite{leeQuantumQueryComplexity2011,belovsOneWayTicketVegas2023,belovsVariationsQuantumAdversary2015}. In this setting, oracles are indexed by a label $x \in D$ coming from a label set $D$ (which, for this section, will be finite), and we use $O_x$ to denote the unitary\footnote{Belov and Yolcu~\cite{belovsOneWayTicketVegas2023} also treat the case where $O_x$ are general contractions instead of unitaries, but we will only consider the unitary case here.} oracle when the input is $x$.

In a state conversion instance, we are given two sets of states $\{ \xi_x \}_{x \in D}, \{ \tau_x \}_{x \in D}$, and the goal is to construct a quantum query algorithm that satisfies, for every $x \in D$
\begin{align}
    \label{eq:state-conversion-output-condition}
    \calA(O_x) \xi_x = \tau_x
\end{align}
Here we are hiding possible states on ancilla systems, that must be fixed and independent of $x$ (there is also a \emph{non-coherent} version of state conversion~\cite{leeQuantumQueryComplexity2011}, which allows the final state to have a garbage state $g_x$ dependent on $x$ on the ancilla). In \emph{$\epsilon$-approximate} state conversion, the output condition of~(\ref{eq:state-conversion-output-condition}) relaxes to
\begin{align}
    \label{eq:state-conversion-output-condition-approx}
    \norm{\calA(O_x) \xi_x - \tau_x} \le \epsilon \ .
\end{align}
State conversion is a central task because several important problems are special cases. For example, when the starting state $\xi_x \equiv \xi$ is fixed for every $x \in D$, we obtain \emph{state generation}~\cite{ambainisSymmetryassistedAdversariesQuantum2011}. Furthermore, if we set $\tau_x = \ket{f(x)}$ for some discrete function $f : D \rightarrow [d]$, $d \in \N$, we obtain the \emph{function evaluation} problem, one of the first problems for which quantum query lower bounds were studied~\cite{bealsQuantumLowerBounds2001,ambainisQuantumLowerBounds2000,hoyerLowerBoundsQuantum2005}.

In the context of state conversion, the relationship between Las Vegas and Monte Carlo complexity is well-understood: for an algorithm $\calA$ solving state conversion we define the worst-case Las Vegas complexity as
\begin{align*}
    \calL(\calA) = \max_{x \in D} L(\calA, O_x, \xi_x)
\end{align*}
\begin{theorem}[\cite{belovsOneWayTicketVegas2023}]
    Given an algorithm $\calA$ for exact $\xi \mapsto \tau$ conversion there exists an algorithm $\calA'$ for $\epsilon$-approximate $\xi \mapsto \tau$ state conversion with Monte Carlo complexity $\bigO(\calL(\calA)/\epsilon^2)$.
\end{theorem}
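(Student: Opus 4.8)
The plan is to avoid tampering with $\calA$ step by step and instead pass through the semidefinite (adversary) characterization of Las Vegas query complexity for state conversion. First note that $\calA$ is one feasible exact algorithm for the task $\xi \mapsto \tau$, so its worst-case Las Vegas cost $\calL(\calA)$ is an upper bound on the optimal Las Vegas complexity $L_{\mathrm{LV}}(\xi \mapsto \tau)$ of the conversion problem, i.e. the minimum of $\calL(\cdot)$ over all exact algorithms. It therefore suffices to construct, from the data of the instance alone, an $\epsilon$-approximate Monte Carlo algorithm making $\bigO(L_{\mathrm{LV}}(\xi \mapsto \tau)/\epsilon^2)$ queries; $\calA$ is then no longer needed.

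Next I would combine the two sides of the adversary bound. On the lower-bound side, the adversary SDP attaches to the instance a value $\gamma = \gamma(\xi \mapsto \tau)$ that is dominated by the Las Vegas cost of every exact algorithm, so in particular $\gamma \le \calL(\calA)$ (the two in fact coincide for the optimal algorithm, but only this inequality is needed). On the upper-bound, algorithmic side, one takes a feasible solution of the corresponding dual adversary SDP of objective value $\bigO(\gamma)$ and runs the standard adversary-to-algorithm construction: a phase-estimation / quantum-walk routine on the Hilbert space spanned by the solution vectors, each iteration composing one input-independent reflection with one application of the embedded oracle as in~(\ref{eq:query-oracle-embedding}); after $\bigO(\gamma/\epsilon^2)$ iterations, it outputs for every $x$ a state within $\epsilon$ of $\tau_x$. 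Composing the two facts gives the desired $\calA'$, with $\bigO(\gamma/\epsilon^2) = \bigO(\calL(\calA)/\epsilon^2)$ calls to $O_x$.

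The real obstacle is the algorithmic direction, and seeing why a hands-on approach fails is what motivates routing through $\gamma$. One could try to keep $\calA$ and simply suppress the oracle calls at the steps whose query mass (the quantity summed in Definition~\ref{def:las-vegas-complexity}) is small; since each $U_t$ is unitary the induced error does not amplify, and it is bounded by twice the sum of the square roots of the suppressed masses. However, the constraint $\sum_t q_t \le \calL(\calA)$ controls only the total mass and not the number of steps — an exact algorithm may be padded with arbitrarily many tiny-mass queries — so this estimate can force one to retain essentially all calls, and the reduction breaks. Rebuilding the algorithm from the adversary solution sidesteps padding completely. The delicate points that remain are (i) checking that the SDP whose value bounds $\calL(\calA)$ is no larger than the one that feeds the walk construction (the latter is typically phrased via a filtered $\gamma_2$-type norm), and (ii) tracking the iteration count of the walk so that the dependence comes out exactly as $\bigO(\gamma/\epsilon^2)$ — with no stray logarithmic factors — while respecting the fixed-ancilla, coherent output condition~(\ref{eq:state-conversion-output-condition}).
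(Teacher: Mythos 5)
Your logical skeleton is sound and in fact matches how the paper (following Belovs--Yolcu) obtains this statement: pass from $\calA$ to a feasible solution of the relative $\gamma_2$-bound, then rebuild an algorithm from that solution; your diagnosis of why naive truncation of low-mass queries fails (padding with many tiny-mass calls) is also exactly the right motivation. The differences are in how each half is carried out. First, the detour through the optimal value $\gamma$ is unnecessary: the lower-bound argument for Theorem~\ref{thm:gamma-monte-carlo-bounds}, in the Las Vegas form noted after Theorem~\ref{thm:gamma-las-vegas-bounds}, extracts from $\calA$ an \emph{explicit} catalyst $v_x = \bigoplus_{t} \Pi_{\calW} U_{t-1}\Tilde{O}\cdots \Tilde{O} U_0\,\xi_x$ with $\norm{v_x}^2 = L(\calA, O_x, \xi_x) \le \calL(\calA)$, so no appeal to the SDP value or to duality is needed. (A terminological caution here: in this paper's conventions the algorithmic side comes from the primal minimization $\gammatwo$, i.e., from a catalyst; the dual $\gammatwodual$ is the adversary maximization over $\Gamma$ and yields only lower bounds.) Second, and more substantively, for the algorithmic direction you reach for the LMRSS phase-estimation/quantum-walk routine, whereas the paper uses the far more elementary catalytic construction of Algorithm~\ref{alg:state-conversion}: take $T = \lceil 2L/\epsilon^2\rceil$, split $\xi_x$ into $T$ pieces, apply the single-query Gram-matrix unitary $S(\id\oplus O_x)$ to each piece together with one shared copy of $v_x/\sqrt{T}$, and absorb the error $2\norm{v_x}/\sqrt{T}\le\epsilon$ incurred by carrying the subnormalized catalyst. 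That construction delivers the coherent output condition and the clean $2L/\epsilon^2$ count with no logarithmic factors --- precisely the two ``delicate points'' you flag but leave unverified for the walk-based route --- so if you want a self-contained argument rather than a citation, the catalyst-splitting construction is the one to use.
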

\noindent This is analogous to what happens in the classical case: any Las Vegas algorithm with expected complexity $T$ can be turned into a Monte Carlo algorithm with failure probability $\le p$ and complexity $T/p$, by a standard application of Markov's inequality. Here we simply have $p = \epsilon^2$, by~(\ref{eq:state-conversion-output-condition-approx}).

\subsection{Partial state conversion}
\label{sec:partial-state-conversion}
We introduce a new version of state conversion, called \emph{partial state conversion}. In this variant, the output condition in (\ref{eq:state-conversion-output-condition-approx}) becomes
\begin{align}
    \label{eq:partial-state-conversion-output-condition-approx}
    \norm{\Pi \calA(O_x) \xi_x - \tau_x} \le \epsilon
\end{align}
for some orthogonal projection $\Pi$ independent of $x$. In other words, we only care about the behavior of the final state in a fixed subspace of $\calH$. This is motivated by our objective, since in the context of the block-encoding framework~\cite{gilyenQuantumSingularValue2019} or quantum signal processing~\cite{lowHamiltonianSimulationQubitization2019,martynGrandUnificationQuantum2021}, we usually care about the behavior of $\tau$ only in a portion of the space, implying that the residual can be chosen freely to optimize the complexity measures. In other words, given $\xi, \tau$ satisfying $\norm{\xi_x} \ge \norm{\tau_x}$, we want an algorithm that converts $\xi_x$ into $\tau_x \oplus \sigma_x$, for some complementary state $\sigma_x$ satisfying $\norm{\xi_x}^2 = \norm{\tau_x}^2 + \norm{\sigma_x}^2$, and the second Hilbert space will be projected out by $\Pi$.

Notice that partial state conversion is fundamentally different from the contraction oracle setting provided in~\cite{belovsOneWayTicketVegas2023}, since the contraction necessarily implies a permanent loss of probability mass (dependent on the input) at each call of the oracle.

\subsection{The $\gamma_2$-bound and its properties}
\label{sec:gamma-bounds}
The $\gamma_2$-bounds are semidefinite programs that are central in the query complexity setting. The first $\gamma_2$-bound in the context of state conversion was introduced in~\cite{leeQuantumQueryComplexity2011}, then further refined in~\cite{belovsOneWayTicketVegas2023}. As we shall review, the feasible solutions to the $\gamma_2$-bounds correspond precisely to the quantum query algorithms for state conversion. These are nothing more than an extension of the generalized adversary bound, which characterizes the query complexity of function evaluation problems~\cite{hoyerNegativeWeightsMake2007}.

{In the following, a $D \times D$ matrix for a set $D$ is intended to be a $|D| \times |D|$ matrix whose entries are indexed by elements of $D$.}

\begin{definition}[Unidirectional relative $\gamma_2$-bound~\cite{belovsOneWayTicketVegas2023}]
    \label{def:unidir-gamma-bound}
    Fix a $D \times D$ Hermitian matrix $E = \qty{ e_{xy} }$ and a $D \times D$ Hermitian block matrix $\Delta = \qty{ \Delta_{xy} }$ where each block is $\Delta_{xy}: \calM \rightarrow \calM$ for some Hilbert space $\calM$. Define $\gammatwo(E \,|\, \Delta)$ as the optimal value of the following semidefinite program:
    \begin{align}
        \text{minimize} \ \ \ \ & \max_{x \in D} \ \lVert v_x \rVert^2\nonumber \\
        \text{subject to} \ \ \ \ & e_{xy} = \langle v_x, (\Delta_{xy} \otimes \id_{\calW}) v_y \rangle \ \ \ \ \ \text{\text{for every} $x, y \in D$} \label{eq:unidir-gamma-bound-constraint} \\
        & \text{$\calW$ is a vector space, $v_x \in \calM \otimes \calW$}\nonumber
    \end{align}
\end{definition}
\noindent The $\Delta_{xy}$ blocks are called filter matrices and, by the fact that $\Delta$ is Hermitian, we must have $\Delta_{xy} = \Delta_{yx}^*$. {By defining the matrix $V = \sum_{x \in D} \ketbra{v_x}{x}$}, we can rewrite the constraint (\ref{eq:unidir-gamma-bound-constraint}) as a single matrix equation
\begin{align}
    \label{eq:unidir-gamma-bound-constraint-matrix}
    E = V^* (\Delta \otimes \id_{\calW}) V
\end{align}
We then define a partial version of this bound, which will be useful later.
\begin{definition}[Unidirectional partial relative $\gamma_2$-bound]
    \label{def:unidir-gamma-bound-partial}
    Fix a $D \times D$ Hermitian matrix $E = \qty{ e_{xy} }$ and a $D \times D$ Hermitian block matrix $\Delta = \qty{ \Delta_{xy} }$ where each block is $\Delta_{xy}: \calM \rightarrow \calM$ for some Hilbert space $\calM$. Define $\gammatwoplus(E \,|\, \Delta)$ as the optimal value of the following semidefinite program:
    \begin{align}
        \text{minimize} \ \ \ \ & \max_{x \in D} \ \lVert v_x \rVert^2\nonumber \\
        \text{subject to} \ \ \ \ & E \succeq V^* (\Delta \otimes \id_{\calW}) V \label{eq:unidir-gamma-bound-constraint-partial} \\
        & \text{$\calW$ is a vector space, $v_x \in \calM \otimes \calW$}\nonumber
    \end{align}
\end{definition}
\noindent The only difference introduced in the partial $\gamma_2$-bound is that the equality sign in (\ref{eq:unidir-gamma-bound-constraint-matrix}) is replaced with a semidefinite inequality. {To simplify the exposition in some points, we will sometimes use a slight abuse of notation and express semidefinite constraints directly with the matrix entries. For example, we write~(\ref{eq:unidir-gamma-bound-constraint-partial}) as
\begin{align*}
    e_{xy} \succeq \langle v_x, (\Delta_{xy} \otimes \id_{\calW}) v_y \rangle \ .
\end{align*}}
As with any semidefinite program, one can define a dual problem.
\begin{definition}[Unidirectional subrelative $\gamma_2$-bound~\cite{belovsOneWayTicketVegas2023}]
    \label{def:subrelative-gamma-bound}
    The relative $\gamma_2$-bound $\gammatwo(E \, |\, \Delta)$ has a dual formulation $\gammatwodual(E \, |\, \Delta)$ defined as
    \begin{align*}
        \text{maximize} \ \ \ \ & \lambda_{\max} (\Gamma \circ E) \\
        \text{subject to} \ \ \ \ & \lambda_{\max} (\Gamma \circ \Delta) \le 1
    \end{align*}
    where the optimization is over the $D \times D$ Hermitian matrices $\Gamma$, $\circ$ is the Hadamard (block-wise) product $(\Gamma \circ \Delta)_{xy} = \Gamma_{xy} \Delta_{xy}$, and $\lambda_{\max}(M)$ denotes the largest eigenvalue of $M$.
\end{definition}
Similarly, we define the partial version $\gammatwodualplus$ by adding the constraint $\Gamma \preceq 0$ (a full derivation for the total version is given in~\cite[Appendix~A]{belovsOneWayTicketVegas2023}, whereas the partial version comes by adding $\Lambda \preceq 0$ to their Lagrange multiplier $\Lambda$).

\begin{theorem}[Theorem 6.3 in~\cite{belovsOneWayTicketVegas2023}]
    \label{thm:relative-gamma-bound-weak-duality}
    Weak duality holds, i.e., $\gammatwodual \le \gammatwo$ and $\gammatwodualplus \le \gammatwoplus$
\end{theorem}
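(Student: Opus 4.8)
The plan is to establish weak duality \emph{directly}, by pairing an arbitrary primal feasible point with an arbitrary dual feasible point and showing the dual objective does not exceed the primal objective; this is the standard weak-duality sandwich, and it avoids re-deriving the SDP dual from scratch (for that derivation I would point to~\cite[Appendix~A]{belovsOneWayTicketVegas2023}). I would first treat the full (non-partial) pair. Let $V=\bigoplus_{x\in D}v_x$ be feasible for $\gammatwo(E\,|\,\Delta)$, so $E=V^*(\Delta\otimes\id_\calW)V$, and let $\Gamma$ be feasible for $\gammatwodual(E\,|\,\Delta)$, so $\lambda_{\max}(\Gamma\circ\Delta)\le 1$. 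Both $\Gamma\circ\Delta$ and $\Gamma\circ E$ are Hermitian (since $\Gamma$, $\Delta$, $E$ are, and $(\Gamma\circ\Delta)_{yx}=\Gamma_{yx}\Delta_{yx}=(\Gamma_{xy}\Delta_{xy})^*$), so the dual constraint is equivalent to the operator inequality $\Gamma\circ\Delta\preceq\id$, hence $(\Gamma\circ\Delta)\otimes\id_\calW\preceq\id$, and $\lambda_{\max}(\Gamma\circ E)=\psi^*(\Gamma\circ E)\psi$ for a unit top eigenvector $\psi=\sum_x\psi_x\ket{x}$ (a genuine finite matrix, since $D$ is finite here).

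The core step is then a one-line computation. Writing $W:=V\psi=\sum_{x\in D}\psi_x\ket{x}\otimes v_x$ and using that primal feasibility gives $\Gamma_{xy}E_{xy}=\langle v_x,(\Gamma_{xy}\Delta_{xy}\otimes\id_\calW)v_y\rangle$, one checks $V^*((\Gamma\circ\Delta)\otimes\id_\calW)V=\Gamma\circ E$ entrywise, so
\begin{align*}
    \lambda_{\max}(\Gamma\circ E)=\psi^*(\Gamma\circ E)\psi=W^*((\Gamma\circ\Delta)\otimes\id_\calW)W\le\norm{W}^2=\sum_{x\in D}|\psi_x|^2\norm{v_x}^2\le\max_{x\in D}\norm{v_x}^2 .
\end{align*}
As this holds for every feasible pair, $\gammatwodual(E\,|\,\Delta)\le\gammatwo(E\,|\,\Delta)$.

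For the partial pair I would argue similarly, with one extra ingredient. Let $V$ be feasible for $\gammatwoplus$, so $E-F\succeq 0$ with $F:=V^*(\Delta\otimes\id_\calW)V$ (a Hermitian $D\times D$ matrix), and let $\Gamma$ be feasible for $\gammatwodualplus$, so in addition $\Gamma\preceq 0$. Split $\Gamma\circ E=\Gamma\circ F+\Gamma\circ(E-F)$. The $\Gamma\circ F$ term is bounded by $\max_x\norm{v_x}^2$ exactly as above (that estimate only used $F=V^*(\Delta\otimes\id_\calW)V$, not $E=F$). For the $\Gamma\circ(E-F)$ term, both $-\Gamma\succeq 0$ and $E-F\succeq 0$, so by the Schur product theorem $(-\Gamma)\circ(E-F)\succeq 0$, i.e.\ $\Gamma\circ(E-F)\preceq 0$, whence $\psi^*(\Gamma\circ(E-F))\psi\le 0$. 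Adding the two estimates gives $\lambda_{\max}(\Gamma\circ E)=\psi^*(\Gamma\circ E)\psi\le\max_x\norm{v_x}^2$, hence $\gammatwodualplus\le\gammatwoplus$.

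I do not expect a genuine obstacle: the only step beyond elementary linear algebra is the Schur product theorem (the Hadamard product of positive semidefinite matrices is positive semidefinite), and it is precisely this fact that makes the extra dual constraint $\Gamma\preceq 0$ the correct modification matching the relaxed primal constraint $E\succeq V^*(\Delta\otimes\id_\calW)V$. An equivalent route, the one taken in~\cite{belovsOneWayTicketVegas2023}, is to obtain both inequalities simultaneously from Lagrangian SDP duality, where the partial case corresponds to restricting the Lagrange multiplier $\Lambda$ of the inequality constraint to $\Lambda\preceq 0$; I would mention this as the alternative rather than carry it out.
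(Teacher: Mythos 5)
Your proof is correct and follows essentially the same route as the paper's: pair a primal-feasible $V$ with a dual-feasible $\Gamma$, use the identity $\Gamma\circ\bigl(V^*(\Delta\otimes\id_\calW)V\bigr)=V^*\bigl((\Gamma\circ\Delta)\otimes\id_\calW\bigr)V$ together with $\Gamma\preceq 0$ to get $\Gamma\circ E\preceq V^*\bigl((\Gamma\circ\Delta)\otimes\id_\calW\bigr)V$, and then bound the top eigenvalue by $\max_x\norm{v_x}^2$. The only (welcome) difference is that you make explicit the Schur-product step justifying that key operator inequality, and you absorb the dual constraint as $\Gamma\circ\Delta\preceq\id$ rather than carrying $\lambda_{\max}(\Gamma\circ\Delta)$ as a multiplicative factor, which sidesteps a sign subtlety in the paper's final estimate.
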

\begin{proof}
    Considering an optimal solution {$\{ v_x \}_{x \in D}$} for $\gammatwoplus(E \,|\, \Delta)$\footnote{The optimum might not be attained, but in this case we can replace $V$ with some feasible solution $V'$ whose objective value is $\gamma + \epsilon$, for any $\epsilon > 0$.}, we obtain, for any matrix $\Gamma \preceq 0$:
    \begin{equation}
        \label{eq:weak-duality-argument-inequality}
        \Gamma \circ E \preceq V^* \qty[(\Gamma \circ \Delta) \otimes \id_{\calW}] V
    \end{equation}
    We can then bound the maximal eigenvalue of $\Gamma \circ E$.
    \begin{align*}
        \lambda_{\max}(\Gamma \circ E) & = \max_{s : \norm{s} = 1} \langle s, (\Gamma \circ E) s\rangle \\
        & \le \max_{s : \norm{s} = 1} \langle Vs, \qty[(\Gamma \circ \Delta) \otimes \id_{\calW}] V s\rangle & \text{by (\ref{eq:weak-duality-argument-inequality})} \\
        & \le \max_{x \in D} \norm{v_x}^2 \cdot \max_{s' : \norm{s'} = 1} \langle s', \qty[(\Gamma \circ \Delta) \otimes \id_{\calW}] s'\rangle \\
        & = \gammatwoplus(E \,|\, \Delta) \cdot \lambda_{\max}(\Gamma \circ \Delta) \ .
    \end{align*}
    {For the second inequality, notice that applying $Vs$ is the same as applying $s' \propto Vs$ renormalized. The normalization factor would be $\norm{Vs} \le \norm{V} = \max_x \norm{v_x}$.}
    The argument for the total bound is obtained by replacing $\preceq$ with $=$ in~(\ref{eq:weak-duality-argument-inequality}).
\end{proof}
\noindent In our case, also strong duality holds (i.e., $\gammatwodual = \gammatwo$): for the total case, the dual formulation always satisfies Slater's condition, by taking $\Gamma = 0$~\cite{belovsVariationsQuantumAdversary2015,belovsOneWayTicketVegas2023}. The partial case is slightly subtler, as we additionally need $\Gamma \prec 0$, but we can simply take $\Gamma = - \epsilon \id$ for $\epsilon > 0$. This makes $\Gamma \circ \Delta$ block-diagonal, whose entries can be made arbitrarily small, and thus, strictly less than $1$ {in operator norm}. Constraint qualifications for the dual problems additionally guarantee that the optimum is attained.

\subsection{The adversary bound for state conversion}
\label{sec:adversary-bound-definition}
Consider now a state conversion problem $\xi_x \mapsto \tau_x$ as defined in Section~\ref{sec:finite-state-conversion}.

\begin{definition}
    We define the \emph{adversary bound} for the $\xi_x \mapsto \tau_x$ state conversion problem as
    \begin{align*}
        \gammatwo\qty( \langle \xi_x, \xi_y \rangle - \langle \tau_x, \tau_y \rangle \,\bigg|\, \id - O_x^* O_y ) \ .
    \end{align*}
\end{definition}
{Note that also here we used $\langle \xi_x, \xi_y \rangle - \langle \tau_x, \tau_y \rangle$ and $\id - O_x^* O_y$ to actually denote the matrices with these entries.} More precisely, we will take $E = G_{\xi} - G_{\tau}$ (the Gram matrices of $\xi$ and $\tau$), and the blocks of $\Delta$ to be $\Delta_{xy} = \id - O_x^* O_y$. A feasible solution $v = \qty{ v_x }_x$ to this optimization problem is called \emph{catalyst}, {and we will see that this object contains all the information to construct a quantum query algorithm for the given state conversion problem. The filter matrices on the other hand keep the information about the oracle we have access to}.

\begin{theorem}[Proposition 6.10 in~\cite{belovsOneWayTicketVegas2023}]
    The adversary bound always admits a catalyst.
\end{theorem}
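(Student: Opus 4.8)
The plan is to exhibit an explicit feasible solution to the semidefinite program defining $\gammatwo\qty( G_\xi - G_\tau \,|\, \id - O_x^* O_y )$. A catalyst is a family $\{v_x\}_{x\in D}$ with $v_x \in \calM\otimes\calW$ satisfying $\langle\xi_x,\xi_y\rangle - \langle\tau_x,\tau_y\rangle = \langle v_x, ((\id-O_x^*O_y)\otimes\id_\calW) v_y\rangle$ for all $x,y$; the objective value is irrelevant here, since we only claim existence, not a bound. So the task reduces to: given the Gram data of $\xi$ and $\tau$ and the oracles $O_x$, build vectors whose ``$\Delta$-twisted'' inner products reproduce $G_\xi - G_\tau$.

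First I would recall the standard construction underlying the adversary-to-algorithm correspondence. The natural guess is to take, in the spirit of the canonical adversary solution, $v_x$ built from $O_x\xi_x$ (or from the history of states the algorithm would produce), arranged so that the ``$\id$'' part of $\id - O_x^*O_y$ contributes $\langle O_x\xi_x, O_y\xi_y\rangle$-type terms and the ``$O_x^*O_y$'' part contributes $\langle\xi_x,\xi_y\rangle$-type terms, leaving a difference that matches $G_\xi - G_\tau$. Concretely, one expects an ansatz roughly of the form $v_x = \ket{1}\ket{\xi_x} - \ket{0}\ket{O_x\xi_x} - (\text{something encoding }\tau_x)$ on an enlarged register, so that expanding $\langle v_x,((\id-O_x^*O_y)\otimes\id)v_y\rangle$ telescopes: the cross terms involving $O$ cancel against the $O_x^*O_y$ block, and what survives is $\langle\xi_x,\xi_y\rangle - \langle\tau_x,\tau_y\rangle$. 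The key algebraic identity to verify is that $\langle v_x, (\id\otimes\id)v_y\rangle - \langle v_x, (O_x^*O_y\otimes\id)v_y\rangle$ collapses correctly; I would check this block by block, using $\langle O_x u, O_x u'\rangle = \langle u,u'\rangle$ (unitarity) to kill the terms that should vanish.

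The main obstacle is ensuring the construction is genuinely $x$-independent where it must be and that it lives in a product space $\calM\otimes\calW$ with a single oracle-action subspace $\calM$: the vectors $v_x$ may only depend on $x$ through $\xi_x$, $\tau_x$ and $O_x$ applied to these, and the filter $\id - O_x^*O_y$ acts only on the $\calM$ factor, so any component of $v_x$ outside the image of the oracle action must be oracle-invariant. Handling $\tau_x$ is the delicate part, since $\tau_x$ need not lie in $\calM$ at all; one resolves this by placing the $\tau$-contribution on a factor of $\calW$ that the filter ignores, so it passes through the $\Delta$-product as an identity and contributes exactly $-\langle\tau_x,\tau_y\rangle$. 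Once the ansatz is fixed, verifying the single matrix constraint $G_\xi - G_\tau = V^*((\id - O_x^*O_y)\otimes\id_\calW)V$ is a routine expansion. I would close by noting that feasibility is all that is claimed — the resulting objective $\max_x\|v_x\|^2$ is finite but not optimized — which is why no duality or optimality argument is needed for this statement.
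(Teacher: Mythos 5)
There is a genuine gap. First, note that the paper does not prove this statement at all --- it is imported verbatim as Proposition~6.10 of the cited reference --- so the only in-paper construction to compare against is the one in the proof of the lower bound of Theorem~\ref{thm:gamma-monte-carlo-bounds}: take \emph{any} exact algorithm $U_T \Tilde{O} U_{T-1} \cdots \Tilde{O} U_0$ converting $\xi_x$ into $\tau_x$ (with $x$-independent ancillae) and set $v_x = \gamma^0_x \oplus \cdots \oplus \gamma^{T-1}_x$, the direct sum of the intermediate states; the constraint then follows from a telescoping sum. Your proposal tries to shortcut this by writing down $v_x$ directly from $\xi_x$, $O_x\xi_x$ and $\tau_x$, and the mechanism you describe for producing the $-\langle\tau_x,\tau_y\rangle$ term does not work. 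Every product component $m_x \otimes w_x$ of $v_x$ contributes a term $\langle m_x, (\id - O_x^* O_y) m_y\rangle\,\langle w_x, w_y\rangle$ to $\langle v_x, ((\id - O_x^*O_y)\otimes\id_{\calW}) v_y\rangle$. The filter never ``passes through as an identity'': its diagonal blocks $\id - O_x^*O_x$ vanish, and there is no subspace on which $\id - O_x^*O_y$ acts as $\pm\id$ uniformly in $x,y$. So putting the $\tau$-dependence on a $\calW$-factor ``ignored by the filter'' yields terms of the form $\bigl(1 - \langle O_x m, O_y m\rangle\bigr)\langle w_x, w_y\rangle$, not $-\langle\tau_x,\tau_y\rangle$. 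Likewise, the $\xi$-part of your ansatz only telescopes one step: $v_x = \xi_x\otimes\ket{0}$ gives $\langle\xi_x,\xi_y\rangle - \langle O_x\xi_x, O_y\xi_y\rangle$, and stacking further terms $O_x\xi_x\otimes\ket{1}, \ldots$ only moves the endpoint of the telescope to whatever state the chain happens to reach; nothing in your construction forces that endpoint to be $\tau_x$.

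That last point is exactly the nontrivial content of the proposition, and it is the part your proposal leaves untouched: one must show that for an \emph{arbitrary} family of unitary oracles and an arbitrary norm-compatible pair $(\xi,\tau)$ there exists a finite-length exact converting algorithm (equivalently, a terminating telescope, equivalently a finite-norm feasible $v$). That is an existence statement about quantum algorithms, not a routine block-by-block expansion, and ``the resulting objective is finite'' is asserted in your write-up rather than derived. To repair the argument, either reproduce the reduction ``exact algorithm $\Rightarrow$ catalyst'' from the proof of Theorem~\ref{thm:gamma-monte-carlo-bounds} and then separately justify that an exact algorithm exists, or follow the reference's Proposition~6.10 directly.
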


Notice that the constraint (\ref{eq:unidir-gamma-bound-constraint-partial}) of the partial bound can be rewritten as
\begin{align*}
    E - G_{\sigma} = V^* (\Delta \otimes \id_{\calW}) V
\end{align*}
for some $G_{\sigma} \succeq 0$, which will be the Gram matrix of the complementary states $\sigma = \qty{ \sigma_x }_{x \in D}$. In other words, the choice of the catalyst $v$ will also influence $\sigma$, and its dimension will be equal to the rank of the residual $E - V^* (\Delta \otimes \id_{\calW}) V$\footnote{We remark that whenever $D$ is finite, $\sigma$ can always be chosen to be one-dimensional, but this does not necessarily mean that the resulting catalyst will be optimal.}.

\begin{lemma}
    \label{thm:total-partial-equivalence}
    If $\norm{\tau_x} = \norm{\xi_x}$ for every $x \in D$, then $\gammatwo = \gammatwoplus$.
\end{lemma}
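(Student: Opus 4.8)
The plan is to show $\gammatwoplus \le \gammatwo$ and $\gammatwo \le \gammatwoplus$ separately; the second inequality is immediate (any feasible solution of the equality-constrained program $\gammatwo$ is also feasible for the relaxed program $\gammatwoplus$, since $E = V^*(\Delta\otimes\id_\calW)V$ implies $E \succeq V^*(\Delta\otimes\id_\calW)V$), so the content is entirely in $\gammatwoplus \le \gammatwo$. For that direction, I would start from an (almost) optimal catalyst $v = \{v_x\}_{x\in D}$ for $\gammatwoplus(E\,|\,\Delta)$ with $E = G_\xi - G_\tau$ and $\Delta_{xy} = \id - O_x^*O_y$, and the associated residual $G_\sigma := E - V^*(\Delta\otimes\id_\calW)V \succeq 0$. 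Writing out the diagonal entries, the constraint gives $\langle\xi_x,\xi_x\rangle - \langle\tau_x,\tau_x\rangle - \langle\sigma_x,\sigma_x\rangle = \langle v_x, (\Delta_{xx}\otimes\id)v_x\rangle$; but $\Delta_{xx} = \id - O_x^*O_x = 0$ since $O_x$ is unitary, so the right-hand side vanishes. Combined with the hypothesis $\norm{\tau_x} = \norm{\xi_x}$, this forces $\norm{\sigma_x}^2 = 0$, i.e. $\sigma_x = 0$ for every $x$, hence $G_\sigma$ has zero diagonal.

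The key step is then to observe that a positive semidefinite matrix with all diagonal entries equal to zero must be the zero matrix: if $(G_\sigma)_{xx} = 0$ and $G_\sigma \succeq 0$, then for any $y$ the $2\times 2$ principal submatrix indexed by $\{x,y\}$ is PSD with a zero diagonal entry, which forces the off-diagonal entry $(G_\sigma)_{xy} = 0$ as well (a PSD matrix $\begin{psmallmatrix} 0 & b \\ \bar b & c\end{psmallmatrix}$ has $|b|^2 \le 0\cdot c = 0$). Therefore $G_\sigma = 0$, i.e. $E = V^*(\Delta\otimes\id_\calW)V$ exactly, so the same $v$ is in fact a feasible solution of the \emph{equality}-constrained program $\gammatwo(E\,|\,\Delta)$ with the same objective value $\max_x \norm{v_x}^2$. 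Taking the infimum over (near-)optimal catalysts gives $\gammatwo \le \gammatwoplus$, and together with the trivial reverse inequality we conclude $\gammatwo = \gammatwoplus$.

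I do not expect any serious obstacle here; the only points requiring a little care are (i) handling the fact that the optimum of the SDP may not be attained, which is dealt with exactly as in the footnote to the proof of Theorem~\ref{thm:relative-gamma-bound-weak-duality} — work with a feasible $v$ of objective value $\gammatwoplus + \epsilon$ and let $\epsilon \to 0$; and (ii) making sure the residual $G_\sigma$ is genuinely a Gram matrix, i.e. PSD, which is built into the formulation of the partial bound as noted just before the lemma statement. Everything else is the short linear-algebra fact about PSD matrices with vanishing diagonal. The conceptual takeaway worth stating is that the ``slack'' $\sigma$ in partial state conversion can only live where the oracle acts trivially on the relevant component, and when $\xi_x$ and $\tau_x$ have equal norm there is simply no room for it.
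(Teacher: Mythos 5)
Your proof is correct and follows essentially the same route as the paper's: take a feasible catalyst for the partial bound, note that the hypothesis $\norm{\xi_x}=\norm{\tau_x}$ together with $\Delta_{xx}=\id-O_x^*O_x=0$ forces the residual $G_\sigma$ to have zero diagonal, and conclude $G_\sigma=0$ from positive semidefiniteness (you spell out the $2\times2$ principal-minor argument that the paper leaves implicit, and you also record the trivial inequality $\gammatwoplus\le\gammatwo$ and the non-attainment caveat). The only blemish is a labelling slip in your opening sentence, where the roles of the ``immediate'' direction and the direction carrying the content are swapped relative to the argument you actually give.
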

\begin{proof}
    Let $v$ be a feasible catalyst on the partial $\gamma_2$-bound, and let $G_{\sigma}$ satisfy $$(G_{\sigma})_{xy} = e_{xy} - \langle v_x, \qty( (\id - O_x^* O_y) \otimes \id_{\calW}) v_y \rangle \ .$$
    By~(\ref{eq:unidir-gamma-bound-constraint-partial}) we have $G_{\sigma} \succeq 0$. Since we have that $E$ has a zero diagonal by the hypothesis, while $\id - O_x^* O_x = 0$ by unitarity of the oracle, we conclude that also $G_{\sigma}$ has a zero diagonal, and thus $G_{\sigma} = 0$.
\end{proof}
From now on, when we talk about the adversary bound, we will not make a distinction between total and partial $\gamma_2$-bound and implicitly consider always the latter, since in the case of total state conversion, the feasible solution spaces of the two variants naturally coincide. Another important property states that, if we want a subnormalized $\tau$, the value of the $\gamma_2$-bound decreases accordingly.
\begin{theorem}
    \label{thm:gamma-bound-subnorm-property}
    If $\gammatwo$ is the value of the adversary bound for $\xi \mapsto \tau$ state conversion, then for any $c \in (0, 1)$, the value of the adversary bound for $\xi \mapsto c \tau$ conversion is {at most} $c^2 \gammatwo$.
\end{theorem}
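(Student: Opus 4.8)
The plan is to establish the two matching inequalities for
$\gamma_c := \gammatwoplus\!\big(G_\xi - c^2 G_\tau \,\big|\, \{\id - O_x^*O_y\}\big)$, the value of the adversary bound for $\xi \mapsto c\tau$; since $G_{c\tau} = c^2 G_\tau$ this is the right object, and the goal is $\gamma_c = c^2\gammatwo$. Throughout I write $\Delta$ for the block matrix with blocks $\Delta_{xy} = \id - O_x^*O_y$, and recall (Lemma~\ref{thm:total-partial-equivalence} and the remark after it) that the bound for $\xi\mapsto\tau$ is the partial one $\gammatwoplus(G_\xi - G_\tau \,|\, \Delta)$.

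For $\gamma_c \le c^2\gammatwo$ I would simply rescale a catalyst. Fix $\epsilon>0$ and a catalyst $v=\{v_x\}$ for $\xi\mapsto\tau$ with $\max_x\norm{v_x}^2 \le \gammatwo+\epsilon$, and set $w_x := c\,v_x$. Then, with $W=\bigoplus_x w_x$,
\[
  W^*(\Delta\otimes\id_\calW)W \;=\; c^2\, V^*(\Delta\otimes\id_\calW)V \;\preceq\; c^2(G_\xi - G_\tau) \;\preceq\; G_\xi - c^2 G_\tau ,
\]
where the last step is $G_\xi - c^2 G_\tau - c^2(G_\xi - G_\tau) = (1-c^2)G_\xi \succeq 0$. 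Hence $w$ is feasible for $\gamma_c$ with objective $c^2(\gammatwo+\epsilon)$, and letting $\epsilon\to0$ gives the bound.

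The reverse inequality $\gamma_c \ge c^2\gammatwo$ is the hard part: running the same rescaling backwards ($w\mapsto w/c$) would demand $c^{-2}G_\xi - G_\tau \preceq G_\xi - G_\tau$, which fails, so I would instead go through the dual $\gammatwodualplus$ and strong duality (which holds here, with the optimum attained — see the discussion after Theorem~\ref{thm:relative-gamma-bound-weak-duality}). Let $\Gamma$ attain $\gammatwodualplus(G_\xi - G_\tau \,|\, \Delta) = \gammatwo$, so $\Gamma\preceq0$, $\lambda_{\max}(\Gamma\circ\Delta)\le1$, $\lambda_{\max}(\Gamma\circ(G_\xi-G_\tau))=\gammatwo$; pick a unit top eigenvector $u$. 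Since the same $\Gamma$ is feasible for the dual of $\xi\mapsto c\tau$, using $G_\xi - c^2G_\tau = c^2(G_\xi - G_\tau)+(1-c^2)G_\xi$ we get
\[
  \gamma_c \;\ge\; \big\langle u,\,(\Gamma\circ(G_\xi - c^2 G_\tau))\,u\big\rangle
  \;=\; c^2\,\big\langle u,\,(\Gamma\circ(G_\xi-G_\tau))\,u\big\rangle + (1-c^2)\,\big\langle u,\,(\Gamma\circ G_\xi)\,u\big\rangle
  \;=\; c^2\gammatwo + (1-c^2)\,\big\langle u,\,(\Gamma\circ G_\xi)\,u\big\rangle .
\]
By the Schur product theorem, $\Gamma\preceq0$ and $G_\xi\succeq0$ force $\Gamma\circ G_\xi\preceq0$, so the correction term is nonpositive — which on its own only gives something weaker than $c^2\gammatwo$. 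The crux is therefore to choose the optimal dual $\Gamma$ and its top eigenvector $u$ so that $(\Gamma\circ G_\xi)u=0$, i.e.\ so that the correction vanishes; I would obtain this from the complementary-slackness conditions linking $\Gamma$ to a primal-optimal catalyst $v^\star$, where the residual Gram matrix and the vanishing diagonal blocks $\Delta_{xx} = \id - O_x^*O_x$ (which make $\lambda_{\max}(\Gamma\circ\Delta)\le1$ insensitive to diagonal perturbations of $\Gamma$) supply exactly the slack that pins the top eigenspace of $\Gamma\circ(G_\xi-G_\tau)$ into $\ker(\Gamma\circ G_\xi)$ at optimality. Operationally this matches the picture that "$\le$" is the controlled application of the $\xi\mapsto\tau$ protocol with the oracle acting with amplitude $c$ (Las Vegas cost scales by $c^2$), and "$\ge$" is exact amplitude amplification of the $\xi\mapsto c\tau$ output from amplitude $c$ back to $1$ — consistency of the two forcing that amplification to cost the single-run cost times $1/c^2$ rather than $1/c$.

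So the main obstacle is the vanishing-correction step for "$\ge$": turning the complementary-slackness heuristic into a clean proof that an optimal dual with $(\Gamma\circ G_\xi)u=0$ exists, and checking the general case $\norm{\xi_x}\ge\norm{\tau_x}$ (residual Gram matrix not necessarily zero) as well as the normalized one. Everything else — the rescaling for "$\le$", positive-homogeneity of $\gammatwoplus$, the Schur-product sign, and the appeal to strong duality — is routine.
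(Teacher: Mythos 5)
Your first half---rescaling the catalyst to $w_x = c\,v_x$ and absorbing the leftover $(1-c^2)G_\xi$ into the positive-semidefinite residual---is exactly the paper's proof. The paper establishes only that direction: it exhibits $cv$ as a feasible catalyst for $\xi \mapsto c\tau$ and stops, so on the inequality it actually argues ($\gamma_c \le c^2\gammatwo$) your write-up matches it step for step.

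The reverse inequality, which you correctly isolate as the hard part and leave open, is not proved in the paper either, and your suspicion that the correction term $(1-c^2)\langle u, (\Gamma\circ G_\xi)u\rangle$ cannot in general be forced to vanish is sound: no complementary-slackness argument will close this, because $\gamma_c \ge c^2\gammatwo$ can fail. Take two labels with $\calM = \C$, $O_x = 1$, $O_y = e^{i\theta}$, $\xi_x=\xi_y=1$, $\tau_x=1$, $\tau_y=i$. Since $E = G_\xi - G_\tau$ and the blocks $\Delta_{xx},\Delta_{yy}$ all have zero diagonal, feasibility for $\xi\mapsto\tau$ forces the off-diagonal of the residual to vanish, i.e.\ $\langle v_x,v_y\rangle = (1-i)/(1-e^{i\theta})$, whence $\gammatwo = \sqrt2/\abs{1-e^{i\theta}}$. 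For $\xi\mapsto c\tau$ the diagonal of $E_c = G_\xi - c^2G_\tau$ equals $1-c^2>0$, so feasibility only requires $\abs{1-ic^2-(1-e^{i\theta})\langle v_x,v_y\rangle}\le 1-c^2$, and minimizing $\max(\norm{v_x}^2,\norm{v_y}^2)=\abs{\langle v_x,v_y\rangle}$ (parallel, equal-norm $v_x,v_y$) gives $\gamma_c = \bigl(\sqrt{1+c^4}-(1-c^2)\bigr)/\abs{1-e^{i\theta}}$, which is strictly below $c^2\sqrt2/\abs{1-e^{i\theta}}$ because $1-ic^2 = (1-c^2) + c^2(1-i)$ is a sum of two complex numbers that are not positively parallel, so the triangle inequality $\sqrt{1+c^4} < (1-c^2) + c^2\sqrt2$ is strict. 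So the statement should be read as the one-sided bound the paper actually proves (a real-aligned case like QSP does give equality, but not the general complex case); your ``$\le$'' argument is complete and correct, and the missing ``$\ge$'' is a gap in the theorem's claim of equality rather than in your proof.
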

\begin{proof}
    Assuming we have a feasible catalyst $v$ for $\xi \mapsto \tau$ conversion, we have
    \begin{align}
        \langle \xi_x, \xi_y \rangle - \langle \tau_x, \tau_y \rangle \succeq \langle v_x, \left( \Delta_{xy} \otimes \id \right) v_y \rangle \ . \label{eq:gamma-bound-subnorm-property-1}
    \end{align}
    On the other hand, the quantity for $\xi \mapsto c\tau$ can be rewritten as
    \begin{align*}
        \langle \xi_x, \xi_y \rangle - \langle c \tau_x, c \tau_y \rangle & = c^2 \qty(\langle \xi_x, \xi_y \rangle - \langle \tau_x, \tau_y \rangle) + (1 - c^2) \langle \xi_x, \xi_y \rangle
    \end{align*}
    Letting $(G_{\sigma})_{xy} = \langle \sqrt{1 - c^2} \xi_x, \sqrt{1 - c^2} \xi_y \rangle$, and by using (\ref{eq:gamma-bound-subnorm-property-1}) we obtain
    \begin{align*}
        \langle \xi_x, \xi_y \rangle - \langle c \tau_x, c \tau_y \rangle - G_{\sigma} & = c^2 \qty(\langle \xi_x, \xi_y \rangle - \langle \tau_x, \tau_y \rangle) \\
        & \succeq c^2 \langle v_x, \left( \Delta_{xy} \otimes \id \right) v_y \rangle \\
        & = \langle c v_x, \left( \Delta_{xy} \otimes \id \right) c v_y \rangle
    \end{align*}
    concluding that $cv$ is a valid catalyst for $\xi \mapsto c\tau$ conversion.
\end{proof}

\subsection{Lower bounds and upper bounds for query complexity}
\label{sec:query-complexity-bounds}

We now show the well-established result that the adversary bound characterizes the query complexity of state conversion, extending the claim also to the partial case.
\begin{theorem}[Extension of~\cite{leeQuantumQueryComplexity2011,belovsOneWayTicketVegas2023}]
    \label{thm:gamma-monte-carlo-bounds}
    {Consider an instance $\xi \mapsto \tau$ of (partial) state conversion. If $\gammatwo$ is the optimal value of the corresponding adversary bound, the following lower bound holds:
    \begin{align*}
        \gammatwo \le M(\xi \mapsto \tau, O) 
    \end{align*}
    where $M(\xi \mapsto \tau, O)$ is the Monte Carlo complexity for exact state conversion. Also the following upper bound holds:
    \begin{align*}
        M_{\epsilon}(\xi \mapsto \tau, O) \le \bigO(\gammatwo / \epsilon^2)
    \end{align*}
    where $M_{\epsilon}(\xi \mapsto \tau, O)$ is the Monte Carlo complexity of $\epsilon$-approximate state conversion.}
\end{theorem}

\begin{proof}[Proof of lower bound]
    Let $\calA(O) = U_T \Tilde{O} U_{T-1} \cdots \Tilde{O} U_0$ be a valid procedure for $\xi \mapsto \tau \oplus \sigma$ state conversion, and let {$\ket*{\gamma^k_x} := U_k \Tilde{O} \cdots \Tilde{O} U_0 \ket{\xi_x}$} be the intermediate state constructed in $k$ steps. Since $\langle \tau_x \oplus \sigma_x, \tau_y \oplus \sigma_y \rangle = \langle \tau_x,\tau_y \rangle + \langle \sigma_x,\sigma_y \rangle$, we can rewrite the difference as
    \begin{align*}
        \langle \xi_x, \xi_y\rangle - \langle \tau_x,\tau_y \rangle - \langle \sigma_x,\sigma_y \rangle & = \sum_{k = 0}^{T-1} \langle \gamma^{k}_x, \gamma^{k}_y \rangle - \langle \gamma^{k+1}_x, \gamma^{k+1}_y \rangle \\
        & = \sum_{k = 0}^{T-1} \langle\gamma^{k}_x, \gamma^{k}_y \rangle - \langle \Tilde{O}_x \gamma^{k}_x, \Tilde{O}_y \gamma^{k}_y \rangle \\
        & = \sum_{k = 0}^{T-1} \langle \gamma^{k}_x, \qty(\id - \Tilde{O}_x^* \Tilde{O}_y) \gamma^{k}_y \rangle \\
        & = { \sum_{k = 0}^{T-1} \langle \Pi \gamma^{k}_x, \qty[\id - (O_x^* O_y) \otimes \id_\calW] \Pi \gamma^{k}_y \rangle} \ .
    \end{align*}
    {where $\Pi$ is the projector onto $\calM \otimes \calW$. Therefore the direct sum $v_x = \Pi \gamma_x^0 \oplus \cdots \oplus \Pi \gamma_x^{T-1}$ is a valid catalyst satisfying
    \begin{align}
        \label{eq:las-vegas-estimate}
        \norm{v_x}^2 = \sum_{k = 0}^{T-1} \norm{\Pi \gamma^k_x}^2 \le T
    \end{align}}
    while $0 \preceq G_{\sigma} = E - V^* (\Delta \otimes \id_{\calW}) V$ is the positive-semidefinite matrix satisfying the constraint (in the total case we have $G_{\sigma} = 0$, by Theorem~\ref{thm:total-partial-equivalence}).
\end{proof}
We now show a proof of the upper bound, i.e., given a catalyst $v$ for a state conversion problem with value $L = \max_{x \in D} \norm{v_x}^2$, we construct an algorithm with query complexity $\lceil 4L / \epsilon^2 \rceil$.

\begin{proof}[Proof of upper bound]
    By feasibility of $v$ we obtain
    \begin{align*}
        \langle \xi_x, \xi_y \rangle - \langle \tau_x, \tau_y \rangle - \langle \sigma_x, \sigma_y \rangle = \langle v_x, \qty( (\id - O_x^* O_y) \otimes \id_{\calW}) v_y \rangle
    \end{align*}
    for some complementary state $\sigma$, which exists because $G_{\sigma} = E - V^* \Delta V \succeq 0$, as per (\ref{eq:unidir-gamma-bound-constraint-partial}). By rearranging the terms, we can rewrite the constraint as
    \begin{align*}
        \langle \xi_x, \xi_y \rangle + \langle O_x v_x, O_y v_y \rangle = \langle \tau_x, \tau_y \rangle + \langle \sigma_x, \sigma_y \rangle + \langle v_x, v_y \rangle \ .
    \end{align*}
    {Here $O_x$ should be actually $O_x \otimes \id_{\calW}$, but from now on we will keep $\id_{\calW}$ implicit for ease of notation.}
    Since these are respectively the Gram matrices of $\xi \oplus O v$ and $\tau \oplus \sigma \oplus v$, there exists a unitary $S$ satisfying
    \begin{align*}
        S (\xi_x \oplus O_x v_x) = \tau_x \oplus \sigma_x \oplus v_x
    \end{align*}
    for every $x \in D$. From now on we incorporate the complementary state into the target state $\tau \leftarrow \tau \oplus \sigma$ to simplify the notation. We have that, with only one query, we can convert $\xi$ into $\tau$, provided that we have the catalyst. However, the catalyst can have large norm, and we do not really know how to construct it (this would be another state conversion problem). We can nonetheless use $S$ to construct an algorithm $\calA$ which carries out the transformation
    \begin{align*}
        \calA \qty(\xi_x \oplus \frac{v_x}{\sqrt{T}}) = \tau_x \oplus \frac{v_x}{\sqrt{T}}
    \end{align*}
    using $T$ queries. If $T$ is large enough, $v_x/\sqrt{T}$ will be small enough in norm and can be neglected. More precisely, taking {$T = \lceil 4 L/\epsilon^2 \rceil$}, where $L = \max_{x \in D} \norm{ v_x }^2$, we have an error bound
    \begin{align*}
        \norm{\calA \xi_x - \tau_x} \le \norm{\xi_x - \xi_x \oplus \frac{v_x}{\sqrt{T}}} + \norm{\tau_x - \tau_x \oplus \frac{v_x}{\sqrt{T}}} \le \epsilon \ .
    \end{align*}
    Since $\calA$ is unitary, the error is preserved when we go from $\xi \oplus v / \sqrt{T}$ to $\tau \oplus v / \sqrt{T}$.

    We only need to construct $\calA$: what we do is to divide the initial state
    \begin{align*}
        \xi_x \oplus \frac{v_x}{\sqrt{T}} \mapsto \underbrace{\frac{\xi_x}{\sqrt{T}} \oplus \cdots \oplus \frac{\xi_x}{\sqrt{T}}}_{T \text{ times}} \oplus \frac{v_x}{\sqrt{T}}
    \end{align*}
    which in terms of quantum states requires the transformation $$\ket{1}_A \ket{\xi_x}_B + \frac{1}{\sqrt{T}} \ket{0}_A \ket{v_x}_B \mapsto \frac{1}{\sqrt{T}} \sum_{t = 1}^T \ket{t}_A \ket{\xi_x}_B + \frac{1}{\sqrt{T}} \ket{0}_A \ket{v_x}_B \ .$$ We then apply $S (\id \oplus O)$ to each of these pieces, together with the unique copy of $v_x /\sqrt{T}$ (in other words, at the $k$-th step the $S (\id \oplus O)$ will be controlled on $\ket{k}_A, \ket{0}_A$). Therefore, after $k$ steps the state will be of the form
    \begin{align*}
        \underbrace{\frac{\tau_x}{\sqrt{T}} \oplus \cdots \oplus \frac{\tau_x}{\sqrt{T}}}_{k \text{ times}} \oplus \underbrace{\frac{\xi_x}{\sqrt{T}} \oplus \cdots \oplus \frac{\xi_x}{\sqrt{T}}}_{T - k \text{ times}} \oplus \frac{v_x}{\sqrt{T}}
    \end{align*}
    and after $T$ steps we can recompose the copies of $\tau_x$ (i.e., undo the superposition on register $A$). While the Monte Carlo complexity of this procedure is exactly $T$, the total Las Vegas complexity is $T \cdot \norm{v_x}^2/T = \norm{v_x}^2$.
\end{proof}
\begin{algorithm}
    \caption{State conversion algorithm~\cite{belovsOneWayTicketVegas2023}}
    \label{alg:state-conversion}
    \begin{algorithmic}
        \Require The starting state $\ket{\xi_x}_B$ and access to an oracle $O_x$.
        \Ensure The target state $\ket{\tau_x}_B$, up to $\epsilon$ error.
        \State Let $T = \lceil 4L/\epsilon^2 \rceil$.
        \State Initial state is $\ket{1}_A \ket{\xi_x}_B$.
        \Comment $B$ is sufficiently large to contain $v_x$.
        \State Prepare the superposition {$\ket{1}_A \mapsto \frac{1}{\sqrt{T}} \sum_{k = 1}^T \ket{k}_A$}.

        \For{$k = 1$ \textbf{to} $T$}
            \State Apply $O_x$ controlled on $\ket{0}_A$.
            \State Apply $S$ to $\vspan{ \ket{k}_A, \ket{0}_A }$.
            \Comment $S(\id \oplus O_x) (\xi_x \oplus v_x) = \tau_x \oplus v_x$
        \EndFor
        \State Undo the superposition $\frac{1}{\sqrt{T}} \sum_{k = 1}^T \ket{k}_A \mapsto \ket{0}_A$.
    \end{algorithmic}
\end{algorithm}
{The intuition behind the outlined procedure (summarized in Algorithm~\ref{alg:state-conversion}) is similar to the one behind Theorem~\ref{thm:gamma-bound-subnorm-property}: instead of doing $\xi \oplus v \mapsto \tau \oplus v$ state conversion, we carry out $N$ different partial state conversions $\xi \oplus v \mapsto \frac{1}{\sqrt{N}} (\tau \oplus v)$ each of which has a cost $\gammatwo/N$.}
While the claim of Theorem~\ref{thm:gamma-monte-carlo-bounds} bounds the Monte Carlo query complexity for $\epsilon$-approximate state conversion, the result in~\cite{belovsOneWayTicketVegas2023} is stronger when it comes to Las Vegas complexity.
\begin{theorem}[\cite{belovsOneWayTicketVegas2023}]
    \label{thm:gamma-las-vegas-bounds}
    If $L(\xi \mapsto \tau, O)$ is the Las Vegas complexity of $\xi \rightarrow \tau$ exact state conversion with oracle $O$, and $\gammatwo$ is the corresponding value of the adversary bound, then $L(\xi \mapsto \tau, O) = \gammatwo$.
\end{theorem}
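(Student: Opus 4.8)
The plan is to prove the two inequalities $\gammatwo \le L(\xi\mapsto\tau,O)$ and $L(\xi\mapsto\tau,O) \le \gammatwo$ separately, sharpening the two halves of the proof of Theorem~\ref{thm:gamma-monte-carlo-bounds} so that the loose quantity $T$ is replaced throughout by the weighted, Las Vegas cost.

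For the lower bound, I would take an arbitrary valid exact algorithm $\calA(O) = U_T \Tilde{O} U_{T-1}\cdots\Tilde{O}U_0$ with intermediate states $\gamma^k_x = U_k \Tilde{O}_x \cdots U_0 \xi_x$, and mimic the telescoping computation in the proof of Theorem~\ref{thm:gamma-monte-carlo-bounds}, except that the catalyst is built only from the part of each intermediate state that is actually handed to the oracle: set $w^k_x = \Pi \gamma^k_x$, where $\Pi$ is the projector onto the subspace $\calM \otimes \calW$ on which $\Tilde{O}$ acts nontrivially, cf.~(\ref{eq:query-oracle-embedding}). Since $\id - \Tilde{O}_x^* \Tilde{O}_y$ equals $(\id - O_x^* O_y)\otimes\id_{\calW}$ on that subspace and is zero on $\calN$, each term $\langle\gamma^k_x,\gamma^k_y\rangle - \langle\gamma^{k+1}_x,\gamma^{k+1}_y\rangle$ collapses to $\langle w^k_x, ((\id - O_x^*O_y)\otimes\id_{\calW}) w^k_y\rangle$; summing over $k$ and using that $U_0$ is unitary and $\gamma^T_x = \tau_x\oplus\sigma_x$ shows that $w = \bigoplus_k w^k$ is a feasible catalyst, with residual $G_\sigma = G_\xi - G_\tau - W^*(\Delta\otimes\id_{\calW})W \succeq 0$ (here $\norm{\xi_x}=\norm{\tau_x}$ for genuine $\xi\mapsto\tau$ conversion, so $\sigma = 0$ by Lemma~\ref{thm:total-partial-equivalence}). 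The gain over Theorem~\ref{thm:gamma-monte-carlo-bounds} is that now $\norm{w_x}^2 = \sum_k \norm{\Pi\gamma^k_x}^2$ is exactly the per-query processed mass summed over the run, i.e.\ $L(\calA, O_x, \xi_x)$ from Definition~\ref{def:las-vegas-complexity}; hence $\gammatwo \le \max_{x} \norm{w_x}^2$, and taking the infimum over valid $\calA$ gives $\gammatwo \le L(\xi\mapsto\tau,O)$.

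For the upper bound, I would start from an \emph{optimal} catalyst $v$, which exists because the adversary bound always admits one and, by the constraint qualifications noted after Theorem~\ref{thm:relative-gamma-bound-weak-duality}, the optimum is attained. Folding the complementary state into the target exactly as in the upper-bound proof of Theorem~\ref{thm:gamma-monte-carlo-bounds}, one extracts the single-query ``catalytic gadget'' $S(\id\oplus O_x)$ with $S(\xi_x\oplus O_x v_x) = \tau_x\oplus v_x$ from the Gram-matrix-to-unitary fact, and running it over $T$ scaled copies of $\xi_x$ alongside one copy of $v_x/\sqrt{T}$, as in Algorithm~\ref{alg:state-conversion}, converts $\xi_x\oplus\frac{v_x}{\sqrt{T}}\mapsto\tau_x\oplus\frac{v_x}{\sqrt{T}}$ with Monte Carlo cost $T$ but Las Vegas cost exactly $T\cdot\norm{v_x/\sqrt{T}}^2 = \norm{v_x}^2$, independent of $T$. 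The new ingredient is that here the conversion must be made \emph{exact}: instead of discarding the $O(1/\sqrt{T})$ residual — which is all that is needed for the $\epsilon$-approximate statement of Theorem~\ref{thm:gamma-monte-carlo-bounds} — I would assemble the whole thing into one unbounded algorithm whose counter register is a superposition over $\N$ with geometrically decaying weights, so that the leftover $\xi$-amplitude is driven to $0$ exactly while the masses contributed to the oracle by the successive gadget applications form a convergent series summing to exactly $\norm{v_x}^2$. This is precisely where the exactness property of the Las Vegas measure is used, and it yields an exact algorithm $\calA$ with $L(\calA, O_x, \xi_x) = \norm{v_x}^2 \le \gammatwo$ for every $x$, hence $L(\xi\mapsto\tau,O) \le \gammatwo$.

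I expect the main obstacle to be this exact upper-bound construction: getting the Monte Carlo error to zero is easy, but doing so without inflating the Las Vegas budget requires the unbounded/limiting protocol and a careful verification that the per-round processed masses telescope to exactly $\sup_x\norm{v_x}^2$ — in particular that the geometric bookkeeping is compatible simultaneously for all $x\in D$ with the single oracle-independent unitary structure demanded of a query algorithm. If one is content with the weaker statement that $L(\xi\mapsto\tau,O)\le\gammatwo$ holds as an infimum rather than a minimum, the plain $T\to\infty$ limit of the finite construction already suffices.
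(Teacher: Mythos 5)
Your lower bound is exactly the paper's argument: the paper's own proof of Theorem~\ref{thm:gamma-las-vegas-bounds} states verbatim that one reruns the telescoping computation from Theorem~\ref{thm:gamma-monte-carlo-bounds} but builds the catalyst only from the oracle-processed components $\Pi_{\calW}\gamma^k_x$, which is precisely your $w^k_x$; the identification $\norm{w_x}^2 = L(\calA, O_x, \xi_x)$ with Definition~\ref{def:las-vegas-complexity} is the whole point, and your verification that $\id - \Tilde{O}_x^*\Tilde{O}_y$ vanishes on $\calN$ so the constraint is preserved is correct.

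For the upper bound the paper takes a different (and, I think, safer) route than your ``geometrically weighted counter'' construction. The paper's chain is $\xi_x \mapsto \xi_x \oplus v_x/\sqrt{T} \mapsto \tau_x \oplus v_x/\sqrt{T} \mapsto \tau_x$, where the middle step is the uniform-superposition Algorithm~\ref{alg:state-conversion} (exact once the catalyst is present, Las Vegas cost $\norm{v_x}^2$), and the outer two steps are themselves state conversions whose Las Vegas cost vanishes as $T\to\infty$ by Theorem~\ref{thm:gamma-bound-subnorm-property}; exactness and the value $\gammatwo$ are then obtained in the limit. Your proposed alternative has a concrete obstruction: the gadget $S$ satisfies $S(\xi_x \oplus O_x v_x) = \tau_x \oplus v_x$ only for that particular \emph{equal-amplitude} combination, since $S$ is a fixed linear map and $S(\alpha\,\xi_x \oplus \beta\, O_x v_x)$ need not equal $\alpha\,\tau_x \oplus \beta\, v_x$ when $\alpha \neq \beta$. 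This is exactly why Algorithm~\ref{alg:state-conversion} uses the uniform $1/\sqrt{T}$ superposition so that every $\xi$-slice carries the same amplitude as the catalyst slice; with geometrically decaying weights $c^k$ on the counter register, step $k$ would need a catalyst copy of amplitude $c^k$, and a single catalyst of fixed amplitude cannot serve all steps simultaneously. So as stated your unbounded protocol does not telescope. Your fallback — taking the infimum over $T$ of the finite construction — is the correct repair and is what the paper (following \cite{belovsOneWayTicketVegas2023}) actually does; the remaining subtlety of whether the infimum is attained is resolved in \cite{belovsOneWayTicketVegas2023} by admitting infinite algorithms, a point the paper explicitly defers to that reference.
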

The proof of the lower bound is exactly as the one for Theorem~\ref{thm:gamma-monte-carlo-bounds}, with the only difference that {the quantity in (\ref{eq:las-vegas-estimate}) is exactly the Las Vegas complexity of the given algorithm}. For the upper bound, the algorithm basically involves the chain of transformations
\begin{align*}
    \xi_x \mapsto \xi_x \oplus \frac{v_x}{\sqrt{T}} \mapsto \tau_x \oplus \frac{v_x}{\sqrt{T}} \mapsto \tau_x
\end{align*}
where the middle transformation is given by Algorithm~\ref{alg:state-conversion} (the construction is exact since we now have $v_x$), while the first and third transformations require a Las Vegas complexity that goes to $0$ as $T \rightarrow \infty$ (the Monte Carlo complexity, however, grows with $T$). We do not really need this Las Vegas construction for our exposition, but as mentioned earlier, Las Vegas complexity has nice composition properties that might be useful even in our case, and we invite the interested reader to check the full argument in~\cite{belovsOneWayTicketVegas2023}.

\section{Quantum signal processing}
\label{sec:qsp}

In this section we review quantum signal processing (QSP).
{We start with univariate QSP over $SU(2)$ and introduce the layer stripping argument in~\ref{sec:qsp/univariate-qsp-layer-stripping}. We then extend the formalism from $SU(2)$ to $SU(r+1)$ with $r > 1$, and show in~\ref{sec:qsp-over-sun} that essentially the same results hold as long as we keep one variable~\cite{laneveQuantumSignalProcessing2024,luQuantumSignalProcessing2024}. We conclude this section by introducing multivariate QSP and its caveats in~\ref{sec:mqsp-intro}.}

In particular, we start with univariate quantum signal processing over $SU(2)$, and then extend this over arbitrary dimensions and multiple variables~\cite{rossiMultivariableQuantumSignal2022,moriCommentMultivariableQuantum2024,nemethVariantsMultivariateQuantum2023,
laneveMultivariatePolynomialsAchievable2025,gomesMultivariableQSPBosonic2024}, which will be relevant for the rest of this work.

\subsection{Univariate QSP and the layer stripping argument}
\label{sec:qsp/univariate-qsp-layer-stripping}
{In quantum signal processing over $SU(2)$ we have a complex number $z \in \T$ (the signal), embedded in a generic single-qubit unitary of the form
\begin{align*}
    \Tilde{w} = \begin{bmatrix}
        z & \\
        & 1
    \end{bmatrix}
\end{align*}
which we call \emph{signal operator}. We then choose a sequence $A = (A_0, A_1, \ldots, A_n)$ of $n+1$ single-qubit unitaries $A_k \in SU(2)$ (the \emph{processing operators}) and carry out the matrix multiplication
\begin{align*}
    \calP_A(z) := A_n \Tilde{w} A_{n-1} \Tilde{w} \cdots \Tilde{w} A_1 \Tilde{w} A_0
\end{align*}
We say that $\calP_A(z)$ is a \emph{QSP protocol} of length $n$. One can see that if we start from some fixed quantum state, e.g., $\ket{0}$, then
\begin{align*}
    \calP_A(z) \ket{0} = P(z) \ket{0} + Q(z) \ket{1}
\end{align*}
where the amplitudes $P(z), Q(z)$ are polynomials in $z$ of degree $\le n$ satisfying the normalization condition $|P(z)|^2 + |Q(z)|^2 = 1$ for $z \in \T$, by construction. The polynomials are determined by the $A_k$'s we choose.}

A natural question arises: given normalized $(P(z), Q(z))$ of degree $n$, can we \emph{always} construct such pair with a QSP protocol of length $\le n$? The answer to this question was proven to be positive.
\begin{theorem}[\cite{motlaghGeneralizedQuantumSignal2024}]
    \label{thm:gqsp-theorem}
    For any $P(z), Q(z) \in \C[z]$ of degree $\le n$ with $|P(z)|^2 + |Q(z)|^2 = 1$ on $\T$, a sequence $A = (A_0, A_1, \ldots, A_n)$ with $A_k \in SU(2)$ exists such that
    \begin{align*}
        \calP_A(z) \ket{0} = P(z) \ket{0} + Q(z) \ket{1} \ .
    \end{align*}
\end{theorem}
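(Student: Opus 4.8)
The plan is to prove Theorem~\ref{thm:gqsp-theorem} by induction on $n$, peeling off one processing operator and one signal operator at a time. The base case $n = 0$ is immediate: if $P, Q$ are constants with $|P|^2 + |Q|^2 = 1$, then the vector $(P, Q)^\top$ is a unit vector in $\C^2$, and there is a unique $A_0 \in SU(2)$ mapping $\ket{0}$ to $P\ket{0} + Q\ket{1}$. For the inductive step, suppose the claim holds for degree $\le n-1$. Given $(P, Q)$ of degree $\le n$ with $|P(z)|^2 + |Q(z)|^2 = 1$ on $\T$, I want to find $A_n \in SU(2)$ and polynomials $(P', Q')$ of degree $\le n-1$ with $|P'|^2 + |Q'|^2 = 1$ such that
\begin{align*}
    P(z)\ket{0} + Q(z)\ket{1} = A_n\, \Tilde{w}\, \big( P'(z)\ket{0} + Q'(z)\ket{1} \big),
\end{align*}
i.e., $A_n^{-1}(P\ket 0 + Q\ket 1) = \Tilde w (P'\ket 0 + Q'\ket 1) = z P'(z)\ket 0 + Q'(z)\ket 1$. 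Writing $A_n^{-1} = \begin{pmatrix} \bar a & \bar b \\ -b & a \end{pmatrix}$ with $|a|^2 + |b|^2 = 1$, the requirement becomes that $\bar a P + \bar b Q$ is divisible by $z$ (it should equal $zP'$), while $-bP + aQ = Q'$ has degree $\le n-1$.

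The key step is choosing $(a,b)$ so that both degree-reduction conditions hold simultaneously. Look at the top-degree coefficients: let $p_n, q_n$ be the coefficients of $z^n$ in $P, Q$ (at least one nonzero), and $p_0, q_0$ the constant terms. The condition that $-bP + aQ$ has degree $\le n-1$ says $-b p_n + a q_n = 0$; the condition that $\bar a P + \bar b Q$ has zero constant term says $\bar a p_0 + \bar b q_0 = 0$. The crucial observation is that the normalization $|P|^2 + |Q|^2 = 1$ on $\T$, compared degree by degree in the Laurent expansion of $P(z)\ols{P(1/\bar z)} + Q(z)\ols{Q(1/\bar z)}$, forces the coefficient of $z^n$ (and of $z^{-n}$) to vanish, which gives $p_n \bar q_0 + q_n(-\bar p_0) $-type relations — concretely, $\ols{p_0} p_n + \ols{q_0} q_n = 0$. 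This means the vectors $(p_n, q_n)$ and $(p_0, q_0)$ span the same line in $\C^2$ (or one is zero), so a single unit vector $(a,b)$ — namely one proportional to $(q_n, p_n)$ up to conjugation/signs — kills the top coefficient of $-bP+aQ$ and, automatically, the constant term of $\bar a P + \bar b Q$. One then verifies $(P', Q') := (z^{-1}(\bar a P + \bar b Q),\, -bP + aQ)$ are genuine polynomials of degree $\le n-1$ satisfying $|P'|^2 + |Q'|^2 = 1$ (this last is automatic since $A_n \Tilde w$ is unitary on $\T$), and applies the induction hypothesis.

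The main obstacle I anticipate is the edge case where the leading or trailing coefficient vector degenerates: if $p_n = q_n = 0$ then $P, Q$ actually have degree $< n$ and we can pad with a trivial $A_n = \id$ (but then we must still be careful that the induction is on the true degree, or allow lower-degree inputs throughout); similarly if $p_0 = q_0 = 0$ then both $P$ and $Q$ are divisible by $z$, and one should factor out $z$ up front. Handling these degeneracies cleanly — and making sure the chosen $(a,b)$ is a legitimate element of $SU(2)$ in every case — is where the bookkeeping lives. The other point requiring a little care is establishing the Laurent-coefficient identity $\ols{p_0}p_n + \ols{q_0}q_n = 0$ rigorously from $|P|^2 + |Q|^2 \equiv 1$ on $\T$; this follows by writing the condition as a polynomial identity in $z$ and $1/z$ and reading off the extreme coefficients, but it is the linchpin of the argument and deserves explicit justification. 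Everything else — unitarity of $A_n \Tilde w$ propagating the normalization condition, and the degree count — is routine.
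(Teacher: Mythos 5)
Your proposal is correct and is essentially the paper's own layer-stripping induction: the identity $\overline{p_0}\,p_n+\overline{q_0}\,q_n=0$ that you read off the extreme Laurent coefficients of $|P|^2+|Q|^2\equiv 1$ is exactly the paper's orthogonality condition $\langle\tau_0,\tau_n\rangle=0$ between the endpoint coefficient vectors, and your $A_n$ is exactly the rotation sending $\tau_n$ to $\propto\ket{0}$ and $\tau_0$ to $\propto\ket{1}$. Two small wording slips worth fixing: that identity says $(p_0,q_0)$ and $(p_n,q_n)$ are \emph{orthogonal} in $\C^2$, not that they span the same line (orthogonality is precisely why one unit vector $(a,b)\propto(p_n,q_n)$ kills the top coefficient of $-bP+aQ$ and, automatically, the constant term of $\bar aP+\bar bQ$), and in the degenerate case $p_n=q_n=0$ the right move is not $A_n=\id$ (which would require $P$ to already be divisible by $z$) but to choose $A_n$ rotating $(p_0,q_0)$ onto $\ket{1}$, the top-coefficient condition then being vacuous.
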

\noindent The proof of this result is usually by induction, and is known as the \emph{layer stripping argument}.
\begin{proof}
    Let $\tau(z) = P(z) \ket{0} + Q(z) \ket{1}$ and write it down as a polynomial state
    \begin{align*}
        \tau(z) = \sum_{k = 0}^n \tau_k z^k \ .
    \end{align*}
    For $n = 0$ the claim is trivial, as $\tau(z)$ would be a constant quantum state, which can be written as $A_0 \ket{0}$, by taking $\tau$ to be the first column of $A_0$.

    For $n > 0$ the normalization condition $\langle \tau(z), \tau(z) \rangle = 1$ must hold as a polynomial equation, {whose $n$-th degree term} yields
    \begin{align}
        \langle \tau_0, \tau_n \rangle = 0 \ , \label{eq:gqsp-coeff-orthogonality}
    \end{align}
    i.e., the {two (vector) coefficients} of $\tau(z)$ at the two endpoints must be orthogonal. We can thus choose a unitary $A_n$ such that
    \begin{align*}
        A_n^\dag \tau_n & \propto \ket{0} \\
        A_n^\dag \tau_0 & \propto \ket{1}
    \end{align*}
    and, {since phases we put on $\ket{0}, \ket{1}$ don't matter, we can choose $A_n$ to have determinant $1$ so that $A_n \in SU(2)$}. This implies that with $A_n^\dag \tau(z) := P'(z) \ket{0} + Q'(z) \ket{1}$ we have that $P'$ has only degrees in $\{ 1, \ldots, n \}$, while $Q'$ has only degrees in $\{ 0, \ldots, n-1 \}$. Therefore applying $\Tilde{w}^\dag$ will realign the coefficients in the same range $\{ 0, \ldots, n-1 \}$. Thus we stripped a layer, obtaining a polynomial vector $\tau'(z) := \Tilde{w}^\dag A_n^\dag \tau(z)$ of degree $n-1$, concluding the proof with an application of the induction hypothesis.
\end{proof}
The one presented is a specific variant of QSP, called \emph{generalized QSP} (or GQSP)~\cite{motlaghGeneralizedQuantumSignal2024}. Sometimes, we call this \emph{analytic QSP} by the shape of the signal operator, which results in $P(z), Q(z)$ being analytic (i.e., not Laurent) polynomials. This variant of QSP is perfectly equivalent to other variants of QSP found in the literature~\cite{martynGrandUnificationQuantum2021,haahProductDecompositionPeriodic2019} (see~\cite{laneveGeneralizedQuantumSignal2025} for a more comprehensive treatment of these equivalences), which in turn can be lifted to the celebrated \emph{quantum singular value transformation} (or QSVT)~\cite{gilyenQuantumSingularValue2019}. Also GQSP can be directly lifted to an improved version of the QSVT, called \emph{generalized QSVT}~\cite{sunderhaufGeneralizedQuantumSingular2023}, which lead to more expressiveness for the polynomial transformations of the singular values of the block-encoded matrices.

\subsection{Fej{\'e}r-Riesz factorization and the completion problem}
Usually we are interested in one polynomial (say $P$), which would correspond to the desired transformation we want to carry out on the eigenvalues or singular values of given block-encoded matrices. {We then have to find a \emph{complementary polynomial} $Q$ such that $|P|^2 + |Q|^2 = 1$ on $\T$ so that we can apply Theorem~\ref{thm:gqsp-theorem}: the following theorem guarantees the existence of a single polynomial.}
\begin{theorem}[Fej{\'e}r-Riesz~\cite{hussenFejerRieszTheoremIts2021}]
    Let $P(z)$ be a $n$-degree polynomial satisfying $|P(z)| \le 1$ on $\T$. There exists a $n$-degree complementary $Q(z)$ such that $|P(z)|^2 + |Q(z)|^2 = 1$.
\end{theorem}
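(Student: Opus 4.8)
The plan is to prove the classical Fej\'er--Riesz theorem by the standard root-grouping argument; I only sketch it here. First I would reduce the claim to a statement about a single Laurent polynomial. Set $f(z) = 1 - |P(z)|^2$. For $z \in \T$ we have $\overline{z} = z^{-1}$, so $f$ is the restriction to $\T$ of the Laurent polynomial $\sum_{k=-n}^{n} f_k z^k$ with Hermitian symmetry $f_{-k} = \overline{f_k}$; hence $f$ is real on $\T$, and by hypothesis $f \ge 0$ there. It then suffices to produce a polynomial $Q$ of degree $\le n$ with $|Q(z)|^2 = f(z)$ for every $z \in \T$, since then $|P|^2 + |Q|^2 = 1$ on $\T$, as required.

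Next I would clear denominators. Write $P^*(z) = z^n\, \overline{P(1/\overline{z})} = \sum_k \overline{p_{n-k}}\, z^k$ for the reversed-conjugate polynomial, so that $\overline{P(z)} = z^{-n} P^*(z)$ on $\T$; then $S(z) := z^n f(z) = z^n - P(z) P^*(z)$ is an honest polynomial of degree $\le 2n$. Two properties drive the argument: (i) $S$ is \emph{self-reciprocal}, $z^{2n}\,\overline{S(1/\overline{z})} = S(z)$, because both $z^n$ and $P P^*$ are fixed by this involution (using $(P^*)^* = P$); and (ii) $z^{-n} S(z) = f(z) \ge 0$ on $\T$. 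Property (i) forces the zeros of $S$ in $\C \cup \{\infty\}$ to come in conjugate-reciprocal pairs $\{a, 1/\overline{a}\}$, with $0$ paired to $\infty$ when $\deg S < 2n$, giving $n$ such pairs counted with multiplicity; property (ii) forces every zero of $S$ lying on $\T$ to have even multiplicity, since a nonnegative real function attains a minimum at such a point.

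I would then define $Q(z)$ by selecting, from each conjugate-reciprocal pair, the zero in the closed unit disk --- a zero on $\T$ taken with half its multiplicity, a zero at $\infty$ simply lowering the degree --- multiplied by a constant $\lambda$. This $Q$ has degree $\le n$, and $Q(z) Q^*(z)$ equals $|\lambda|^2$ times a fixed monic self-reciprocal product, which matches $S(z)$ once $|\lambda|^2$ is chosen to align leading coefficients. The point is that this $|\lambda|^2$ is a genuinely positive real: comparing a coefficient of $S$ with the value of $f \ge 0$ at a point of $\T$ pins down its sign, so $\lambda$ can be taken real. Dividing by $z^n$ and restricting to $\T$ yields $|Q(z)|^2 = z^{-n} S(z) = f(z)$, completing the proof.

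The main obstacle is property (ii) together with the positivity of $|\lambda|^2$: one must argue carefully that nonnegativity of $f$ on the circle makes every circle zero of $S$ of even order --- so that ``half the multiplicity'' is meaningful --- and that the leading-coefficient bookkeeping never puts a negative number under the square root. Everything else (the reduction, the self-reciprocity identity, and the pairing of off-circle zeros) is routine polynomial algebra. An alternative is an induction on $n$ that peels off one conjugate-reciprocal zero pair of $S$ at a time, in the spirit of the layer-stripping argument above, but the even-multiplicity subtlety reappears there inside the inductive step.
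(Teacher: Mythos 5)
The paper does not prove this statement; it is quoted as a classical result with a citation, so there is no in-paper argument to compare against. Your sketch is the standard root-grouping proof of the Fej\'er--Riesz theorem and is correct in outline: the reduction to $S(z)=z^n-P(z)P^*(z)$, the conjugate-reciprocal pairing of zeros forced by self-reciprocity, the even multiplicity of circle zeros forced by nonnegativity of $f=z^{-n}S$ on $\T$, and the selection of one zero from each pair are exactly the ingredients of the textbook proof, and you correctly flag the two points (even multiplicity, sign of the normalizing constant) that require actual analysis rather than bookkeeping. Two minor additions would make it airtight: handle the degenerate case $|P|\equiv 1$ on $\T$ (take $Q=0$), and note that the order of vanishing of $f$ along the circle at a point $a\in\T$ equals the multiplicity of $a$ as a zero of $S$, which is what legitimizes the parity argument. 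It is worth observing that the freedom in your construction --- choosing, for each off-circle pair, which of the two zeros to assign to $Q$ --- is precisely what the paper describes immediately after the theorem as the $2^n$ complementary polynomials related by Blaschke factors, so your proof recovers that structural remark for free.
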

Such complementary $Q$ is unique up to a phase (which we neglect), and a transformation $\alpha \mapsto 1/\alpha^*$ of {a subset $S$ (containing possible repetitions) of its roots}, which preserves $|Q|^2$. Thus, a given $n$-degree $P$ has $2^n$ possible complementary polynomials\footnote{If some root appears multiple times, then some of these $2^n$ polynomials are actually equal.}
\begin{align*}
    {Q_S(z) = Q_{\emptyset}(z) \prod_{\alpha \in S} \frac{\alpha}{|\alpha|} \frac{\alpha^* z - 1}{z - \alpha}}
\end{align*}
where $Q_{\emptyset}$ is the unique \emph{outer} complementary polynomial, i.e., the one with all the roots outside $\T$, and the product is the \emph{Blaschke product} of the roots {of $Q_\emptyset$} we want to move from outside to inside $\T$ (here $S$ is the multiset of roots we want to move). {Note that $Q_S(z)$ is always a polynomial: if $\alpha \in S$, then $\alpha$ is a root of $Q_{\emptyset}$, and thus $(z - \alpha)$ is one of its factors. The Blaschke product replaces this factor with $(1 - \alpha^* z)$, effectively doing the $\alpha \mapsto 1/\alpha^*$ transformation.} Moreover, the Blaschke product has unit modulus on $\T$, which guarantees $|Q_S| = |Q_{\emptyset}|$ on $\T$, i.e., $Q_S$ is also a valid complementary polynomial.

{\begin{corollary}
    \label{thm:gqsp-completeness}
    Given any $P(z) \in \C[z]$ of degree $\le n$ satisfying $|P(z)| \le 1$ on $\T$, there exists $A = (A_0, \ldots, A_n)$ with $A_k \in SU(2)$ such that
    \begin{align*}
        \bra{0} \calP_A(z) \ket{0} = P(z) \ .
    \end{align*}
\end{corollary}}

Methods to numerically estimate a complementary $Q$ from $P$ are known~\cite{haahProductDecompositionPeriodic2019,motlaghGeneralizedQuantumSignal2024,berntsonComplementaryPolynomialsQuantum2025,alexisInfiniteQuantumSignal2024}, and constitute a crucial part in the numerical computation of a QSP protocol for a desired $P$. The fact that any $P$ admits a single complementary polynomial allows any polynomial to be embedded into $SU(2)$ matrices, and thus decomposable into a $SU(2)$-QSP protocol.

\subsection{Extension to more than two dimensions}
\label{sec:qsp-over-sun}

{The above ansatz can be generalized from $SU(2)$ to $SU(r+1)$ with $r > 1$. We have a signal operator
\begin{align}
    \Tilde{w}^{(r)} := \diag(z, \underbrace{1, \ldots, 1}_r) \label{eq:qsp-sun-signal-operator}
\end{align}
and we intertwine this with a sequence $A = (A_0, \ldots, A_n)$ with $A_k \in SU(r+1)$.
\begin{align*}
    \calP_A(z) = A_n \Tilde{w}^{(r)} \cdots \Tilde{w}^{(r)} A_0 \ .
\end{align*}
By a similar layer stripping argument, one can obtain a result analogous to Theorem~\ref{thm:gqsp-theorem}.
\begin{theorem}[\cite{laneveQuantumSignalProcessing2024}]
    \label{thm:qsp-sun-theorem}
    Any polynomial state $\tau(z) = P_k(z) \ket{0} \sum_{k=1}^r Q_k(z) \ket{k}$ of degree $\le n$ admits a sequence $A = (A_0, \ldots, A_n)$ with $A_k \in SU(N)$ such that
    $$\calP_A(z) \ket{0} = \tau(z) \ .$$
\end{theorem}}
This statement has the same proof as Theorem~\ref{thm:gqsp-theorem}, with the slight difference that we have a larger space of valid $A_n$ we can choose from. Besides that, it is sufficient to show that $\tau_0$ gets multiplied by $1$ and $\tau_n$ {gets} multiplied by $z^{-1}$ upon application of $(\Tilde{w}^{(r)})^\dag A_n^\dag$ in order to obtain a degree reduction. Thus, the actual number of ones and $z$'s in~(\ref{eq:qsp-sun-signal-operator}) does not really change the expressiveness of the ansatz, as long as there is at least one occurrence of each along the diagonal.

The fact that Corollary~\ref{thm:gqsp-completeness} ensures that any polynomial can be decomposed even on a single qubit limits the usefulness of $SU(r+1)$-QSP in the univariate case, we will see that considering more than two dimensions can be of help when introducing multiple variables.

\subsection{Multivariate QSP}
\label{sec:mqsp-intro}
In a multivariate QSP (M-QSP) setting we have a vector of $m > 1$ signals $\vv{z} := (z_1, \ldots, z_m) \in \T^m$ we would like to polynomially transform. There are different proposals across the literature which differ mainly on the shape of the signal operator. {The first one, proposed in~\cite{rossiMultivariableQuantumSignal2022}, is a protocol over $SU(2)$ where we define $m$ different signal operators $\Tilde{w}_j = \diag(z_j, 1)$\footnote{Our exposition is slightly different than the one in~\cite{rossiMultivariableQuantumSignal2022}, since we are using the \emph{analytic picture} $\diag(z, 1)$, as opposed to their \emph{Laurent} formulation $\diag(z, z^{-1})$. The two ans{\"a}tze, however, are equivalent, see~\cite{laneveGeneralizedQuantumSignal2025}.}. Given a sequence $A = (A_0, \ldots, A_n)$ with $A_k \in SU(2)$, and a string $s = (s_1, \ldots, s_n)$ with $s \in \{ 1, \ldots, m \}$, the M-QSP protocol is defined as follows:
\begin{align}
    \calP_{A, s}(\vv{z}) = A_n \Tilde{w}_{s_n} A_{n-1} \Tilde{w}_{s_{n-1}} \cdots \Tilde{w}_{s_1} A_0 \label{eq:rossi-chuang-qsp}
\end{align}
i.e., at each step we are allowed to choose one variable $z_j$ to call, and the result will be a pair of $m$-variate polynomials.
\begin{align*}
    \calP_{A,s}(\vv{z}) \ket{0} = P(\vv{z}) \ket{0} + Q(\vv{z}) \ket{1}
\end{align*}
}Other works~\cite{nemethVariantsMultivariateQuantum2023,laneveMultivariatePolynomialsAchievable2025} define protocols over $SU(m+1)$ with a single, fixed signal operator, e.g.,
\begin{align*}
    \Tilde{W} = \diag(z_1, z_2, \ldots, z_m, 1) \ .
\end{align*}
The main problem with M-QSP is that we do not have an analogue of Theorem~\ref{thm:gqsp-theorem}. The layer stripping argument does not apply, as the counterpart of~(\ref{eq:gqsp-coeff-orthogonality}) that would let us strip a layer does not follow directly from the normalization condition, and thus we would need additional conditions that guarantee the success of the induction step.

As a consequence, attempts to define an ansatz for QSP over multiple variables retaining the single-qubit structure fail to give the strong guarantees {of Corollary~\ref{thm:gqsp-completeness}}, namely that any polynomial $P(\vv{z})$ satisfying $|P(\vv{z})| \le 1$ can be constructed: it was proven in~\cite{nemethVariantsMultivariateQuantum2023} that there are some normalized pairs $(P, Q)$ that are not decomposable {into a M-QSP protocol over $SU(2)$}.

Even a multivariate counterpart of the Fej{\'e}r-Riesz theorem has its own caveats: necessary and sufficient conditions for a given bivariate $P(\vv{z})$ to have a single complementary $Q$ are known, but are quite restrictive~\cite{geronimoPositiveExtensionsFejerRiesz2004,hussenFejerRieszTheoremIts2021,rossiMultivariableQuantumSignal2022}. The reason mainly stems from the fact that the univariate Fej{\'e}r-Riesz theorem heavily relies on the fundamental theorem of algebra. However, we can obtain a reasonable version of the statement if we allow more than two dimensions.
\begin{theorem}[Multivariate Fej{\'e}r-Riesz, adapted from~\cite{dritschelFactorizationTrigonometricPolynomials2004}]
    \label{thm:fejer-riesz-multivariate}
    Let $P(\vv{z})$ be a $m$-variate polynomial satisfying $|P(\vv{z})| < 1$ on $\T^m$. Then {for some $r > 0$} there exist $Q_1, \ldots, Q_r$ such that
    \begin{align*}
        |P(\vv{z})|^2 + \sum_{k = 1}^r |Q_k(\vv{z})|^2 = 1 \text{\ \ \ for every $\vv{z} \in \T^m$.}
    \end{align*}
\end{theorem}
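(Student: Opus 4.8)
The plan is to reduce the claim to a statement about positive trigonometric polynomials and then invoke an operator-valued (or sum-of-squares) factorization result. First I would observe that since $P(\vv{z})$ is a polynomial with $|P(\vv{z})| < 1$ on the compact torus $\T^m$, the function $1 - |P(\vv{z})|^2$ is a trigonometric polynomial that is strictly positive on $\T^m$. Writing $P(\vv{z}) = \sum_{\vv{k}} p_{\vv{k}} \vv{z}^{\vv{k}}$ with finite support, $|P(\vv{z})|^2 = P(\vv{z})\overline{P(\vv{z})}$ expands (on $\T^m$, where $\bar z_j = z_j^{-1}$) into a Laurent trigonometric polynomial with Hermitian-symmetric coefficients, so $f(\vv{z}) := 1 - |P(\vv{z})|^2$ is a strictly positive Hermitian trigonometric polynomial. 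The target $\sum_k |Q_k(\vv{z})|^2 = f(\vv{z})$ is then exactly asking for a finite sum-of-Hermitian-squares decomposition of $f$.

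Next I would cite the relevant factorization theorem. In the univariate case this is classical Fej\'er-Riesz and a single square suffices; in several variables a single square generally does \emph{not} suffice (this is precisely the obstruction behind the multivariate M-QSP difficulties noted just above), but a theorem of Dritschel~\cite{dritschelFactorizationTrigonometricPolynomials2004} — with the matrix-valued and polynomial-degree refinements of Geronimo-Woerdeman~\cite{geronimoFactorizationMultivariatePositive2006} — guarantees that any trigonometric polynomial that is strictly positive on $\T^m$ admits a representation as a finite sum $\sum_{k=1}^r \overline{Q_k(\vv{z})} Q_k(\vv{z})$ for polynomials $Q_k$ (of degree controlled by that of $P$). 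Applying this with $f = 1 - |P|^2$ yields the $Q_1, \dots, Q_r$ directly, and on $\T^m$ we have $\overline{Q_k(\vv{z})} Q_k(\vv{z}) = |Q_k(\vv{z})|^2$, so $|P(\vv{z})|^2 + \sum_{k=1}^r |Q_k(\vv{z})|^2 = 1$ on $\T^m$, as required. I would also remark that strict positivity (hence the strict inequality $|P| < 1$ in the hypothesis, rather than $|P| \le 1$) is exactly what Dritschel's theorem needs: at the boundary case the sum-of-squares representation can fail, which is why the statement is phrased with a strict bound.

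The main obstacle is simply that there is no elementary self-contained proof: unlike the univariate Fej\'er-Riesz theorem, which follows from the fundamental theorem of algebra by pairing reciprocal roots, the multivariate version genuinely requires the machinery of Dritschel's proof (an Agler-type functional-analytic / completely-positive-map argument, or alternatively Schm\"udgen-style Positivstellensatz methods on the torus). So the ``proof'' here is essentially a careful translation: (i) rewrite the normalization condition as strict positivity of a Hermitian trigonometric polynomial, (ii) match the sum-of-squares form to the cited theorem's conclusion, and (iii) check the degree/bookkeeping so that the $Q_k$ are honest polynomials (not Laurent) of the expected degree — the last point being where the Geronimo-Woerdeman refinement is invoked rather than the bare Dritschel statement. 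No step requires a long computation; the content is entirely in correctly packaging the hypothesis for the black-box factorization theorem.
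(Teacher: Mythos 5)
Your proposal matches the paper's treatment: Theorem~\ref{thm:fejer-riesz-multivariate} is stated as ``adapted from'' Dritschel and Geronimo--Woerdeman and is not proved in the paper, and your reduction --- observing that $1-|P(\vv{z})|^2$ is a strictly positive (Hermitian) trigonometric polynomial on $\T^m$ and invoking the sum-of-Hermitian-squares factorization for strictly positive multivariate trigonometric polynomials --- is exactly the intended packaging, including the correct remark that strict positivity is why the hypothesis is $|P|<1$ rather than $|P|\le 1$. One small correction: your parenthetical claim that the $Q_k$ have degree controlled by that of $P$ overstates what the cited results deliver; the paper itself later stresses (when introducing $\calQ^{\ell}_{P}$) that this theorem does \emph{not} guarantee an upper bound on the degrees of the complementary polynomials, though this does not affect the validity of the statement as written.
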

{We remark that we do not have a clear upper bound on $r$ in terms of $m$ or the degrees of the polynomials, we only know that it is finite.} Also notice that we require $|P(\vv{z})| < 1$ strictly, which is a weaker statement than the univariate counterpart. It was shown in~\cite{rudinExtensionProblemPositivedefinite1963} that there are polynomials with unit supremum norm on $\T^m$ that do not admit a Fej{\'e}r-Riesz factorization at all.

If we want some $P$, however, it might still be possible to find a full M-QSP decomposition using different complementary polynomials, perhaps over more dimensions. Theorem~\ref{thm:fejer-riesz-multivariate} thus motivates the exploration of M-QSP protocols over more than two dimensions and, for the purpose of this work, we define a more general ansatz for M-QSP over $SU(r+1)$: {we choose a sequence $A = (A_0, A_1, \ldots, A_n)$ with $A_k \in SU(r+1)$ as usual, and for the diagonal of the signal operator we choose $r$ symbols (possibly with repetition) from $\{ z_1, \ldots, z_m \}$. We define $\Tilde{w}^{(r)}_S$ to be the diagonal containing the elements of the multiset $S \subseteq \{ z_1, \ldots, z_m \}$ (in some fixed order, possibly padded with ones), and we define
\begin{align*}
    \calP_{A, S}(\vv{z}) = A_n \Tilde{w}^{(r)}_{S_n} A_{n-1} \Tilde{w}^{(r)}_{S_{n-1}} \cdots \Tilde{w}^{(r)}_{S_0} A_0
\end{align*}
for a set $S = (S_1, \ldots, S_n)$ of multisets.} An example of choices in a $SU(3)$ protocol with $m = 2$ variables would be
\begin{align*}
    A_3
    \begin{bmatrix}
        z_1 & & \\
        & z_1 & \\
        & & z_2
    \end{bmatrix}
    A_2
    \begin{bmatrix}
        z_2 & & \\
        & 1 & \\
        & & 1
    \end{bmatrix}
    A_1
    \begin{bmatrix}
        z_2 & & \\
        & z_2 & \\
        & & 1
    \end{bmatrix}
    A_0
\end{align*}
{by taking $S_1 = \{ z_2, z_2 \}, S_2 = \{ z_2 \}, S_3 = \{ z_1, z_1, z_2 \}$.}
The order of the symbols along the diagonal does not matter, since the $A_k$ can always be tweaked to swap the elements of the diagonals. Clearly, all the aforementioned constructions are special cases of this ansatz.

\section{Quantum signal processing as a state conversion problem}
\label{sec:state-conversion-l2}

{This section will merge the theories reviewed in the previous two sections. We start by introducing the $L^2$ space and linear operators on it (Appendix~\ref{apx:trace-class} gives a quick introduction to the elements of operator theory we need). \ref{sec:gamma-bounds-l2} then proceeds to redefine the $\gamma_2$-bound and the adversary bound on this space. We then show in~\ref{sec:poly-state-conversion} further properties of the feasible space of the adversary bound when state conversion is between polynomials. Then~\ref{sec:qsp-as-state-conversion} will connect QSP and show one of the main technical results of the paper: a one-to-one correspondence between the feasible solutions of the adversary bound and the $SU(2)$-QSP protocols. \ref{sec:qsp-sun-as-state-conversion} extends this bijection to protocols over more than two dimensions, and explains the role of partial state conversion in also finding complementary polynomial(s) for QSP. We conclude with~\ref{sec:mqsp-as-state-conversion} by extending this formalism to M-QSP, discussing the implications.}

State conversion was always studied across the literature over some finite set of labels $D$~\cite{leeQuantumQueryComplexity2011,belovsOneWayTicketVegas2023}. As we reported in Section~\ref{sec:finite-state-conversion}, the theory of quantum query complexity led to the definition of the $\gamma_2$-bounds, which are convex optimizations over {linear operators acting on} the Hilbert space $\C^{|D|}$. One can equivalently think of $\xi, \tau$ as functions $D \rightarrow \C^K$, for some fixed dimension $K$, and this allows us to take one step further: {QSP as presented in the previous section looks like a quantum query algorithm, with the only exception that the oracle $\Tilde{w}$ is determined by the signal $z$ or $\vv{z}$, which is a continuous variable}.

Thus, we can consider the set of labels $D = \T^m$, i.e., a label $\vv{z} \in D$ is a tuple $(z_1, \ldots, z_m)$ where $|z_k| = 1$ for every $k$. We then consider $\xi, \tau$ to be vectors of functions in $L^{\infty}(\T^m)$.
\begin{align*}
    \xi, \tau : \T^m & \mapsto \C^K \\
    \vv{z} & \mapsto \xi(\vv{z}), \tau(\vv{z})
\end{align*}
For the rest of the work we will use the notation $\xi(\vv{z})$ instead of $\xi_{\vv{z}}$ to highlight that we are dealing with functions of continuous variable. We highlight that, although we now have an infinite set of labels $D$, the states are still in $\C^K$, a finite-dimensional space. {We will need to dive into the theory of linear operators over $L^2(\T^m)$, the Hilbert space of square-integrable functions on $\T^m$, for which the unfamiliar reader can find a quick and self-contained introduction in Appendix~\ref{apx:trace-class}.}

We consider the usual inner product on $L^2(\T^m)$ with respect to the normalized Lebesgue measure $\mu$ on $\T^m$\footnote{$L^\infty(\T^m) \subseteq L^2(\T^m)$ since $\T^m$ has finite measure {(note that the opposite inclusion is not true)}.}.
\begin{align*}
    \langle f, g \rangle =&\ \int_{\T^m} \bar{f} g \ \dd\mu \\
    := &\ \frac{1}{(2\pi)^m} \int_0^{2\pi} \cdots \int_0^{2\pi} \overline{f(e^{i\theta_1}, \ldots, e^{\theta_m})} g(e^{i\theta_1}, \ldots, e^{\theta_m}) \dd{\theta_1} \cdots \dd{\theta_m} \ .
\end{align*}
{A complete orthonormal basis for $L^2(\T^m)$ is the Fourier basis $\{ t_{\vv{k}} \}_{\vv{k} \in \Z^m}$ with:
\begin{align*}
    t_{\vv{k}}(\vv{z}) = z_1^{k_1} z_2^{k_2} \cdots z_m^{k_m}
\end{align*}
of which we will make extensive use.} If $f$ is a vector of $K$ square-integrable functions, then it can be seen as a point in the space $$L^2(\T^m)^K \simeq L^2(\T^m) \otimes \C^K =: L^2(\T^m, \C^K)$$ which is a Hilbert space\footnote{{The isomorphism holds since $(f_1, \ldots, f_K) = \sum_{j = 0}^{K-1} f_j \otimes \ket{j}$, true for any separable Hilbert space $\calH$.}}. More generally, we denote with $L^2(\T^m, \calH)$ the space of square-integrable functions with values in the Hilbert space $\calH$.

Given a linear operator $A$ on $L^2(\T^m)$, its \emph{integral kernel} $a(\vv{x}, \vv{y})$ is a distribution\footnote{By considering integral kernels in the distributional sense, every linear operator can be seen as an integral operator, as in standard quantum mechanics. {A distribution that is not a function is, e.g., the Dirac delta $\delta(x)$.}} satisfying
\begin{align*}
    \langle f, A g \rangle = \iint_{(\T^m)^2} \overline{f(\vv{x})} a(\vv{x}, \vv{y}) g(\vv{y}) \ \dd\mu(\vv{x}) \ \dd\mu(\vv{y})
\end{align*}
for every $f, g \in L^2(\T^m)$. This also applies to block operators, where $A$ is an operator on $L^2(\T^m, \C^K)$ and $a(\vv{x}, \vv{y})$ is a matrix on $\C^K$.

\subsection{The $\gamma_2$-bounds over $L^2(\T^m, \C^K)$}
\label{sec:gamma-bounds-l2}

We are interested in the conversion $\xi(\vv{z}) \mapsto \tau(\vv{z})$, given access to an oracle $O(\vv{z}) : \T^m \rightarrow U(\calM)$. All the definitions for state conversion over finite labels given in the Section~\ref{sec:finite-state-conversion} are also valid in this setting. Our goal now is to define the $\gamma_2$-bounds over $L^2(\T^m, \C^K)$, preserving as many properties as possible.

\begin{definition}[Unidirectional relative $\gamma_2$-bound over $L^\infty(\T^m)$]
    \label{def:unidir-gamma-bound-l2}
    Fix a linear operator $E$ on $L^2(\T^m)$ with $e(\vv{x}, \vv{y})$ being its integral kernel and a Hermitian bounded block operator $\Delta$ with continuous kernel $\Delta(\vv{x}, \vv{y}): \calM \rightarrow \calM$ for some finite-dimensional Hilbert space $\calM$. Define $\gammatwo(E \,|\, \Delta)$ as the optimal value of the following optimization problem:
    \begin{align}
        \inf \ \ \ \ & \sup_{\vv{z} \in \T^m} \ \lVert v(\vv{z}) \rVert^2\nonumber \\
        \text{subject to} \ \ \ \ & e(\vv{x}, \vv{y}) \succeq \langle v(\vv{x}), (\Delta(\vv{x}, \vv{y}) \otimes \id_{\calW}) v(\vv{y}) \rangle & \text{ for every $\vv{x}, \vv{y} \in \T^m$} \label{eq:unidir-gamma-bound-constraint-partial-l2} \\
        & \text{$\calW$ is a Hilbert space, $v(\vv{z}) \in L^2(\T^m, \calM \otimes \calW)$}\nonumber
    \end{align}
\end{definition}
\noindent This definition is the straightforward generalization of Definition~\ref{def:unidir-gamma-bound-partial}: {similar to the finite case, we use the semidefinite constraint as in~(\ref{eq:unidir-gamma-bound-constraint-partial-l2}) to implicitly mean that $\succeq$ is on the operators with those functions as kernels.}
\begin{definition}
    \label{def:adversary-bound-l2}
    The \emph{adversary bound} for $\xi(\vv{z}) \mapsto \tau(\vv{z})$ state conversion is defined as
    \begin{align*}
        \gammatwo\qty( \langle \xi(\vv{x}), \xi(\vv{y}) \rangle - \langle \tau(\vv{x}), \tau(\vv{y}) \rangle \,\bigg|\, \id - O(\vv{x})^* O(\vv{y}) ) \ .
    \end{align*}
\end{definition}
There are some subtleties, however. First of all, the objective function takes the supremum instead of the maximum (it should take the essential supremum, but we can assume they coincide, as otherwise the catalyst would be sub-optimal). Moreover, this is not a semidefinite program anymore in the traditional sense, and thus some results from classical optimization theory might not hold. In particular, $\calW$ is not even guaranteed to be finite-dimensional, and this might be a problem when we want to embed $v(\vv{z})$ in finitely many qubits. More specifically, the Gram operators $G_{\xi}, G_{\tau}$ have finite rank (since they are Gram operators of functions in $\C^K$). On the other hand, the Gram operators $G_{v}, G_{Ov}$ can be proven to be trace-class (see Appendix~\ref{apx:trace-class}).

Since we are working with trace-class operators, Theorems~\ref{thm:relative-gamma-bound-weak-duality} and~\ref{thm:gamma-bound-subnorm-property}, as well as the lower bound of Theorem~\ref{thm:gamma-monte-carlo-bounds} carry over to this case. {A more formal discussion of these observations, along with proofs, is made in Appendix~\ref{apx:trace-class}.}

In other words, while it is always true that $G_v - G_{Ov}$ has finite rank, this does not imply that $G_v$ does ($G_{Ov}$, however, has finite rank if and only if $G_v$ does). The rank of $G_v$ will give the dimension of $v$, so we are interested in which cases we have finite-rank solutions.

\subsection{Polynomial state conversion}
\label{sec:poly-state-conversion}
We now restrict ourselves to the case where $\xi(\vv{z}), \tau(\vv{z})$ are finite polynomials
\begin{align*}
    \xi(\vv{z}) = \sum_{\vv{k}} \xi_{\vv{k}} \vv{z}^{\vv{k}},\ \ \ \ \tau(\vv{z}) = \sum_{\vv{k}} \tau_{\vv{k}} \vv{z}^{\vv{k}}
\end{align*}
where $\vv{z}^{\vv{k}}$ is a shorthand for $z_1^{k_1} \cdots z_m^{k_m}$. The polynomial case is a very special case, because $\xi$ and $\tau$ are supported only on finitely many elements of the Fourier basis {(i.e., only finitely many $\xi_{\vv{k}}, \tau_{\vv{k}}$ are non-zero)}, and thus we might be able to rewrite the constraint of the $\gamma_2$-bound as a finite system using the coefficients $\xi_{\vv{k}}, \tau_{\vv{k}}$.

We start with the simplest example: a single variable $z$ accessed through the oracle $O(z) = z$. In this simple case, we actually get a feasible space that is well-behaved. All the proofs can be found in Appendix~\ref{apx:catalyst-proofs-univariate}.
\begin{lemma}
    \label{thm:feasible-space-polynomial-univariate}
    If $v(z)$ is a feasible catalyst for a $\xi(z) \mapsto \tau(z)$ polynomial state conversion, where both $\xi, \tau$ have degree bounded by $n$, then $v(z)$ is a polynomial of degree at most $n - 1$.
\end{lemma}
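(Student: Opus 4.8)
My plan is to exploit the fact that for the oracle $O(z)=z$ the filter operator is extremely simple: $\Delta(x,y)=\id - O(x)^*O(y) = 1 - \bar{x}y$, a scalar kernel of rank essentially $2$ in the Fourier basis. The feasibility constraint for the (partial) adversary bound reads $G_\xi - G_\tau \succeq V^*(\Delta\otimes\id_{\calW})V$, and both sides are integral operators on $L^2(\T)$ whose kernels I can compare Fourier mode by Fourier mode. The left side $G_\xi - G_\tau$ has kernel $\sum_{j,k}(\langle\xi_j,\xi_k\rangle - \langle\tau_j,\tau_k\rangle)\bar{x}^j y^k$, which is a \emph{finite} trigonometric polynomial supported on Fourier frequencies $(-j,k)$ with $0\le j,k\le n$. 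The right side, writing $v(z)=\sum_\ell v_\ell z^\ell$, has kernel $\langle v(x),(1-\bar{x}y)v(y)\rangle = \sum_{j,k}\langle v_j,v_k\rangle \bar{x}^j y^k - \sum_{j,k}\langle v_j,v_k\rangle \bar{x}^{j+1}y^{k+1}$.

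The key step is a \emph{telescoping/positivity} argument on the residual $G_\sigma := (G_\xi - G_\tau) - V^*(\Delta\otimes\id)V \succeq 0$. I would first observe that the operator $V^*(\Delta\otimes\id)V$ written in the monomial basis has the "difference" structure above, i.e.\ its $(a,b)$ Fourier coefficient equals $\langle v_a,v_b\rangle - \langle v_{a-1},v_{b-1}\rangle$ (with the convention $v_{-1}=0$). Summing this telescoping relation along a diagonal $a-b=\text{const}$ recovers $\langle v_a, v_b\rangle$ as a partial sum. Now suppose for contradiction that $v$ has a nonzero coefficient $v_N$ with $N\ge n$. Consider the top-right corner of the constraint: the Fourier coefficient at frequency $(-(N+1), N+1)$, i.e.\ the $\bar{x}^{N+1}y^{N+1}$ term. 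Because $\xi,\tau$ have degree $\le n < N+1$, the left side contributes $0$ there; the residual $G_\sigma$ is PSD so its diagonal-type entries are controlled by its genuine diagonal (the coefficient at frequency $(-(N+1),N+1)$ is an off-by-shift entry, but PSD-ness forces a Cauchy–Schwarz bound). I'd then use the PSD structure of $G_\sigma$ together with the telescoping identity to force $\|v_N\|^2$ to appear with the wrong sign unless it is zero: more precisely, evaluating the constraint on the specific test vector $f(x)=x^{N+1}$ (or comparing the $(N+1,N+1)$ diagonal entries of both operators in the monomial basis) gives $0 \ge \|v_{N+1}\|^2 - \|v_N\|^2$ from the left being zero there, which alone isn't enough, so one iterates downward from the largest index $M$ where $v_M\ne 0$: at frequency $(-(M+1),M+1)$ the left side is $0$, the $V^*\Delta V$ part is $\langle v_{M+1},v_{M+1}\rangle - \langle v_M, v_M\rangle = -\|v_M\|^2$, so $G_\sigma$ has entry $+\|v_M\|^2 > 0$ at that off-diagonal Fourier slot while having entry $0$ on the corresponding parts of its diagonal (since $\xi,\tau,v$ all vanish at degree $M+1$ on the left-index side), contradicting PSD-ness via Cauchy–Schwarz $|(G_\sigma)_{ab}|^2 \le (G_\sigma)_{aa}(G_\sigma)_{bb}$.

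The cleanest route, which I would actually write up, is therefore: (i) pass to the monomial/Fourier basis and write all three operators as (infinite) matrices indexed by $\Z_{\ge 0}$; (ii) record that $(G_\xi - G_\tau)_{ab}=0$ whenever $a>n$ or $b>n$; (iii) compute $(V^*(\Delta\otimes\id)V)_{ab} = \langle v_a,v_b\rangle - \langle v_{a-1},v_{b-1}\rangle$; (iv) let $M$ be maximal with $v_M\ne 0$ and suppose $M\ge n$; look at the $(M{+}1,M{+}1)$ diagonal entry of $G_\sigma$, which equals $0 - (\|v_{M+1}\|^2 - \|v_M\|^2) = \|v_M\|^2 \ge 0$ — fine — but then look at entry $(M{+}1,\,j)$ for $j\le M$ and use the telescoping sum plus $G_\sigma\succeq 0$ to derive $\|v_M\|^2\le 0$. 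The \textbf{main obstacle} I anticipate is handling the analytic subtleties of working with these as operators on the infinite-dimensional $L^2(\T)$ rather than finite matrices: justifying that "comparing Fourier coefficients" is legitimate for the operator inequality $G_\sigma\succeq 0$ (it is, since testing against finitely supported trigonometric polynomials is enough, and $G_\sigma$ is trace-class as noted in the text), and making sure the telescoping identity for $V^*\Delta V$ holds rigorously when $\calW$ is possibly infinite-dimensional. Once the bookkeeping is set up, the contradiction is a short Cauchy–Schwarz argument, so I expect the proof to be quite compact, with the degree bound $n-1$ (rather than $n$) emerging precisely because the filter $\Delta$ "eats" one degree via the shift $v_{a-1}$.
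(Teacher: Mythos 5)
Your Fourier-side bookkeeping is correct and matches the paper's: the coefficient of $\bar{x}^k y^h$ in the kernel of $V^*(\Delta\otimes\id_{\calW})V$ is indeed $\langle v_k,v_h\rangle-\langle v_{k-1},v_{h-1}\rangle$, and the left-hand side vanishes whenever $k$ or $h$ lies outside $\{0,\ldots,n\}$. The end-game, however, has two genuine gaps. First, you ``let $M$ be maximal with $v_M\neq 0$'': a priori $v$ is only an $L^2$ function, so its Fourier support may be unbounded (in both directions), and the existence of such an $M$ --- i.e.\ that $v$ is a polynomial at all --- is part of what must be proven. The paper closes this with a propagation argument: for $k=h$ outside $\{0,\ldots,n\}$ the (total) constraint forces $\norm{v_k}^2=\norm{v_{k-1}}^2$, so a single nonzero coefficient at an index $\ge n$ (or $<0$) propagates to infinitely many coefficients of the same nonzero norm, contradicting $\sum_k\norm{v_k}^2<\infty$ by Bessel/Parseval. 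Your corner argument has no counterpart when the support is unbounded, and you never treat negative frequencies at all (you index by $\Z_{\ge 0}$ from the outset).

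Second, the claimed contradiction does not materialize: the Fourier slot $(-(M+1),M+1)$ is the \emph{diagonal} entry $(G_\sigma)_{M+1,M+1}$, not an off-diagonal one, and a strictly positive diagonal entry of a PSD operator contradicts nothing; the Cauchy--Schwarz step you invoke never yields $\norm{v_M}^2\le 0$. Worse, in the genuinely partial setting (arbitrary residual $G_\sigma\succeq 0$) the lemma is simply false --- the complementary state $\sigma$ may have arbitrarily large degree (e.g.\ $\xi=\ket{0}$, $\tau=\tfrac12 z\ket{0}$, $\sigma=\tfrac{\sqrt3}{2}z^{100}$), and the catalyst's degree tracks $\tau\oplus\sigma$, not $\tau$ --- so no argument can succeed at that level of generality. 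The lemma is intended for the total (equality) constraint, where the diagonal entry at $M+1$ gives $0=-\norm{v_M}^2$ outright, with no Cauchy--Schwarz needed; but that only disposes of the bounded-support case, so the propagation/divergence argument remains the essential missing ingredient.
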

\begin{proof}[Proof Sketch]
    The constraint of~(\ref{eq:unidir-gamma-bound-constraint-partial-l2}) can be rewritten in the Fourier basis as
    \begin{align}
        \langle \xi_k, \xi_h \rangle - \langle \tau_k, \tau_h \rangle = \langle v_k, v_h \rangle - \langle v_{k-1}, v_{h-1} \rangle \label{eq:unidir-gamma-bound-constraint-partial-l2-fourier}
    \end{align}
    for every $k, h \in \Z$. {When $k$ or $h$ are not in $\{ 0, \ldots, n \}$ the left-hand side is zero (since $\xi, \tau$ are supported only on this range)}, which implies, for $k = h$
    $$\langle v_{k}, v_{k} \rangle = \langle v_{k-1}, v_{k-1} \rangle$$
    If $v_n \neq 0$, then $\norm{v_k} = \norm{v_n} \not\rightarrow 0$ for every $k > n$, which contradicts $v \in L^2$.
\end{proof}
By Lemma~\ref{thm:feasible-space-polynomial-univariate}, for a univariate polynomial state conversion $\xi \mapsto \tau$ with degrees $\le n$ and oracle $O(z) = z$, the feasible solution space includes only polynomials of degree $n - 1$. This allows to conclude that the rank of these catalysts is also finite.
\begin{corollary}
    \label{thm:univariate-catalyst-embedding}
    If $v(z)$ is a feasible catalyst for a $\xi(z) \mapsto \tau(z)$ polynomial state conversion, where both $\xi, \tau$ have degree bounded by $n$, then $v(z)$ can be embedded in $\C^n$.
\end{corollary}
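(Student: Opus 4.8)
The plan is to read the result off directly from Lemma~\ref{thm:feasible-space-polynomial-univariate}. That lemma tells us any feasible catalyst is a polynomial $v(z) = \sum_{k=0}^{n-1} v_k z^k$ of degree at most $n-1$, so it is completely determined by its $n$ coefficient vectors $v_0, \ldots, v_{n-1} \in \calM \otimes \calW$. First I would observe that the range of $v(z)$, namely $\vspan{v_k}_k$, has dimension at most $n$, regardless of how large (even infinite-dimensional) the ancilla space $\calW$ is, simply because a span of $n$ vectors cannot exceed dimension $n$. Thus $v(z)$ has rank $\le n$, and by the rank/embedding remark from the preliminaries there is a unitary on $\calM \otimes \calW$ rotating its range into $\C^n$.

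The one point I would check carefully is that this rotation keeps the catalyst feasible. Here the oracle is $O(z) = z$, acting on $\calM$ as the scalar $z$, so the filter operator $\Delta(x,y) = \id - \bar x y$ is a scalar multiple of the identity; hence $\Delta(x,y) \otimes \id_\calW$ is a scalar multiple of $\id_{\calM \otimes \calW}$ and commutes with $\id_\calM \otimes U$ for every unitary $U$ on $\calW$. This means that replacing $v(z)$ by $(\id_\calM \otimes U) v(z)$ leaves the constraint~(\ref{eq:unidir-gamma-bound-constraint-partial-l2}) and the residual $G_\sigma$ untouched, while preserving the objective $\sup_z \norm{v(z)}^2$ because $U$ is unitary. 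Choosing $U$ to send $\vspan{v_k}_k$ into the span of $n$ fixed basis vectors then yields an equivalent feasible catalyst with range in $\C^n$, which is exactly the claim.

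I do not expect a genuine obstacle: essentially all the content is in Lemma~\ref{thm:feasible-space-polynomial-univariate}, and the rest is bookkeeping. The only subtlety worth flagging is that the embedding step exploits the filter matrices being scalar — a feature special to the single-variable, scalar oracle $O(z) = z$ — so this shortcut will not be available in the higher-dimensional or multivariate settings, where bounding the rank of catalysts is precisely what makes the later analysis more delicate.
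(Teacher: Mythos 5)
Your proposal is correct and follows essentially the same route as the paper: invoke Lemma~\ref{thm:feasible-space-polynomial-univariate} to reduce $v(z)$ to its $n$ coefficient vectors, then apply an inner-product-preserving map of those coefficients into $\C^n$, noting that feasibility only depends on the inner products $\langle v_k, v_h\rangle$ (the paper phrases this via the Fourier-basis constraint~(\ref{eq:unidir-gamma-bound-constraint-partial-l2-fourier}) and Gram--Schmidt, you via the filter operator being scalar --- the same observation).
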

\begin{proof}
    By Lemma~\ref{thm:feasible-space-polynomial-univariate}, $v(z)$ is a polynomial of degree $n-1$
    \begin{align*}
        v(z) = \sum_{k = 0}^{n-1} v_k z^k \ .
    \end{align*}
    with $v_k \in \calM \otimes \calW$ ($\calM$ has dimension $1$ in this case). Map $v_0, \ldots, v_{n-1}$ to $u_0, \ldots, u_{n-1} \in \C^n$, in such a way that inner products are preserved. This can be done with, e.g., a Gram-Schmidt orthogonalization, and such isometry will preserve (\ref{eq:unidir-gamma-bound-constraint-partial-l2-fourier}).
\end{proof}

\subsection{Quantum signal processing as a polynomial state conversion}
\label{sec:qsp-as-state-conversion}
The reader might have noticed that the QSP ans{\"a}tze defined in Section~\ref{sec:qsp} can be seen as quantum query algorithms. In particular, the signal operator $\Tilde{w} = \diag(z, 1)$ used in $SU(2)$-QSP is simply the oracle $O(z) = z$ applied to half of the Hilbert space.

At this point, QSP as introduced in Section~\ref{sec:qsp} coincides with the polynomial state conversion $\ket{0} \mapsto \tau(z) = P(z) \ket{0} + Q(z) \ket{1}$. The only difference is that QSP enforces a constraint on the dimensionality of the protocol, as opposed to polynomial state conversion, but we'll see that this does not make any difference, i.e., any protocol for polynomial state conversion will be embeddable in $SU(r)$, where $r$ is the dimension of $\tau(z)$. We start with the $SU(2)$ case.

{
    Lemma~\ref{thm:feasible-space-polynomial-univariate} and Corollary~\ref{thm:univariate-catalyst-embedding} guarantee that any catalyst $v$ will have degree $n-1$ and dimension $n$ if our goal is to build $n$-degree polynomial states. Moreover, similarly to what we said for Corollary~\ref{thm:univariate-catalyst-embedding}, for a catalyst $v$ also $Q v$ will be a catalyst for any unitary $Q$, since unitary transformations will preserve~(\ref{eq:unidir-gamma-bound-constraint-partial-l2}). Therefore, up to such unitary transformation we can rewrite $v$ as a direct sum
\begin{align*}
    Q^\dag v = v^{(0)} \oplus v^{(1)} \oplus \cdots \oplus v^{(n-1)}
\end{align*}
where $v^{(k)}$ is a scalar polynomial of degree $\le k$. We say that such a catalyst is in \emph{triangular form}, and any catalyst can always be reduced to such form through a $QL$-decomposition of the matrix containing the Fourier coefficients $v_0, \ldots, v_{n-1}$ as columns. The $k$-th row of the lower triangular matrix will contain the coefficients of $v^{(k)}$. Note that the triangular form is not necessarily unique, as it is sufficient to multiply by a diagonal unitary (i.e., multiply each $v^{(k)}$ by some phase), to get a different catalyst in triangular form.
}

{
\begin{theorem}
    \label{thm:gamma-bound-to-gqsp}
    Let $P, Q$ be two polynomials of degree $n$ satisfying $|P(z)|^2 + |Q(z)|^2 = 1$ on $\T$. There exists a bijective mapping between the catalysts in triangular form of the adversary bound for $\ket{0} \mapsto \tau(z) = P(z) \ket{0} + Q(z) \ket{1}$ state conversion with oracle $O(z) = z$ and the QSP protocols over $SU(2)$ implementing $(P, Q)$. The catalyst pieces $v^{(1)}, \ldots, v^{(n-1)}$ will appear as polynomials in the intermediate step of the QSP protocol.
\end{theorem}
Another interpretation of this statement is that there is a correspondence between the space of catalysts and the set of QSP protocols.
\begin{proof}[Proof Sketch]
    For a catalyst in triangular form $v = v^{(0)} \oplus \cdots \oplus v^{(n-1)}$, we simply show that $(P(z), Q(z))$ can be reduced to $(v^{(n-1)}(z), \cdot)$ through layer stripping, i.e., we can find $A_n \in SU(2)$ such that
    \begin{align*}
        \tau(z) \stackrel{\Tilde{w}^\dag A_n^\dag}{\mapsto} v^{(n-1)}(z) \oplus s(z)
    \end{align*}
    for some scalar polynomial $s(z)$. By induction this also implies that $v^{(k)}(z)$ will satisfy
    \begin{align*}
        \bra{0} A_{k-1} \Tilde{w} \cdots \Tilde{w} A_0 \ket{0} = v^{(k)}(z) \ ,
    \end{align*}
    i.e., the $v^{(k)}$ will be the polynomials appearing in the intermediate steps of the QSP protocol we construct.
\end{proof}}
\noindent The decomposition carried out by Theorem~\ref{thm:gamma-bound-to-gqsp} uses the catalyst differently compared to Algorithm~\ref{alg:state-conversion}. In the latter case, the catalyst is left unprocessed, we apply the same unitary $T$ times, and $v$ is left unchanged in its half of the space (hence the name \emph{catalyst}). This requires a number of qubits that scales logarithmically with the rank of the catalyst. On the other hand, Theorem~\ref{thm:gamma-bound-to-gqsp} splits $v$ into polynomials of increasing degrees, which will give directions on how the layer stripping argument should proceed, and preserves the $SU(2)$-structure.

By this argument we can infer some structure for the catalysts.
\begin{corollary}
    {If $v(z) = v^{(0)} \oplus \cdots \oplus v^{(n-1)}$ is in triangular form}, then
    $$\abs*{v^{(k)}(z)} \le 1$$
    for every $z \in \T$, and hence $\norm{v(z)}^2 \le n$. Since unitary transformations preserve the Las Vegas complexity, then any catalyst for polynomial state conversion $\ket{0} \mapsto \tau(z)$ with a  $n$-degree $\tau$ has Las Vegas complexity $\le n$.
\end{corollary}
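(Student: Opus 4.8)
The plan is to unpack Theorem~\ref{thm:gamma-bound-to-gqsp}: once a catalyst is written in the block form produced there, each block is literally the $\ket{0}$-component of a state arising in a genuine $SU(2)$-GQSP protocol, and unitarity of such a state gives the pointwise bound essentially for free.

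Concretely, I would first treat the case where $v(z) = v^{(0)} \oplus \cdots \oplus v^{(n-1)}$ is already in the form built in the proof of Theorem~\ref{thm:gamma-bound-to-gqsp}, so that there are processing operators $A_0, \ldots, A_{n-1} \in SU(2)$ with $v^{(k)}(z) = \bra{0} A_{k-1} \Tilde{w} \cdots \Tilde{w} A_0 \ket{0}$. For every $z \in \T$ the signal operator $\Tilde{w} = \diag(z, 1)$ is unitary, so $A_{k-1} \Tilde{w} \cdots \Tilde{w} A_0$ is a product of unitaries and $A_{k-1} \Tilde{w} \cdots \Tilde{w} A_0 \ket{0}$ is a unit vector; hence $\abs{v^{(k)}(z)}^2 \le \norm{A_{k-1} \Tilde{w} \cdots \Tilde{w} A_0 \ket{0}}^2 = 1$. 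Summing over the $n$ blocks yields $\norm{v(z)}^2 = \sum_{k=0}^{n-1} \abs{v^{(k)}(z)}^2 \le n$ for every $z \in \T$.

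For an arbitrary catalyst $v(z)$ I would invoke Lemma~\ref{thm:feasible-space-polynomial-univariate} to see that $v$ is a polynomial of degree at most $n-1$, and then perform the $QL$-decomposition on its (at most $n$) coefficient vectors exactly as in Theorem~\ref{thm:gamma-bound-to-gqsp}. This produces a unitary $U$ acting on the catalyst register $\calW$ (recall $\calM$ is one-dimensional here) such that $Uv(z)$ has the block form above. Since $U$ acts only on $\calW$ it commutes with $\Delta(\vv{x}, \vv{y}) \otimes \id_{\calW}$ and preserves all inner products, so $Uv$ is again a feasible catalyst with the same objective value $\sup_{z} \norm{Uv(z)}^2 = \sup_{z} \norm{v(z)}^2$. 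Applying the first case to $Uv$ gives $\sup_{z} \norm{v(z)}^2 \le n$; since the objective of the adversary bound coincides with the Las Vegas complexity of the algorithm built from the catalyst (Theorem~\ref{thm:gamma-las-vegas-bounds} together with the upper-bound construction of Section~\ref{sec:finite-state-conversion}), the Las Vegas complexity of any catalyst for $\ket{0} \mapsto \tau(z)$ with an $n$-degree $\tau$ is at most $n$.

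The whole argument is essentially bookkeeping layered on top of Theorem~\ref{thm:gamma-bound-to-gqsp}, so I do not anticipate a genuine obstacle. The point worth stating carefully is that the $QL$-decomposition unitary acts on $\calW$ and not on the signal register, which is exactly what guarantees that it leaves both the feasibility constraint and the pointwise norm $\norm{v(z)}$ — and hence the Las Vegas complexity — invariant; and that the decomposition yields at most $n$ blocks, some possibly zero, which is all one needs for the bound $\norm{v(z)}^2 \le n$.
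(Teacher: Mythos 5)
Your proposal is correct and follows the same route as the paper: the bound $\abs{v^{(k)}(z)} \le 1$ comes from identifying $v^{(k)}(z)$ as a component of the (unit-norm) intermediate state of the GQSP protocol after $k$ steps, summing over the $n$ blocks gives $\norm{v(z)}^2 \le n$, and the extension to arbitrary catalysts via the norm- and feasibility-preserving $QL$-decomposition unitary matches the paper's reasoning. The only cosmetic difference is that the paper justifies the final Las Vegas claim by noting that Las Vegas complexity is at most Monte Carlo complexity, which equals the protocol length $n$, whereas you read the same bound off the objective value directly; both are valid.
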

\noindent The corollary comes from the fact that $v^{(k)}$ is a component of the state in the QSP protocol at step $k$. The second claim is not surprising, since the Las Vegas complexity is at most the Monte Carlo complexity, which is equal to the length of the QSP protocol, {so it can be seen as a direct implication of Theorem~\ref{thm:gqsp-theorem}}.

Solving the semidefinite program for $v(z)$ is easy, as it simply requires to solve the linear system given by~(\ref{eq:unidir-gamma-bound-constraint-partial-l2-fourier}):
\begin{align}
    \delta_{k,0} \delta_{h,0} - \langle \tau_k, \tau_h \rangle & = X_{k,h} - X_{k-1, h-1} & \text{for every $k, h \in \qty{0, \ldots, n-1}$}
    \label{eq:constraint-univariate-to-qsp-fourier}
\end{align}
where $\delta$ is the Kronecker delta, and $X$ is a $n \times n$ matrix (we implicitly set $X_{k,h} = 0$ whenever $(k,h) \not\in \{ 0, \ldots, n-1 \}^2$). There is a unique $X$ that solves this system, whose entries can be computed as
\begin{align}
    X_{k,h} & = \sum_{j = 0}^{\infty} \delta_{k-j,0} \delta_{h-j,0} - \langle \tau_{k-j}, \tau_{h-j} \rangle \label{eq:adversary-qsp-unique-gram-matrix}
\end{align}
where the infinite sum is only a simplifying notation, as the support of $\tau$ is finite. By uniqueness, we conclude $X \succeq 0$, otherwise there would not exist a catalyst, which in turn contradicts the existence of a QSP protocol given by Theorem~\ref{thm:gqsp-theorem}. In order to obtain the catalyst $v(z) = v^{(0)} \oplus \cdots \oplus v^{(n-1)}$ we simply do a Cholesky decomposition {$X = L^\dag L$ (with $L$ being lower triangular, the $k$-th row will give the coefficients for $v^{(k)}$)}.
\begin{corollary}
    \label{thm:univariate-qsp-catalyst-uniqueness}
    The catalyst $v$ for univariate QSP $\ket{0} \mapsto P(z) \ket{0} + Q(z) \ket{1}$ is unique up to a unitary transformation.
\end{corollary}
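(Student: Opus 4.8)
The statement is essentially a corollary of the discussion preceding it: the plan is to show that the Gram matrix of the Fourier coefficients of \emph{any} feasible catalyst is uniquely determined by $\tau$, and then to invoke the standard linear-algebra fact (recalled in the preliminaries) that two families of vectors with the same Gram matrix differ by a unitary. Throughout, write $\xi(z)=\ket{0}$, $\tau(z)=P(z)\ket{0}+Q(z)\ket{1}$, and $n=\max\{\deg P,\deg Q\}$. By Lemma~\ref{thm:feasible-space-polynomial-univariate} a catalyst is a polynomial $v(z)=\sum_{k=0}^{n-1}v_k z^k$, so only finitely many coefficient inner products need to be pinned down.

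First I would reduce the feasibility \emph{inequality} to an \emph{equality}. Since $\norm{\xi(z)}=\norm{\tau(z)}=1$ on $\T$ and $O(z)=z$ is unitary, the residual $G_\sigma=G_\xi-G_\tau-V^*(\Delta\otimes\id_{\calW})V$ is a positive, finite-rank operator whose integral kernel on the diagonal is $G_\sigma(z,z)=\norm{\xi(z)}^2-\norm{\tau(z)}^2-\big(\norm{v(z)}^2-\norm{O(z)v(z)}^2\big)=0$; in the Fourier basis this forces the trace of the positive semidefinite matrix representing $G_\sigma$ to vanish, hence $G_\sigma=0$. (This is the $L^2$ analogue of Lemma~\ref{thm:total-partial-equivalence}, consistent with the earlier remark that for QSP the total and partial bounds coincide.) Thus every catalyst satisfies (\ref{eq:unidir-gamma-bound-constraint-partial-l2-fourier}) with equality, which — using $\langle\xi_k,\xi_h\rangle=\delta_{k,0}\delta_{h,0}$ together with $v_{-1}=0=v_n$ — is exactly the linear system (\ref{eq:constraint-univariate-to-qsp-fourier}) in the unknowns $X_{k,h}=\langle v_k,v_h\rangle$ for $k,h\in\{0,\ldots,n-1\}$.

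Next I would solve that system, which has a unique solution obtained by induction on $\min(k,h)$: when $\min(k,h)=0$ the shifted term $X_{k-1,h-1}$ is out of range and hence $0$, so $X_{k,h}=\delta_{k,0}\delta_{h,0}-\langle\tau_k,\tau_h\rangle$ directly; each further entry is then determined from $X_{k-1,h-1}$, recovering the closed form (\ref{eq:adversary-qsp-unique-gram-matrix}), which is automatically $X\succeq 0$ because a catalyst exists. Consequently the coefficient families $\{v_k\}$ and $\{v'_k\}$ of any two catalysts $v,v'$ share the same Gram matrix $X$; embedding both into a common $\C^n$ (possible by the preceding corollary) and applying the Gram-matrix fact yields a unitary $U$ with $Uv_k=v'_k$ for all $k$, hence $Uv(z)=v'(z)$ for every $z$, proving Corollary~\ref{thm:univariate-qsp-catalyst-uniqueness}.

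I do not expect a genuine obstacle here; the only two points that deserve a word of care are the collapse of the partial constraint to an equality in the continuous-label setting (the zero-diagonal-plus-positivity argument above) and the harmless bookkeeping of placing the two catalysts' coefficient vectors into a single $\C^n$ before invoking the Gram-matrix fact — everything else is the computation already carried out just before the statement.
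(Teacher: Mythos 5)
Your proposal is correct and follows essentially the same route as the paper: the Fourier-basis constraint forces a unique Gram matrix $X_{k,h}=\langle v_k,v_h\rangle$ via the linear system (\ref{eq:constraint-univariate-to-qsp-fourier}), and the standard Gram-matrix fact then gives uniqueness of $v$ up to a unitary. Your explicit collapse of the partial constraint to an equality is a detail the paper dispatches with the earlier remark that total and partial bounds coincide for normalized targets (Lemma~\ref{thm:total-partial-equivalence}), but it is a correct and harmless addition.
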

\noindent This result can also be obtained from the bijectivity of the non-Linear Fourier transform~\cite{alexisQuantumSignalProcessing2024,alexisInfiniteQuantumSignal2024}, along with the catalyst $\leftrightarrow$ QSP protocol correspondence given by Theorem~\ref{thm:gamma-bound-to-gqsp} (by multiplying $v$ with a diagonal unitary, we obtain non-canonical QSP protocols, in the sense of~\cite[Corollary~10]{laneveGeneralizedQuantumSignal2025}).

\subsection{More than two dimensions and partial state conversion}
\label{sec:qsp-sun-as-state-conversion}
The partial version of the $\gamma_2$-bound would give us, in principle, also a way to find a complementary polynomial $Q(z)$. However, a feasible solution $(G_{\sigma}, v)$ might have the Gram operator $G_{\sigma}$ of the complementary state $\sigma(z)$ with rank $r > 1$, which implies having $$\sigma(z) = Q_1(z) \oplus \ldots \oplus Q_r(z)$$ satisfying the normalization $|P|^2 + \sum_{j} |Q_j|^2 = 1$ on $\T$.

We thus need to include QSP over $SU(r+1)$ with $r > 1$ in order to have the big picture. Considering a total state generation problem with the target state $\tau(z)$ being an $(r+1)$-dimensional state $P(z) \oplus Q_1(z) \oplus \cdots \oplus Q_r(z)$, we obtain a similar result for general $r$.

{
\begin{theorem}
    \label{thm:gamma-bound-to-qsp-sun}
    Let $r \ge 1$, and let $P, Q_1, \ldots, Q_r$ be polynomials of degree $\le n$ satisfying $|P(z)|^2 + \sum_{j=1}^r |Q_j(z)|^2 = 1$ on $\T$. There exists a bijective mapping between the catalysts in triangular form of the adversary bound for the state conversion problem $$\ket{0} \mapsto \tau(z) = P(z) \ket{0} + \sum_{j = 1}^r Q_j(z) \ket{j}$$ with oracle $O(z) = z$ and the QSP protocols over $SU(r+1)$ implementing $(P, Q_1, \ldots, Q_r)$.
\end{theorem}}
\begin{proof}[Proof Sketch]
    Exactly like in Theorem~\ref{thm:gamma-bound-to-gqsp}, the $v^{(k)}$'s give the intermediate steps for the layer stripping process.
\end{proof}
As Theorem~\ref{thm:qsp-sun-theorem} is a straight generalization of Theorem~\ref{thm:gqsp-theorem}, this extends the QSP $\leftrightarrow$ catalyst space bijection to the $r > 1$ case. This allows us to give a full characterization of univariate QSP through the adversary bound for partial state conversion, taking into account both polynomial completion and phase factor computation.
\begin{corollary}
    {There is a bijection between the feasible solution $(G_{\sigma}, v)$ to the adversary bound for a partial state conversion $\ket{0} \mapsto P(z) \ket{0}$ with $v$ being in triangular form, and any possible QSP protocol constructing $P(z)$.} The protocol will be over $SU(r+1)$, where $r$ is the rank of $G_{\sigma}$.
\end{corollary}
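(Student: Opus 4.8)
The plan is to assemble this as a direct consequence of the machinery developed for the total case together with Lemma~\ref{thm:total-partial-equivalence} and Theorem~\ref{thm:gamma-bound-subnorm-property}. First I would observe that a feasible solution to the \emph{partial} adversary bound for the conversion $\ket{0} \mapsto P(z)\ket{0}$ is, by the reformulation of constraint~(\ref{eq:unidir-gamma-bound-constraint-partial-l2}) discussed after Lemma~\ref{thm:feasible-space-polynomial-univariate}, exactly a pair $(G_\sigma, v)$ where $G_\sigma \succeq 0$ is the Gram operator of some complementary state $\sigma(z) = Q_1(z)\oplus\cdots\oplus Q_r(z)$ with $r = \operatorname{rank} G_\sigma$, and $v$ is a catalyst for the \emph{total} state conversion $\ket{0} \mapsto P(z)\ket{0} \oplus \sigma(z)$. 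Since $P(z)\ket{0}\oplus\sigma(z)$ is a genuine polynomial state (by construction $|P|^2 + \sum_k |Q_k|^2 \equiv 1$ on $\T$), it is a $(r+1)$-dimensional normalized polynomial state of the form required by Theorem~\ref{thm:gamma-bound-to-qsp-sun}, so $v$ identifies a $SU(r+1)$-QSP protocol constructing $(P, Q_1, \ldots, Q_r)$, and in particular constructing $P(z)$ in the $\ket{0}$ component.

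Next I would argue the converse direction, namely that \emph{every} QSP protocol constructing $P(z)$ arises this way, so that the feasible solutions exhaust all protocols. Given any $SU(N)$-QSP protocol $A_n \Tilde{w}_N \cdots \Tilde{w}_N A_0 \ket{0} = P(z)\ket{0} + \sum_{k\ge 1} P_k(z)\ket{k}$, the normalization forces $|P|^2 + \sum_{k} |P_k|^2 \equiv 1$, so $\sigma(z) := \bigoplus_{k\ge1} P_k(z)$ is a valid complementary state; set $G_\sigma$ to be its Gram operator, with rank $r \le N-1$ (after discarding identically-zero components, which does not change the protocol up to a fixed unitary on the output register). Running the lower-bound construction in the proof of Theorem~\ref{thm:gamma-monte-carlo-bounds} — the direct sum $v = \gamma^0 \oplus \cdots \oplus \gamma^{n-1}$ of the intermediate states of the protocol — produces a catalyst for the total conversion $\ket{0} \mapsto P(z)\ket{0}\oplus\sigma(z)$, hence a feasible solution $(G_\sigma, v)$ to the partial bound for $\ket{0}\mapsto P(z)\ket{0}$. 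Thus the correspondence of Theorem~\ref{thm:gamma-bound-to-qsp-sun} upgrades to a correspondence between partial-bound feasible solutions and \emph{all} QSP protocols for $P$, with the dimension $r+1$ read off from $\operatorname{rank} G_\sigma$.

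The remaining point is to tie together the two ``halves'' of univariate QSP mentioned in the statement — polynomial completion and phase-factor computation — by noting that the choice of $G_\sigma$ in a feasible solution is precisely the choice of a Fej\'er--Riesz completion: the $r=1$ solutions correspond to the $2^n$ scalar complementary polynomials $Q_S$ of the Fej\'er--Riesz theorem, while $r>1$ solutions correspond to higher-dimensional completions. Once $G_\sigma$ is fixed, Corollary~\ref{thm:univariate-qsp-catalyst-uniqueness} (suitably generalized via Theorem~\ref{thm:gamma-bound-to-qsp-sun}) says the catalyst $v$ is unique up to unitary, i.e.\ the phase factors are determined. I expect the main obstacle to be bookkeeping rather than anything deep: one must check that the rank of $G_\sigma$ genuinely equals the minimal $N-1$ for which the given protocol lives in $SU(N)$ (handling degenerate/repeated roots and identically-vanishing components carefully), and that passing between the partial bound on $\ket{0}\mapsto P(z)\ket{0}$ and the total bound on $\ket{0}\mapsto P(z)\ket{0}\oplus\sigma(z)$ is an exact correspondence with no loss — which is where Lemma~\ref{thm:total-partial-equivalence} is invoked, since $\|\ket{0}\| = 1 = \|P(z)\ket{0}\oplus\sigma(z)\|$ on $\T$.
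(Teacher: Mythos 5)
Your proposal is correct and follows essentially the same route as the paper, which derives this corollary directly from Theorem~\ref{thm:gamma-bound-to-qsp-sun} by reading a feasible pair $(G_\sigma, v)$ of the partial bound as a total catalyst for $\ket{0} \mapsto P(z)\ket{0} \oplus \sigma(z)$ with $\dim \sigma = \operatorname{rank} G_\sigma$. Your explicit verification of the converse direction (every protocol yields a feasible pair via the intermediate-state catalyst of Theorem~\ref{thm:gamma-monte-carlo-bounds}) is left implicit in the paper but is consistent with, and a useful completion of, its catalyst--protocol correspondence.
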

We are thus interested in minimal-rank solutions to the adversary bound to identify minimal-space QSP protocols. We can infer something more on this solution space.
\begin{lemma}
    Suppose $\sigma_1(z), \sigma_2(z)$ are two complementary states to $\tau(z)$. Then any superposition $\alpha \sigma_1 \oplus \beta \sigma_2$ is also a complementary state.
\end{lemma}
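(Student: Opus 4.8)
The plan is to verify directly that $\sigma(z):=\alpha\,\sigma_1(z)\oplus\beta\,\sigma_2(z)$ — viewed inside the direct sum of the ambient spaces of $\sigma_1$ and $\sigma_2$ — again satisfies the defining property of a complementary state to $\tau(z)$, namely that $\tau(z)\oplus\sigma(z)$ is a full polynomial state (and, if one wants the adversary‑bound version, that it is the residual of a feasible catalyst). Throughout I take $|\alpha|^2+|\beta|^2=1$, which is exactly what makes calling $\alpha\sigma_1\oplus\beta\sigma_2$ a \emph{superposition} meaningful; without normalizing the coefficients the norms cannot match and the claim fails.

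The core is a pointwise norm computation. The two summands are orthogonal, and since $\sigma_1,\sigma_2$ are both complementary to the same $\tau$ we have $\norm{\sigma_1(z)}^2=\norm{\sigma_2(z)}^2=1-\norm{\tau(z)}^2$ for every $z\in\T$, whence
\begin{align*}
\norm{\sigma(z)}^2 = |\alpha|^2\norm{\sigma_1(z)}^2+|\beta|^2\norm{\sigma_2(z)}^2 = (|\alpha|^2+|\beta|^2)\bigl(1-\norm{\tau(z)}^2\bigr) = 1-\norm{\tau(z)}^2 .
\end{align*}
Hence $\norm{\tau(z)}^2+\norm{\sigma(z)}^2=1$ on $\T$; moreover $\sigma$ is a linear combination of polynomial vectors, so it is again a polynomial vector, of degree at most $\max(\deg\sigma_1,\deg\sigma_2)$. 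This already settles the statement at the level of the normalization condition.

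For the catalyst‑level version I would push the same direct‑sum recipe through the catalysts. If $v_1\in L^2(\T,\calM\otimes\calW_1)$ and $v_2\in L^2(\T,\calM\otimes\calW_2)$ witness $\sigma_1,\sigma_2$ — so that $G_{\sigma_i}=E-V_i^*(\Delta\otimes\id)V_i$ with $E$ the operator of the $\ket{0}\mapsto\tau$ instance — set $v(z):=\alpha\,v_1(z)\oplus\beta\,v_2(z)\in L^2(\T,\calM\otimes(\calW_1\oplus\calW_2))$. The filter $\Delta(x,y)\otimes\id$ respects the block decomposition $\calM\otimes\calW_1\oplus\calM\otimes\calW_2$, so bilinearity gives $\langle v(x),(\Delta(x,y)\otimes\id)v(y)\rangle=|\alpha|^2\langle v_1(x),(\Delta(x,y)\otimes\id)v_1(y)\rangle+|\beta|^2\langle v_2(x),(\Delta(x,y)\otimes\id)v_2(y)\rangle$; subtracting this from $e(x,y)$ and using $|\alpha|^2+|\beta|^2=1$ produces the kernel $|\alpha|^2(G_{\sigma_1})(x,y)+|\beta|^2(G_{\sigma_2})(x,y)$, which is precisely the Gram operator of $\alpha\sigma_1\oplus\beta\sigma_2$. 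Thus $(G_\sigma,v)$ is feasible for the partial form of~(\ref{eq:unidir-gamma-bound-constraint-partial-l2}), and $\sigma$ is a complementary state.

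I do not anticipate a real obstacle: the whole content is the bookkeeping that the coefficients square‑sum to one (so the cross‑normalization closes) together with the observation that direct sums commute blockwise with $\Delta$. The one point worth stating explicitly is a caveat rather than a difficulty: the rank of the resulting $G_\sigma$ is up to $\mathrm{rank}(G_{\sigma_1})+\mathrm{rank}(G_{\sigma_2})$, so superposing generically \emph{increases} the number of complementary polynomials and hence the QSP dimension — the lemma captures the convexity of the feasible set, not a route to minimal‑rank (minimal‑space) solutions.
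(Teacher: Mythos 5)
Your core argument is exactly the paper's proof: the pointwise norm computation $\norm{\alpha\sigma_1\oplus\beta\sigma_2}^2=|\alpha|^2\norm{\sigma_1}^2+|\beta|^2\norm{\sigma_2}^2=1-\norm{\tau}^2$ under the implicit normalization $|\alpha|^2+|\beta|^2=1$. The additional catalyst-level verification and the rank caveat are correct and match the paper's discussion immediately following the lemma, but are not needed for the statement itself.
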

\begin{proof}
    Being complementary means that $\norm{\sigma_1}^2 = \norm{\sigma_2}^2 = 1 - \norm{\tau}^2$, this is true also for the convex combination $\abs{\alpha}^2 \norm{\sigma_1}^2 + \abs{\beta}^2 \norm{\sigma_2}^2$.
\end{proof}
In other words, by convexity of the adversary bound, if we have two feasible solutions $(G_{\sigma_1}, v_1), (G_{\sigma_2}, v_2)$, then the catalyst $\alpha v_1 \oplus \beta v_2$ {(whose Gram matrix is the convex combination $\abs{\alpha}^2 G_{v_1} + \abs{\beta}^2 G_{v_2}$)} identifies a QSP protocol constructing $\tau \oplus \alpha \sigma_1 \oplus \beta \sigma_2$ {(where this new complementary state has Gram matrix $\abs{\alpha}^2 G_{\sigma_1} + \abs{\beta}^2 G_{\sigma_2}$)}\footnote{We remark that this direct sum does not mean that the dimension of the new QSP protocol will be the sum of the dimensions of the original protocols, essentially because the rank of the linear combination of two matrices might be smaller than the sum of the ranks {(consider, e.g., the case $\sigma_1 = \sigma_2$)}.}. This leads to the following statement:
\begin{theorem}
    For a partial state conversion $\ket{0} \mapsto P(z) \ket{0}$, the $SU(2)$-QSP protocols $\calP_A(z)$ {satisfying $\bra{0} \calP_A(z) \ket{0} = P(z)$} form a convex hull for a space of QSP protocols {over arbitrary dimensions satisfying the same constraint}.
\end{theorem}
Whether this convex hull actually contains \emph{all} the possible QSP protocols implementing $P$ is unclear. More specifically, if $\sigma(z)$ is a rank-$r$ complementary state to $P(z)$, can we always write it down as a combination of $\sigma = \alpha_1 Q_1 \oplus \cdots \oplus \alpha_r Q_r$ where each $Q_k$ is a rank-$1$ complementary polynomial? If this is true, then
\begin{align*}
    P \oplus \sigma = \alpha_1 (P \oplus Q_1) \oplus \alpha_2 (P \oplus Q_2) \oplus \cdots \oplus \alpha_r (P \oplus Q_r)
\end{align*}
up to a unitary transformation, which would correspond to $r$ parallel $SU(2)$-QSP protocols (whose dimension can be possibly reduced using Theorem~\ref{thm:gamma-bound-to-qsp-sun}).



\subsection{Polynomial state conversion over multiple variables}
\label{sec:mqsp-as-state-conversion}
After building a theory over polynomials of a single variable, the results show that the feasible solution space of the adversary bound perfectly coincides with the space of all the possible QSP protocols. This makes the adversary bound a good candidate for filling the gap in the theory of M-QSP~\cite{rossiMultivariableQuantumSignal2022,nemethVariantsMultivariateQuantum2023,moriCommentMultivariableQuantum2024,gomesMultivariableQSPBosonic2024,laneveMultivariatePolynomialsAchievable2025}, since the catalyst gave directions for the layer stripping process. While this `hint' is not necessary in the univariate case, we'll see that it is precious for M-QSP, as the existence of a valid catalyst constitutes a sufficient condition for the existence of a M-QSP protocol. We consider the oracle $$O(\vv{z}) = \diag(z_1, \ldots, z_m) =: \sum_{j=1}^m z_j \Pi_j$$ where $\Pi_j = \ketbra{j}{j}$\footnote{{Like $O$, we will also use $\Pi_j$ to actually mean $\Pi_j \otimes \id_{\calW}$ when applied to a vector in $\calM \otimes \calW$.}}. By taking the Fourier coefficients $e_{\vv{k},\vv{h}}, v_{\vv{k}}$ of $e(\vv{x}, \vv{y}), v(\vv{z})$, (\ref{eq:unidir-gamma-bound-constraint-partial-l2}) gives an analogous form to (\ref{eq:unidir-gamma-bound-constraint-partial-l2-fourier}):
\begin{align}
    e_{\vv{k},\vv{h}} & = \langle v_{\vv{k}}, v_{\vv{h}} \rangle - \sum_{j = 1}^m \langle v_{\vv{k}-\vv{e}_j}, \Pi_j v_{\vv{h}-\vv{e}_j} \rangle \ .
    \label{eq:gamma-bound-constraint-multivariate-fourier}
\end{align}
where $\vv{e}_j$ is the $j$-th element of the standard basis {(i.e., $\vv{k} - \vv{e}_j$ subtracts $1$ from the $j$-th position of $\vv{k}$)}. In other words, we are implicitly dividing the catalyst into multiple catalysts $v = \bigoplus_j \Pi_j v$. We obtain a very similar result as in the univariate case (full proofs in Appendix~\ref{apx:catalyst-proofs-multivariate}).
\begin{lemma}
    \label{thm:gamma-bound-finite-polynomial-multivariate}
    Suppose the oracle is $O(\vv{z}) = \diag(\vv{z})$ and that $\xi(\vv{z}), \tau(\vv{z})$ have degree $\le n_j$ in $z_j$ for every $j$. Then any feasible solution $v(\vv{z})$ to the total polynomial state conversion $\xi \mapsto \tau$ must be a polynomial of degree $\le n_j - 1$ in each variable $z_j$.
\end{lemma}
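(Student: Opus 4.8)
The plan is to mimic the univariate argument in Lemma~\ref{thm:feasible-space-polynomial-univariate}, but now with a vector-valued recursion in $m$ discrete indices. First I would rewrite the feasibility constraint (\ref{eq:gamma-bound-constraint-multivariate-fourier}) with $\vv{k} = \vv{h}$, so that it reads
\begin{align*}
    e_{\vv{k},\vv{k}} = \lVert v_{\vv{k}} \rVert^2 - \sum_{j=1}^m \lVert \Pi_j v_{\vv{k}-\vv{e}_j} \rVert^2 \ .
\end{align*}
Since $\xi, \tau$ have degree $\le n_j$ in $z_j$, the diagonal $e_{\vv{k},\vv{k}}$ vanishes whenever some coordinate $k_j$ lies outside $\{0, \ldots, n_j\}$; in that region the identity becomes $\lVert v_{\vv{k}}\rVert^2 = \sum_j \lVert \Pi_j v_{\vv{k}-\vv{e}_j}\rVert^2$. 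Writing $v_{\vv{k}} = \bigoplus_j \Pi_j v_{\vv{k}}$ and summing the $L^2$-norms, the natural quantity to track is a suitable energy like $\sum_{\vv{k}} \lVert v_{\vv{k}}\rVert^2$ restricted to slices of constant total degree, and the recursion says mass is merely shuffled between neighbouring lattice points without being destroyed. The goal is to show this forces $v_{\vv{k}} = 0$ as soon as $k_j \ge n_j$ for some $j$.

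The cleanest route I would take is to fix one variable, say $z_1$, and argue that $v$ has $z_1$-degree $\le n_1 - 1$; by symmetry the same then holds for every $z_j$. Group the Fourier coefficients by their $z_1$-exponent: let $w_k := \bigoplus_{\vv{k}'} v_{(k, \vv{k}')} \in L^2(\T^{m-1}, \calM\otimes\calW)$ collect all coefficients with first index equal to $k$, viewed as a single square-integrable function in the remaining variables (equivalently, $w_k(z_2,\ldots,z_m)$ is the $k$-th $z_1$-Fourier coefficient of $v$). Projecting (\ref{eq:gamma-bound-constraint-multivariate-fourier}) onto $\Pi_1$ and onto its complement, and contracting the indices $\vv{k}', \vv{h}'$ against the Fourier basis in $z_2, \ldots, z_m$, I expect to obtain an operator identity on $L^2(\T^{m-1})$ of the shape
\begin{align*}
    \langle w_k, w_k \rangle_{L^2(\T^{m-1})} - \langle \Pi_1 w_{k-1}, \Pi_1 w_{k-1}\rangle_{L^2(\T^{m-1})} = (\text{something supported in } k \in \{0,\ldots,n_1\}),
\end{align*}
where the "something" is controlled because, after peeling off the $z_1$ layer, the residual involves only $\xi,\tau$ (finite support) plus the $\Pi_j$-shifts for $j \ge 2$, which act within fixed $z_1$-degree and hence only reshuffle mass among the $w_k$'s without changing their total. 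For $k > n_1$ this should collapse to $\lVert w_k\rVert^2 \le \lVert w_{k-1}\rVert^2$ after absorbing the non-negative cross terms, whence $\lVert w_k\rVert$ is non-increasing; if some $w_{n_1} \ne 0$ we would get a non-vanishing tail $\lVert w_k\rVert \not\to 0$, contradicting $v \in L^2(\T^m, \cdot)$ (which forces $\sum_k \lVert w_k\rVert^2 < \infty$). A symmetric argument for each $j$ then pins down $v$ to a polynomial of degree $\le n_j - 1$ in every variable.

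The main obstacle, I expect, is handling the interaction between the $z_1$-recursion and the shifts in the other variables $z_j$, $j \ge 2$: unlike the univariate case, the diagonal constraint at a point $\vv{k}$ with $k_1 > n_1$ is not simply $\lVert v_{\vv{k}}\rVert^2 = \lVert \Pi_1 v_{\vv{k}-\vv{e}_1}\rVert^2$ — it also contains the terms $\lVert \Pi_j v_{\vv{k}-\vv{e}_j}\rVert^2$, whose first coordinate is still $k_1 > n_1$. So one must argue at the level of the aggregated functions $w_k$ (summing over all of $\vv{k}'$) rather than pointwise on the lattice, so that those extra shifts become norm-preserving rearrangements inside $L^2(\T^{m-1})$ rather than genuine new contributions; making this bookkeeping rigorous — in particular verifying that the cross terms $\langle v_{\vv{k}}, v_{\vv{h}}\rangle$ with $\vv{k}\ne\vv{h}$ assemble into the right positive-semidefinite structure and do not spoil the monotonicity — is where the real work lies, and is presumably what the full proof in Appendix~\ref{apx:catalyst-proofs-multivariate} carries out in detail. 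A secondary subtlety is that $\calW$ may be infinite-dimensional, so all the "norms" here are genuine Hilbert-space norms and the summability $\sum_{\vv{k}} \lVert v_{\vv{k}}\rVert^2 < \infty$ must be invoked carefully; but since that is exactly the $L^2$ membership already assumed, it should pose no real difficulty.
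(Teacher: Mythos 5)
Your reduction to a single-variable recursion does not establish the lemma, for two reasons. First, when you aggregate the diagonal constraint over a fixed $z_1$-slice, the shifts in the variables $z_j$, $j\ge 2$, do not merely ``reshuffle mass'' harmlessly: they cancel exactly against the corresponding part of $\norm{w_k}^2$. Summing (\ref{eq:gamma-bound-constraint-multivariate-fourier}) with $\vv{k}=\vv{h}$ and first coordinate equal to $k$ over all remaining indices (a bijective reindexing for each $j\ge2$), one gets, for $k\notin\{0,\dots,n_1\}$,
\begin{align*}
0 \;=\; \norm{w_k}^2 - \norm{\Pi_1 w_{k-1}}^2 - \sum_{j\ge 2}\norm{\Pi_j w_k}^2 \;=\; \norm{\Pi_1 w_k}^2 - \norm{\Pi_1 w_{k-1}}^2 ,
\end{align*}
so the slice identity controls \emph{only} the $\Pi_1$-component of $v$. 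It does give $\Pi_1 w_k = 0$ for $k\ge n_1$ and $k<0$ (a constant nonzero tail would contradict square-summability), but it says nothing about the $z_1$-degree of $\Pi_j v$ for $j\neq 1$: a coefficient $\Pi_2 v_{\vv{k}}$ with $k_1\ge n_1$ enters the constraint only at the lattice point $\vv{k}+\vv{e}_2$, which lies in the \emph{same} $z_1$-slice, so it is invisible to the telescoping in $k_1$. Repeating the argument ``by symmetry in each $j$'' therefore only proves that $\Pi_j v$ has $z_j$-degree $\le n_j-1$, which is strictly weaker than the claim. Second, even where you hope for an inequality, ``$\norm{w_k}$ non-increasing'' would not finish the proof: a non-increasing sequence is perfectly compatible with $\sum_k\norm{w_k}^2<\infty$. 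The contradiction with $v\in L^2$ needs an infinite tail of \emph{constant} (or non-decreasing) nonzero norms, as in the univariate Lemma~\ref{thm:feasible-space-polynomial-univariate}.

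The paper's proof avoids both problems by propagating along diagonal shells in \emph{all} variables simultaneously: fixing $\vv{k}$ outside the box $\calS$, it shows that either $\vv{k}+\calL_s$ or $\vv{k}-\calL_s$ avoids $\calS$ for every $s$ (with $\calL_s=\qty{\vv{h}\ge 0 : \sum_j h_j = s}$), and that the shell sums $a_s=\sum_{\vv{h}\in\vv{k}\pm\calL_s}\norm{v_{\vv{h}}}^2$ are non-decreasing --- the key point being that the shell at level $s$ collects \emph{all} the pieces $\Pi_1 v_{\vv{h}},\dots,\Pi_m v_{\vv{h}}$ of every $\vv{h}$ at level $s-1$, precisely because the shell expands in every coordinate direction at once. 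This yields $a_s\ge\norm{v_{\vv{k}}}^2>0$ for all $s$, contradicting $v\in L^2$. Your slicing idea could in principle be bootstrapped (after killing $\Pi_1 v$ on the slabs $k_1\ge n_1$, the remaining components satisfy a source-free constraint in $m-1$ variables on each slab), but making that induction work amounts to redoing the shell argument; as written, the proposal proves a weaker statement.
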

\begin{proof}[Proof Sketch]
    Consider the hyperrectangle $\calS = \{ \vv{h} \in \Z^m : 0 \le h_j < n_j \}$. As in Lemma~\ref{thm:feasible-space-polynomial-univariate}, we claim that if $v$ is supported outside of $\calS$, then we can find a divergent subsequence and apply Bessel's inequality to contradict $v \in L^2$.
\end{proof}
We thus know that $v(\vv{z})$ will follow the degree of $\xi, \tau$ just like in the univariate case. We now would like to apply a unitary {to obtain a triangular form}. However, not every unitary transformation $v \mapsto Qv$ will preserve (\ref{eq:gamma-bound-constraint-multivariate-fourier}) in this case, but only the ones that commute with the oracle $O(\vv{z})$ (in the previous sections, any unitary commutes with $z$). In other words, $U$ should be written as a direct sum (block-diagonal) $U_1 \oplus \cdots \oplus U_m$, transforming each component $\Pi_j v \mapsto U_j \Pi_j v$ independently.

We now use each $U_j$ to express $\Pi_j v = \Pi_j v^{(0)} \oplus \cdots \oplus \Pi_j v^{(n-1)}$. Thus, the subcatalyst $v^{(k)} := \bigoplus_j \Pi_j v^{(k)}$ is a polynomial vector of degree $\le k$. {Analogously to the univariate case we can say that $v$ written as 
\begin{align*}
    v = v^{(0)} \oplus v^{(1)} \oplus \cdots \oplus v^{(n-1)}
\end{align*}
is thus in triangular form.} An important difference here is that, while with $m = 1$ we proved that each $v^{(k)}$ is one-dimensional, we do not have guarantees on the dimensions of $v^{(k)}$ here. This has connections with the non-universality of M-QSP over $SU(2)$~\cite{rossiMultivariableQuantumSignal2022,
moriCommentMultivariableQuantum2024}: for $v$ to represent a M-QSP protocol over $SU(2)$ in the sense of~(\ref{eq:rossi-chuang-qsp}), for each $k$ we must have exactly one $j$ such that $\Pi_j v^{(k)} \neq 0$ (that $j$ gives the signal operator to apply at step $k$).
\begin{theorem}
    \label{thm:gamma-bound-to-mqsp}
    Consider the total state conversion $$\ket{0} \mapsto \tau(\vv{z}) = P(\vv{z}) \ket{0} + \sum_{k = 1}^r Q_k(\vv{z}) \ket{k} \ .$$
    where $P, Q_k$ have degree $\le n$. {A catalyst in triangular form of the adversary bound for the above state conversion problem can be turned into a M-QSP protocol of length $n$ for $\tau(\vv{z})$.}
\end{theorem}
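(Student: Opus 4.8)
The plan is to reproduce the layer-stripping arguments behind Theorems~\ref{thm:gamma-bound-to-gqsp} and~\ref{thm:gamma-bound-to-qsp-sun}, the new ingredient being that the catalyst must now dictate, layer by layer, which of the symbols $1, z_1, \ldots, z_m$ is placed in each slot of the signal operator.

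First I would pin down the shape of the catalyst. The total-degree refinement of Lemma~\ref{thm:gamma-bound-finite-polynomial-multivariate} --- obtained by the same Bessel-inequality argument applied to the grading by $|\vv k|$, i.e., summing~(\ref{eq:gamma-bound-constraint-multivariate-fourier}) over all $\vv k$ with $|\vv k| = d$ and using $\sum_j \Pi_j = \id_{\calM}$ --- shows that $\sum_{|\vv k| = d} \norm{v_{\vv k}}^2$ is eventually constant in $d$, hence $v(\vv z)$ has total degree $\le n-1$. Then, using the reorganization introduced above (the block-diagonal unitary $U = U_1 \oplus \cdots \oplus U_m$ commuting with $O(\vv z)$, applied to each $\Pi_j v$ as a $QL$-decomposition on its coefficients), I write $v = v^{(0)} \oplus \cdots \oplus v^{(n-1)}$ with $\deg v^{(k)} \le k$, the degree-$k$ part of $v^{(k)}$ being ``fresh'' with respect to the lower layers, and set $v^{(n)} := \tau$; unlike the case $m=1$, the $v^{(k)}$ need not be one-dimensional.

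The heart of the argument is a downward induction peeling one signal operator at a time. Given the current target $\tau^{(k)}$ (with $\tau^{(n)} := \tau$) together with the truncated catalyst $v^{(0)} \oplus \cdots \oplus v^{(k-1)}$ still satisfying~(\ref{eq:gamma-bound-constraint-multivariate-fourier}), its top piece $v^{(k-1)}$ prescribes the signal operator $\Tilde{W}_k = \diag(s^{(k)}_1, \ldots, s^{(k)}_N)$ on $\C^N$ (with $N \ge r+1$ the dimension of the protocol being built), $s^{(k)}_i \in \{1, z_1, \ldots, z_m\}$: slot $i$ receives $z_j$ exactly when the $i$-th coordinate of $\Pi_j v^{(k-1)}$ is nonzero, and $1$ otherwise. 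Reading~(\ref{eq:gamma-bound-constraint-multivariate-fourier}) at the extremal Fourier indices of layer $k$ plays the role of the coefficient-orthogonality relation~(\ref{eq:gqsp-coeff-orthogonality}): it forces the degree-$k$ coefficients of $\tau^{(k)}$ to route into the slots flagged by $\Tilde{W}_k$ and the constant coefficient into the complementary ones, so that $\Tilde{W}_k^\dag A_k^\dag \tau^{(k)}$ has degree $\le k-1$ for a suitable constant $A_k$, while matching its coefficient Gram matrix with that of the next target $\tau^{(k-1)}$ dictated by the truncated catalyst. Since stripping $A_k^\dag$ and $\Tilde{W}_k^\dag$ preserves the form of~(\ref{eq:gamma-bound-constraint-multivariate-fourier}) --- now for $v^{(0)} \oplus \cdots \oplus v^{(k-2)}$ and $\tau^{(k-1)}$ --- the induction continues; the base case $k=0$ is routine ($\tau^{(0)}$ is a constant unit vector), and $k=n$ yields a length-$n$ M-QSP protocol for $\tau$, over $SU(r+1)$ when $v$ has minimal rank and over a larger $SU(N)$ in general.

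I expect the crux to be the consistency of the signal-operator extraction inside this induction. Unlike the univariate case --- where multiplying a leading coefficient by $\bar z$ automatically lowers the degree of the whole polynomial --- here one must verify that the symbol assignment is unambiguous, i.e., that the supports of the $\Pi_j v^{(k-1)}$ over the $N$ slots are pairwise disjoint and the remaining slots all receive $1$, and that stripping $\Tilde{W}_k^\dag$ genuinely lowers the \emph{total} degree of every monomial of $A_k^\dag \tau^{(k)}$, not merely of its leading coefficient --- equivalently, that each slot flagged with $z_j$ carries a polynomial divisible by $z_j$. Both should fall out of a careful expansion of~(\ref{eq:gamma-bound-constraint-multivariate-fourier}) over the multi-indices $\vv k, \vv h$ straddling degree $k$, combined with the freshness of the $QL$-decomposition and the total-degree bound from the first step; the bookkeeping is heavy but conceptually tracks the univariate template. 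The $SU(N)$-normalization is achieved, as usual, by redistributing phases among the $A_k$.
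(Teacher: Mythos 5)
Your overall architecture matches the paper's: induction on the degree, a block-$QL$ decomposition of the catalyst that respects the $\Pi_j$-grading (so that only unitaries commuting with $O(\vv z)$ are applied to $v$), and the observation that the signal operator at step $k$ is just $O \oplus \id$ --- $z_j$ on the slots in the range of $\Pi_j$ acting on the catalyst register, $1$ on the slots carrying the low-degree part of the target. Your total-degree refinement of Lemma~\ref{thm:gamma-bound-finite-polynomial-multivariate} (summing~(\ref{eq:gamma-bound-constraint-multivariate-fourier}) over $|\vv k| = d$ and using $\sum_j \Pi_j = \id$) is correct and is indeed what justifies the grading $v = v^{(0)} \oplus \cdots \oplus v^{(n-1)}$ with $\deg v^{(k)} \le k$.

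The gap is the induction step itself, which you defer to ``a careful expansion \ldots\ the bookkeeping is heavy but conceptually tracks the univariate template.'' It does not track that template: univariate layer stripping hinges on the leading coefficient $\tau_n$ being a \emph{single} vector orthogonal to $\tau_0$, and this is precisely what fails for $m>1$ (the coefficients $\tau_{\vv k}$ with $|\vv k| = n$ span a subspace of unknown dimension, and no orthogonality relation strong enough to strip a layer follows from normalization alone --- this is the known obstruction to M-QSP). What has to be proved is the single identity
\begin{align*}
  \langle \tau_+(\vv x), \tau_+(\vv y)\rangle \;=\; \langle O(\vv x)\, v^{(n-1)}(\vv x),\, O(\vv y)\, v^{(n-1)}(\vv y)\rangle ,
\end{align*}
i.e., that the degree-$n$ component $\tau_+$ equals $T\, O v^{(n-1)}$ for a constant unitary $T$; only then is $\ket 0 \mapsto v^{(n-1)} \oplus \tau_-$ a legitimate degree-$(n-1)$ instance with catalyst $v^{(0)} \oplus \cdots \oplus v^{(n-2)}$. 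The paper obtains this not by coefficient-wise routing but by a global Gram-matrix argument: the constraint says $\ket 0 \oplus O v$ and $\tau \oplus v$ have equal Gram matrices, hence are related by a unitary $U$; since $\tau_+$ and $O v^{(n-1)}$ are the only summands carrying degree-$n$ Fourier coefficients, $U$ maps the slot subspace of $Ov^{(n-1)}$ into that of $\tau_+$, so it is block-diagonal and its top block is the desired $T$. You would need to supply this (or an equivalent) argument. Two of the worries you flag are also misplaced: disjointness of the supports of the $\Pi_j v^{(k-1)}$ is automatic because the $\Pi_j$ are orthogonal projectors on $\calM$, and no divisibility of $\tau_+$ by the $z_j$ ever needs checking, because the reduced target is read off the catalyst as $v^{(n-1)}$ rather than obtained by dividing $\tau_+$ by the signal operator.
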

{}
\begin{proof}[Proof Sketch]
    Exactly like in Theorem~\ref{thm:gamma-bound-to-qsp-sun}, the subcatalysts $v^{(k)}$ tell us how to carry out the layer stripping process. In particular, they guarantee the success of the induction step.
\end{proof}
More compactly, it is possible to characterize the set of M-QSP protocols geometrically.
\begin{corollary}
    Let $\ket{0} \mapsto P(\vv{z}) \ket{0}$ be a partial state conversion problem. The following subset of the semidefinite cone identifies the set of all M-QSP protocols of length $\le \ell$ achieving $P(\vv{z})$:
    \begin{align*}
        \calQ^{\ell}_{P} := \qty{(X^{(1)}, \ldots, X^{(m)}) \in (\calS^{\ell^m})^m : G_0 - G_P \succeq \sum_{j = 1}^m X^{(j)}_{\vv{k}, \vv{h}} - X^{(j)}_{\vv{k} - \vv{e}_j, \vv{h} - \vv{e}_j}, \ X^{(j)} \succeq 0 }
    \end{align*}
    where $\calS^{\ell^m}$ is the cone of $\ell^m \times \ell^m$ positive semidefinite matrices, $G_0, G_P$ are the Gram matrices of $\ket{0}$ and $P(\vv{z})$ respectively, expressed in the Fourier basis, while $X^{(j)}$ is the Gram matrix of the $j$-th catalyst $\Pi_j v$.
\end{corollary}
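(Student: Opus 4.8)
The plan is to read off, in the Fourier basis, that the defining inequality of $\calQ^\ell_P$ is exactly the partial adversary bound for the conversion $\ket0\mapsto P(\vv{z})\ket0$, and then to move between feasible points and protocols using Theorem~\ref{thm:gamma-bound-to-mqsp} in one direction and the catalyst extracted from a protocol in the lower bound of Theorem~\ref{thm:gamma-monte-carlo-bounds} in the other. Concretely, identify a tuple $(X^{(1)},\dots,X^{(m)})$ with subcatalyst Gram data via $X^{(j)}_{\vv{k},\vv{h}}=\langle\Pi_j v_{\vv{k}},\Pi_j v_{\vv{h}}\rangle$; since $\Pi_j$ is an orthogonal projection this gives $\langle v_{\vv{k}},v_{\vv{h}}\rangle=\sum_j X^{(j)}_{\vv{k},\vv{h}}$ and $\langle v_{\vv{k}-\vv{e}_j},\Pi_j v_{\vv{h}-\vv{e}_j}\rangle=X^{(j)}_{\vv{k}-\vv{e}_j,\vv{h}-\vv{e}_j}$, so the matrix inequality in $\calQ^\ell_P$ is, entry by entry, the rearrangement~(\ref{eq:gamma-bound-constraint-multivariate-fourier}) of the partial constraint $G_0-G_P\succeq V^*(\Delta\otimes\id)V$ of~(\ref{eq:unidir-gamma-bound-constraint-partial-l2}); the size restriction $X^{(j)}\in\calS^{\ell^m}$ is the Fourier-mode count forced by the degree bound of Lemma~\ref{thm:gamma-bound-finite-polynomial-multivariate}.

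For the forward direction I would take a feasible $(X^{(1)},\dots,X^{(m)})$, factor each $X^{(j)}=(W^{(j)})^\dagger W^{(j)}$, read its columns $w^{(j)}_{\vv{k}}$, and set $v_{\vv{k}}=\bigoplus_j w^{(j)}_{\vv{k}}$, $v(\vv{z})=\sum_{\vv{k}}v_{\vv{k}}\vv{z}^{\vv{k}}$, a polynomial vector of degree $\le\ell-1$ in each variable. By the identification above, $v$ is a feasible catalyst for $\ket0\mapsto P(\vv{z})\ket0$, with residual $G_\sigma=(G_0-G_P)-V^*(\Delta\otimes\id)V\succeq0$ of some finite rank $r$; writing $\sigma(\vv{z})=Q_1(\vv{z})\oplus\cdots\oplus Q_r(\vv{z})$ with Gram matrix $G_\sigma$, the same $v$ is a catalyst for the \emph{total} conversion $\ket0\mapsto P(\vv{z})\ket0+\sum_{k\ge1}Q_k(\vv{z})\ket{k}$ of degree $\le\ell$, and Theorem~\ref{thm:gamma-bound-to-mqsp} converts it into an M-QSP protocol of length $\le\ell$ whose $\ket0$-component is $P(\vv{z})$. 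For the reverse direction I would pad an arbitrary length-$\le\ell$ protocol that produces $P$ in its $\ket0$-component up to length $\ell$, collect the components of its intermediate states that are processed by the signal operators into a catalyst $v$ (the lower bound of Theorem~\ref{thm:gamma-monte-carlo-bounds}), note that $v$ has degree $\le\ell-1$ in each variable by Lemma~\ref{thm:gamma-bound-finite-polynomial-multivariate}, and take $X^{(j)}=\{\langle\Pi_j v_{\vv{k}},\Pi_j v_{\vv{h}}\rangle\}\in\calS^{\ell^m}$, which then lies in $\calQ^\ell_P$. Since both maps act only on Gram data and preserve it, composing them from a point of $\calQ^\ell_P$ returns the same tuple, and composing from a protocol returns an equivalent one; this is the identification claimed.

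The step I expect to require the most care is the index bookkeeping that makes the first paragraph rigorous: the support of $G_P$ runs up to degree $\ell$, the catalyst occupies only the $\ell^m$ modes of degree $\le\ell-1$, and the once-shifted copies $X^{(j)}_{\vv{k}-\vv{e}_j,\vv{h}-\vv{e}_j}$ reach one degree higher in the $j$-th variable, so I must fix the zero-padding conventions and check that, with them, the displayed inequality is term-for-term $E\succeq V^*(\Delta\otimes\id)V$ over the full (finite) Fourier support, in particular that the boundary rows and columns — where $v_{\vv{k}}=0$ but a shifted term survives — are handled automatically by $X^{(j)}\succeq0$. The remaining subtlety is the exact meaning of ``identifies'': a feasible tuple fixes each $X^{(j)}$, hence each subcatalyst $\Pi_j v$, only up to a unitary on its own block, which is precisely the freedom $v\mapsto(U_1\oplus\cdots\oplus U_m)v$ preserving~(\ref{eq:gamma-bound-constraint-multivariate-fourier}); so $\calQ^\ell_P$ is in bijection with catalysts modulo this equivalence and, through Theorem~\ref{thm:gamma-bound-to-mqsp}, with M-QSP protocols modulo the corresponding residual freedom in their layer-stripping data.
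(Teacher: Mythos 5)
Your proposal is correct and follows exactly the route the paper intends: the corollary is stated as an immediate consequence of rewriting the partial constraint in the Fourier basis as~(\ref{eq:gamma-bound-constraint-multivariate-fourier}), the degree/support bound of Lemma~\ref{thm:gamma-bound-finite-polynomial-multivariate} (which yields the $\ell^m$ mode count), and the catalyst-to-protocol direction of Theorem~\ref{thm:gamma-bound-to-mqsp} together with the protocol-to-catalyst direction from the lower-bound argument of Theorem~\ref{thm:gamma-monte-carlo-bounds}, with the residual $G_\sigma = G_0 - G_P - \sum_j(\cdots)$ read off as the Gram matrix of the complementary polynomials. Your added care about zero-padding at the boundary modes and about ``identifies'' meaning ``up to the block-unitary freedom $U_1\oplus\cdots\oplus U_m$ commuting with the oracle'' matches the paper's own discussion and fills in details it leaves implicit.
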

The difference $G_0 - G_P - \sum_{j = 1}^m X^{(j)}_{\vv{k}, \vv{h}} - X^{(j)}_{\vv{k} - \vv{e}_j, \vv{h} - \vv{e}_j} \succeq 0$ is simply the Gram matrix $G_{\sigma}$ of the complementary polynomials. $\ell$ is needed as the multivariate Fej{\'e}r-Riesz theorem (Theorem~\ref{thm:fejer-riesz-multivariate}) does not guarantee an upper bound for the degrees of the complementary polynomials. We define the set of all M-QSP protocols for $P$ as $\calQ_P := \bigcup_{\ell > 0} \calQ^\ell_P$. We can of course replace $P$ with a total or partial state $\tau(\vv{z})$ of any dimension, thus defining $\calQ^{\ell}_\tau, \calQ_\tau$ analogously.

Unfortunately, Theorem~\ref{thm:gamma-bound-to-mqsp} cannot say anything about the dimensionality of the protocol, unlike its univariate counterpart. As an example, consider the one-dimensional total polynomial state $\tau(\vv{z}) = z_1 z_2 \cdots z_m$: the following protocol needs $m$ dimensions
\begin{align}
    \frac{1}{\sqrt{m}}
    \begin{bmatrix}
        1 \\ 1 \\ \vdots \\ 1
    \end{bmatrix}
    \mapsto
    \frac{1}{\sqrt{m}}
    \begin{bmatrix}
        z_1 \\ z_2 \\ \vdots \\ z_m
    \end{bmatrix}
    \mapsto
    \cdots
    \mapsto
    \frac{1}{\sqrt{m}}
    \begin{bmatrix}
        z_1 z_2 \cdots z_{m-1} \\ z_2 z_3 \cdots z_m \\ \vdots \\ z_m z_1 \cdots z_{m-2}
    \end{bmatrix}
    \mapsto
    \frac{1}{\sqrt{m}}
    \begin{bmatrix}
        z_1 z_2 \cdots z_m \\ z_1 z_2 \cdots z_m \\ \vdots \\ z_1 z_2 \cdots z_m
    \end{bmatrix}
    \label{eq:mqsp-dimensionality-counterexample}
\end{align}
where the last state is equal to $\tau$ up to a unitary transformation (essentially the intermediate processing operators are chosen to be the cyclic shift matrix). Nonetheless, it is possible to construct the same state using only one dimension, by multiplying by one $z_k$ at each step. It is possible to see the above protocol as a superposition of instances of this one-dimensional protocol over different sequences of $z_k$. In terms of the semidefinite formulation, define the catalyst matrix $X_{\pi} \in \calQ^m_{\tau}$ to be the one identifying the M-QSP protocol which applies the $z_j$ in the order defined by the permutation {$\pi : \{ 1, \ldots, m \} \rightarrow \{ 1, \ldots, m \}$}. Then, the catalyst matrix $X \in \calQ^m_{\tau}$ representing the protocol in (\ref{eq:mqsp-dimensionality-counterexample}) is a convex combination of the $X_{\pi}$. We do not know whether the existence of these minimal-space catalyst matrices is guaranteed, in particular we leave
\begin{conjecture}[Expanded from~\cite{laneveMultivariatePolynomialsAchievable2025}]
    \label{thm:mqsp-dimensionality-conjecture}
    Every $\tau(\vv{z})$ of dimension $d$ constructible by M-QSP allows a M-QSP protocol over $SU(d)$.
\end{conjecture}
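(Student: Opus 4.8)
The plan is to work entirely in the catalyst picture. By Theorem~\ref{thm:gamma-bound-to-mqsp} and the corollary characterizing $\calQ^{\ell}_{\tau}$, an M-QSP protocol constructing $\tau$ corresponds to a feasible catalyst $v$ for the adversary bound of $\ket{0} \mapsto \tau(\vv{z})$; after a $QL$-decomposition inside each oracle sector $\Pi_j$ the catalyst becomes degree-graded, $v = v^{(0)} \oplus \cdots \oplus v^{(n-1)}$, with the piece $v^{(k)}$ acting as the step-$k$ intermediate data, and the protocol read off from $v$ lives in $SU(N)$ with $N$ governed by the dimensions of the graded pieces (and of $\tau$, which has dimension $d$). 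Writing $v^{(n)} := \tau$, the conjecture becomes the statement that the convex set $\calQ_{\tau}$ contains a catalyst with $\dim v^{(k)} \le d$ for every $k$. A first reduction fixes the length: by Lemma~\ref{thm:gamma-bound-finite-polynomial-multivariate} a feasible catalyst has $z_j$-degree strictly below that of $\tau$, and since each signal layer raises the total degree by at most one while redundant layers strip off, no protocol for $\tau$ needs to be longer than $n := \deg\tau$; hence it suffices to search inside $\calQ^{n}_{\tau}$, a genuine finite-dimensional spectrahedron in the Fourier-basis Gram matrices $(X^{(1)}, \ldots, X^{(m)})$ of the $\Pi_j v$.

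The strategy is then to choose a catalyst that is as rigid as possible and argue that rigidity forces the graded dimensions down. Concretely I would take $(X^{(1)}, \ldots, X^{(m)})$ to be a vertex of $\calQ^{n}_{\tau}$ --- more precisely a minimizer of $\operatorname{rank}\sum_{j} X^{(j)} = \operatorname{rank} G_v = \sum_{k} \dim v^{(k)}$ over $\calQ^{n}_{\tau}$. This choice behaves correctly on the motivating example: for $\tau = z_1 \cdots z_m$ it returns the one-dimensional protocol (every $v^{(k)} = z_1 \cdots z_k$) and rejects the $m$-dimensional superposition of permutation protocols of~(\ref{eq:mqsp-dimensionality-counterexample}), which is a strictly interior, rank-inflated point of $\calQ^{m}_{\tau}$. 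With such a minimal solution fixed, the remaining work is to propagate the boundary value $\dim v^{(n)} = \dim\tau = d$ backwards through the coupled constraint~(\ref{eq:gamma-bound-constraint-multivariate-fourier}): since that constraint ties the degree-$\vv{k}$ block of the Gram data to the degree-$(\vv{k}-\vv{e}_j)$ block by a single shift inside each sector, an unnecessarily large graded piece $v^{(k)}$ should be trimmable by a local modification that strictly lowers $\operatorname{rank} G_v$ while remaining in $\calQ^{n}_{\tau}$, contradicting minimality; ideally this also yields a ``dimension monotonicity'' lemma --- that the graded dimensions of a minimal catalyst may be taken non-decreasing along the protocol --- which gives the bound at once, since the last dimension is $d$.

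The hard part is exactly this last step. Generic facial / Pataki-type rank bounds for spectrahedra control only the total rank $\sum_{k} \dim v^{(k)}$, not the individual $\dim v^{(k)}$, so the argument must genuinely exploit the Toeplitz/shift form of~(\ref{eq:gamma-bound-constraint-multivariate-fourier}) --- presumably by isolating, for a minimal $v^{(k)}$, a multivariate analogue of the endpoint orthogonality~(\ref{eq:gqsp-coeff-orthogonality}) that lets the layer-stripping of Theorem~\ref{thm:gamma-bound-to-mqsp} be run inside $\C^d$ rather than in a larger ambient space. The non-universality of M-QSP over $SU(2)$~\cite{rossiMultivariableQuantumSignal2022,moriCommentMultivariableQuantum2024} is a standing warning that one cannot simply ``compress'' an intermediate state and expect it to stay M-QSP-constructible, so this is where I expect the real obstruction to sit. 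A parallel route that meets the same difficulty from the other side is induction on the length: write $\tau = A_n \Tilde{W}^{(n)} \psi_{n-1}$ for the given $SU(N)$ protocol, replace its final layer by one whose predecessor state $\psi'_{n-1}$ has dimension $\le d$ and is still M-QSP-constructible, and recurse via the inductive hypothesis on $\psi'_{n-1}$; once more the crux is producing a low-dimensional predecessor inside the (non-trivial) image of M-QSP, which is exactly what a dimension-monotonicity lemma would provide.
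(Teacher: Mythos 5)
This statement is a \emph{conjecture}: the paper deliberately leaves it open (``We do not know whether the existence of these minimal-space catalyst matrices is guaranteed, in particular we leave\ldots''), and the discussion section lists the associated rank-minimization over $\calQ_P$ as an open problem, noting that such problems are generally NP-hard. So there is no proof in the paper to compare against, and your proposal does not supply one either: you yourself flag that ``the hard part is exactly this last step,'' and that last step --- showing that a total-rank-minimal element of $\calQ^{n}_{\tau}$ has every graded block of dimension at most $d$, or equivalently a ``dimension monotonicity'' lemma --- is precisely the open content of the conjecture. Everything before it (the catalyst $\leftrightarrow$ protocol dictionary, the restriction to $\calQ^{n}_{\tau}$ for total states via Lemma~\ref{thm:gamma-bound-finite-polynomial-multivariate}, the observation that the rank minimizer behaves correctly on the $z_1\cdots z_m$ example) is a reasonable setup but proves nothing new; it essentially restates the paper's own framing of the problem.

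Two further concrete cautions about the strategy itself. First, the reduction of the conjecture to ``$\dim v^{(k)} \le d$ for all $k$'' is not quite faithful: in the layer-stripping of Theorem~\ref{thm:gamma-bound-to-mqsp} the intermediate state at step $k$ is of the form $v^{(k)} \oplus \tau_-^{(k)}$, where $\tau_-^{(k)}$ accumulates the low-degree remnants of all previously stripped layers, so the protocol dimension $N$ is governed by $\dim\bigl(v^{(k)} \oplus \tau_-^{(k)}\bigr)$ and not by $\dim v^{(k)}$ alone; minimizing $\operatorname{rank} G_v = \sum_k \dim v^{(k)}$ could in principle trade off against the dimension of these complementary components. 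Second, as you note, Pataki-type facial bounds control only the total rank, and the shift structure of~(\ref{eq:gamma-bound-constraint-multivariate-fourier}) couples blocks across sectors in a way that has so far resisted exactly the kind of local ``trimming'' argument you propose --- the non-universality results of~\cite{nemethVariantsMultivariateQuantum2023,laneveMultivariatePolynomialsAchievable2025} are evidence that compressing an intermediate state can leave the M-QSP-constructible set. The proposal is a sensible research program, but it should be presented as such, not as a proof.
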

The stronger guarantee on the dimension in the univariate case given by Theorem~\ref{thm:gamma-bound-to-qsp-sun} relies on the uniqueness of the catalyst matrix $X$ in~(\ref{eq:constraint-univariate-to-qsp-fourier}), while the example in~(\ref{eq:mqsp-dimensionality-counterexample}) clearly shows distinct possible solutions (which in turn includes also any convex combination).

It is worth mentioning that previous impossibility proofs in~\cite{nemethVariantsMultivariateQuantum2023,laneveMultivariatePolynomialsAchievable2025} only talk about total states in a specific dimension. In other words, their arguments do not imply $\calQ_{\tau} = \emptyset$, and not even $\calQ_P = \emptyset$ or $\calQ_Q = \emptyset$.

\section{Discussion and outlook}
\label{sec:discussion}

This work uses the adversary bound --- a central tool in query complexity --- to show the theory of quantum signal processing (QSP) from a new perspective. This new approach --- which recasts QSP as a special instance of state conversion --- allows one to better understand the structure of QSP, especially in the multivariate case. In particular, we found that the catalysts given by the adversary bound can always be used to construct an optimal-space protocol in the case of univariate oracle $O(z) = z$, unlike the algorithm of~\cite{belovsOneWayTicketVegas2023}, whose required space always depends on the dimensionality of the catalyst. Surprisingly, the constraint (\ref{eq:constraint-univariate-to-qsp-fourier}) of the adversary bound for the univariate case shows a displacement structure similar to the one in~\cite{niFastPhaseFactor2024,niInverseNonlinearFast2025}.

The partial state conversion formulation is convenient, as the adversary bound encodes both the problem of finding a complementary polynomial and computing the corresponding QSP protocol. Dimensionality might still be a problem in the multivariate case, confirming that univariate QSP is a very special and well-behaved case of state conversion. We thus turn our attention to the following outlook:
\begin{enumerate}[(i)]
    \item Finding a minimal-space M-QSP protocol implementing some desired transformation $P(\vv{z})$ requires finding minimal-rank elements of $\calQ_P$. While rank minimization problems are usually NP-hard, might it be possible to at least approximate minimal rank with convex heuristics~\cite{rechtNecessarySufficientConditions2008}?
    
    \item Theoretically any $P$ with $|P| < 1$ can be achieved by the M-QSP over multiple dimensions we presented: one can think of a linear combination of unitaries~\cite{berryHamiltonianSimulationNearly2015} boosted with sufficiently many rounds of amplitude amplification. This naive approach, however, requires many qubits, and the adversary bound approach we presented would be able to give further optimization. In particular, can such $P$ admit a solution whose rank depends only on the number $m$ of variables, or the dimension $d$ of the full, completed polynomial vector (as in Conjecture~\ref{thm:mqsp-dimensionality-conjecture})?
    
    \item What happens when the degree $n$ of $\tau$ goes to infinity? This is the setting of infinite quantum signal processing~\cite{dongInfiniteQuantumSignal2024,alexisInfiniteQuantumSignal2024}, and the adversary bound should admit feasible catalysts of infinite rank. It is not clear, however, if the infinite-rank solution achieving a finite objective function would remain unique in the univariate case, as in Corollary~\ref{thm:univariate-qsp-catalyst-uniqueness}.
    
    \item Las Vegas complexity satisfies exactness and thriftiness properties, allowing to having complexity-preserving compositions. In light of this work, would there be a connection between the transducer formalism of~\cite{belovsTamingQuantumTime2024} and the gadgets of~\cite{rossiModularQuantumSignal2025}? This would require a translation of the latter work into the analytic QSP picture;
    
    \item In this work we manage to find bounds for the space used by the protocols in the case of polynomial state conversion. Can we derive space bounds also for finite-labels state conversion~\cite{leeQuantumQueryComplexity2011,belovsOneWayTicketVegas2023}? The shape of the oracle $O(z)$ in our case exhibits a particular structure, thus suggesting that this plays an important role. But there might be a way to split and consume the catalyst as we did in this work, so that embedding the full catalyst in the registers like in~\cite{belovsOneWayTicketVegas2023} might be unnecessary.
    
    \item Importantly, QSP is usually lifted to quantum eigenvalue transformation of unitaries (QET-U)~\cite{dongGroundStatePreparationEnergy2022} or quantum singular value transformation (QSVT)~\cite{gilyenQuantumSingularValue2019} for algorithmic applications. While a lift for the former is clear in the M-QSP setting over $SU(N)$ (just replace the occurrences of $z_j$ in the oracles with controlled versions of the unitaries $U_j$), it is unclear how one can obtain a multivariate QSVT from it. The authors in~\cite{rossiMultivariableQuantumSignal2022} provide a M-QSVT ansatz for their M-QSP over $SU(2)$, but the single-qubit structure is important for their construction. More recent work on obtaining a QSVT ansatz from GQSP might be relevant for this direction~\cite{sunderhaufGeneralizedQuantumSingular2023}.

    \item The adversary bound formalism here does not seem to have fundamental limitations if we remove the assumption that the $z_j$'s do not commute. This \emph{non-abelian} version of M-QSP adds rigidity to the structure of the polynomials. We suspect that the non-uniqueness of the catalysts for M-QSP follows from the fact that the polynomials $z_1 z_2$ and $z_2 z_1$ are the same (which is the gist of the counterexample in~(\ref{eq:mqsp-dimensionality-counterexample})). This leads to the belief that non-abelian M-QSP might be easier to analyze than its abelian counterpart.
\end{enumerate}

\section*{Acknowledgements}
I would like to thank Hitomi Mori and Yuki Ito for discussions on M-QSP, as well as Alexander Belov and Troy Lee for important clarifications on state conversion and query complexity. This work is supported by the Swiss National Science Foundation (SNSF), project No.\ 200020-214808.

\bibliographystyle{quantum}
\bibliography{refs}

\pagebreak
\appendix

\section{Results from linear operators over Hilbert spaces}
\label{apx:trace-class}

\subsection{The $L^2$ space and linear operators}
{Here we briefly introduce the theory of linear operators over $L^2(\T^m, \calH)$ needed to define the adversary bound over square-integrable functions.

\begin{definition}
    The space $L^2(\T^m)$ is the Hilbert space of $m$-variable square-integrable functions on the unit torus $\T^m$, i.e., $f \in L^2(\T^m)$ if it satisfies
    \begin{align*}
        \norm{f}_{L^2} := \int_{\T^m} \abs{f(\vv{z})}^2 \ \dd\mu(\vv{z}) := \int_0^{2\pi} \cdots \int_0^{2\pi} \abs{f(e^{i\theta_1}, \ldots, e^{i\theta_m})}^2 \ \dd\theta_1 \cdots \dd\theta_m < \infty
    \end{align*}
    where $\mu$ is the normalized Lebesgue measure on $\T^m$.
\end{definition}
For a Hilbert space $\calH$, the space $L^2(\T^m, \calH)$ is the space of functions in the unit torus $\T^m$ with values in $\calH$ satisfying
\begin{align*}
    \norm{f}_{L^2} := \int_{\T^m} \norm{f(\vv{z})}_{\calH}^2 \ \dd\mu(\vv{z}) < \infty
\end{align*}
where $\norm{\cdot}_{\calH}$ is the norm induced by the inner product of $\calH$.}

{\begin{definition}
    A Hilbert space $\calH$ is said to be \emph{separable} if it admits a complete basis $\{ e_k \}_k$ of countable cardinality.
\end{definition}
We have that $L^2(\T^m)$ has a basis $\{ t_{\vv{k}} \}_{\vv{k} \in \Z^m}$ with $$t_{\vv{k}}(\vv{z}) = z_1^{k_1} \cdots z_m^{k_m} =: \vv{z}^{\vv{k}}$$ which is the Fourier basis. We will always assume the $\calH$ above to be separable, so the space $L^2(\T^m, \calH) \simeq L^2(\T^m) \otimes \calH$ will be separable as well since it has a basis $\{ t_{\vv{k}} \otimes g_h \}_{\vv{k},h}$, where $\{ g_h \}_h$ is a basis for $\calH$.

A linear operator $A$ acting on $L^2(\T^m, \calH)$ can be represented as an integral operator
\begin{align*}
    f \mapsto Af = \int_{\T^m} a(\cdot, \vv{y}) f(\vv{y}) \ \dd\mu(\vv{y})
\end{align*}
where $f \in L^2(\T^m, \calH)$, $f(\vv{y}) \in \calH$ and $a(\vv{x}, \vv{y})$ is a linear operator on $\calH$. Here $a$ is regarded as the integral kernel of $A$. To fix the ideas, if $\calH = \C^K$, then $a(\vv{x}, \vv{y})$ is a $K \times K$ matrix, and $f(\vv{y})$ is a vector of dimension $K$.

An operator $A$ on $\calH$ is said to be \emph{bounded} if, for some $M > 0$, $\norm{A f} \le M \norm{f}$ for any $f \in \calH$ (i.e., its operator norm $\norm{A}$ is finite). The \emph{rank} is the dimension of its range, and its \emph{adjoint} is defined as the unique operator $A^*$ satisfying $\langle A^* f, g \rangle = \langle f, A g \rangle$. An operator is \emph{Hermitian} or \emph{self-adjoint} if $A = A^*$, and \emph{positive semi-definite} $A \succeq 0$ if $\langle f, A f \rangle \ge 0$ for every $f \in \calH$. The concepts of $A \preceq 0$ and $A \preceq B$ are defined analogously.}

\subsection{Trace-class and Gram operators}
{Fixed some countable orthonormal basis $\{ e_k \}_k$ for the Hilbert space we can define a trace for a self-adjoint $A$:
\begin{align*}
    \Tr(A) = \sum_{k = 0}^\infty \langle e_k, A e_k \rangle
\end{align*}
and $A$ is said to be \emph{trace-class} if $\Tr \abs{A} := \Tr\sqrt{A^2} < \infty$\footnote{To give extra intuition, an operator is in the trace class when the sum of its eigenvalues converges absolutely, i.e., $\sum_{k = 0}^\infty |\lambda_n| < \infty$. This implies that the $\sum_{k = 0}^\infty \lambda_n$ converges as well, and regardless of how eigenvalues are arranged, which is necessary for the trace to be basis-independent.}. Note that any operator with finite rank is also trace-class. Moreover, if $A, B$ are trace-class, then also their linear combinations will be trace-class, because $\norm{A}_1 := \Tr\abs{A}$ is a norm. If an operator is trace-class, then the spectral theorem guarantees that it admits a complete (countable) orthonormal basis of eigenvectors and, in particular, there exists a sequence of eigenvalues~\cite{conwayCourseFunctionalAnalysis2007}.

The notion of trace-class operators is central in quantum mechanics, as density operators needed to describe open quantum systems of continuous variables are trace-class operators with unit trace. A stronger condition is when $A$ has finite rank, as in this case the eigenvalues are finitely many, and $A$ can be represented as a finite matrix after an appropriate change of basis.}

{\begin{definition}
    Let $v \in L^2(\T^m, \calW)$ with $\calW$ being separable. The \emph{Gram operator} $G_v$ of $v$ is the operator on $L^2(\T^m)$ with kernel
    \begin{align*}
        g_v(\vv{x}, \vv{y}) = \langle v(\vv{x}), v(\vv{y}) \rangle
    \end{align*}
    where the inner product is intended to be the one on $\calW$.
\end{definition}
Any Gram operator is positive semi-definite, moreover we can deduce the following:
\begin{theorem}
    \label{thm:gram-operator-is-trace-class}
    Any Gram operator is trace-class.
\end{theorem}
\begin{proof}
    By the Cauchy-Schwarz inequality on $\calW$, $|g_v(\vv{x}, \vv{y})|^2 \le \norm{v(\vv{x})}^2 \norm{v(\vv{y})}^2$. Therefore
    \begin{align*}
        \int_{\T^m} \int_{\T^m} |g_v(\vv{x}, \vv{y})|^2 \le \int_{\T^m} \norm{v(\vv{x})}^2 \int_{\T^m} \norm{v(\vv{y})}^2 < \infty
    \end{align*}
    implying $g_v \in L^2((\T^m)^2)$. Therefore, $G_v$ is a Hilbert-Schmidt operator, and thus the spectral theorem guarantees the existence of an orthonormal basis with a sequence of eigenvalues $\lambda_n$ that is square-summable~\cite{conwayCourseFunctionalAnalysis2007}\footnote{The spectral theorem generally holds for \emph{compact operators}. It is true that finite rank $\subseteq$ trace-class $\subseteq$ Hilbert-Schmidt $\subseteq$ compact~\cite{conwayCourseFunctionalAnalysis2007}.}. By the fact that $v \in L^2$, we can also conclude that
    \begin{align*}
        \int_{\T^m} g_v(\vv{x}, \vv{x}) < \infty \ .
    \end{align*}
    This is sufficient for $G_v$ to be trace-class, through an application of Parseval's identity~\cite{conwayCourseFunctionalAnalysis2007}.
\end{proof}
\begin{theorem}
    \label{thm:gram-operator-has-finite-rank}
    If $v \in L^2(\T^m, \C^K)$, then $G_v$ has rank $ \le K$.
\end{theorem}
\begin{proof}
    If $f \in L^2(\T^m)$, then
    \begin{align*}
        (G_v f)(\vv{x}) & = \int_{\T^m} \langle v(\vv{x}), v(\vv{y}) \rangle f(\vv{y}) \ \dd\mu(\vv{y}) \\
        & = \left\langle v(\vv{x}), \int_{\T^m} f(\vv{y}) v(\vv{y}) \ \dd\mu(\vv{y}) \right\rangle
    \end{align*}
    If we call $w(f) \in \C^K$ the vector on the right, then $G_v$ acts like:
    \begin{align*}
        f \mapsto \sum_{j = 1}^K w_j(f) \overline{v}_j
    \end{align*}
    i.e., the image of $G_v$ is contained in $\vspan{\overline{v}_1, \ldots, \overline{v}_K}$.
\end{proof}
As a corollary of Theorems~\ref{thm:gram-operator-is-trace-class},~\ref{thm:gram-operator-has-finite-rank}, the Gram operators for $G_{\xi}, G_{\tau}, G_v, G_{Ov}$ --- as well as their differences --- in Definition~\ref{def:adversary-bound-l2} are guaranteed to be trace-class. In particular, $G_{\xi}, G_{\tau}$ always have rank $\le K$, as $\xi(\vv{z}), \tau(\vv{z})$ have values in $\C^K$. In the case of polynomial state conversion, Theorem~\ref{thm:gamma-bound-finite-polynomial-multivariate} guarantees that also $G_v, G_{Ov}$ will also have finite rank, allowing us to represent everything as finite matrices.}

\subsection{Proof of weak duality}
{If $A$ is an operator with kernel $a$ and $B$ is an operator with kernel $b$, then the Hadamard product $A \circ B$ is defined as the operator with kernel $a b$ (note that this operation, as in the finite-dimensional case, is basis-dependent).

\begin{theorem}[Schur product theorem for operators]
    \label{thm:schur-product}
    Given two trace-class operators $A, B$ on some Hilbert space, if $A, B \succeq 0$ then $A \circ B \succeq 0$.
\end{theorem}
\begin{proof}
    Since $A, B$ are trace-class and self-adjoint, by the spectral theorem they can be written in terms of their eigendecompositions:
    \begin{align*}
        A & = \sum_{n = 0}^\infty \lambda_n \ketbra{u_n} \\
        B & = \sum_{m = 0}^\infty \mu_m \ketbra{w_m}
    \end{align*}
    The Hadamard product is then simply
    \begin{align*}
        A \circ B & = \sum_{n = 0}^\infty \sum_{m = 0}^\infty \lambda_n \mu_m \ketbra{u_n} \circ \ketbra{w_m}
    \end{align*}
    The integral kernel of the rank-1 operator $\ketbra{u_n}$ is $u_n(x) \overline{u_n(y)}$, and likewise for $w_n$, so the Hadamard product of these kernels is $u_n(x) w_m(x) \overline{u_n(y) w_m(y)}$, which is rank-1 and positive semi-definite. Therefore $A \circ B$ is a linear combination, with non-negative coefficients, of positive semi-definite operators, and therefore is positive semi-definite.
\end{proof}
Similarly, if $A \preceq 0$ and $B \succeq 0$ then $A \circ B \preceq 0$ by an analogous argument. Given a catalyst $v(\vv{z}) \in L^2(\T^m, \calM \otimes \calW)$, we can define its multiplication operator
\begin{align*}
    V : L^2(\T^m) & \rightarrow L^2(\T^m, \calM \otimes \calW) \\
    f(\vv{z}) & \mapsto f(\vv{z}) v(\vv{z})
\end{align*}
Constraint~(\ref{eq:unidir-gamma-bound-constraint-partial-l2}) can thus be rewritten as
\begin{align*}
    E \succeq V^* [(\Delta \otimes \id_{\calW})] V
\end{align*}
analogously to the finite-dimensional case. This is enough to extend Theorem~\ref{thm:relative-gamma-bound-weak-duality} to linear operators\footnote{The proof of Theorem~\ref{thm:relative-gamma-bound-weak-duality} proceeds as written. We should replace the maximum with a supremum, though.}. Also the proofs of Theorem~\ref{thm:gamma-bound-subnorm-property} and the lower bound of Theorem~\ref{thm:gamma-monte-carlo-bounds} carry over to this case without any modification, up to the change from discrete to continuous variable notation.
}
\section{Proofs for polynomial state conversion}
\label{apx:catalyst-proofs}

We give the detailed proofs of the results mentioned in Section~\ref{sec:state-conversion-l2}.

\subsection{Univariate case}
\label{apx:catalyst-proofs-univariate}

{
For a polynomial vector $p(z) = \sum_k p_k z^k$, we denote with $p_k$ its $k$-th Fourier coefficient, that is
\begin{align*}
    p_k = \langle t_k, p \rangle = \int_{\T} z^{-k} p(z) \ \dd\mu(z) = \frac{1}{2\pi} \int_0^{2\pi} e^{-ik\theta} p(e^{i\theta}) \ \dd\theta \ .
\end{align*}}

\begin{lemmarestate}{\ref{thm:feasible-space-polynomial-univariate}}
    If $v(z)$ is a feasible catalyst for a $\xi(z) \mapsto \tau(z)$ polynomial state conversion, where both $\xi, \tau$ have degree bounded by $n$, then $v(z)$ is a polynomial of degree at most $n - 1$.
\end{lemmarestate}
\begin{proof}
    The constraint of~(\ref{eq:unidir-gamma-bound-constraint-partial-l2}) can be rewritten in the Fourier basis:
    {\begin{align*}
        \langle t_{-k}, G_{\xi} t_{-h} \rangle & = \iint_{\T^2} x^k \langle \xi(x), \xi(y) \rangle y^{-h} \ \dd\mu(x) \dd\mu(y) \\
        & = \langle \int_{\T} x^{-k} \xi(x) \ \dd\mu(x), \int_{\T} y^{-h} \xi(y) \ \dd\mu(y) \rangle \\
        & = \langle \xi_k, \xi_h \rangle
    \end{align*}
    and similarly it goes for the Gram matrices of $\tau(z), v(z), z v(z)$ (the last one simply shifts the coefficients by $1$). Therefore, we have the following constraint:}
    \begin{align*}
        \langle \xi_k, \xi_h \rangle - \langle \tau_k, \tau_h \rangle = \langle v_k, v_h \rangle - \langle v_{k-1}, v_{h-1} \rangle
    \end{align*}
    for every $k, h \in \Z$. Assume for a contradiction that $v_{k'} \neq 0$ for some $k' < 0$, therefore by plugging $k = h = k'$ the constraint gives us
    \begin{align*}
        0 = \langle v_{k'}, v_{k'} \rangle - \langle v_{k'-1}, v_{k'-1} \rangle
    \end{align*}
    since $\xi, \tau$ do not have negative frequencies. By induction we thus have $\norm{v_k}^2 = \norm{v_{k'}}^2 \neq 0$ for every $k < k'$, which implies
    \begin{align*}
        \norm{v(z)}^2_{\infty} \ge \norm{v(z)}^2_{L^2} \ge \sum_{k = 1}^\infty \norm{v_{-k}}^2 = \sum_{k = 1}^\infty \norm{v_{k'}}^2 \ .
    \end{align*}
    The sum on the right-hand side diverges, and thus $v$ has infinite objective function, contradicting its feasibility. The middle inequality is a standard application of Bessel's inequality on the Hilbert space $L^2(\T, \calM \otimes \calW)$~\cite{conwayCourseFunctionalAnalysis2007}. Similarly, if $v_{k'} \neq 0$ for ${k'} \ge n$, then the constraint gives us $\norm{v_k}^2 = \norm{v_{k'}}^2$ for every $k > k'$, and the same argument on the $L^2$ norm applies.
\end{proof}
Put simply, the proof looks for a subsequence whose sum of squares diverges, in order to conclude that the full sum (which is equal to the $L^2$ norm, by Parseval's identity~\cite{conwayCourseFunctionalAnalysis2007}) diverges as well.

We now prove Theorems~\ref{thm:gamma-bound-to-gqsp},~\ref{thm:gamma-bound-to-qsp-sun}. We will prove the latter, as the former is a special case.
{
\begin{theoremrestate}{\ref{thm:gamma-bound-to-qsp-sun}}
    Let $r \ge 1$, and let $P, Q_1, \ldots, Q_r$ be polynomials of degree $\le n$ satisfying $|P(z)|^2 + \sum_{j=1}^r |Q_j(z)|^2 = 1$ on $\T$. There exists a bijective mapping between the catalysts in triangular form of the adversary bound for the state conversion problem $$\ket{0} \mapsto \tau(z) = P(z) \ket{0} + \sum_{j = 1}^r Q_j(z) \ket{j}$$ with oracle $O(z) = z$ and the QSP protocols over $SU(r+1)$ implementing $(P, Q_1, \ldots, Q_r)$.
\end{theoremrestate}
}
\begin{proof}
    {
    Let $A = (A_0, A_1, \ldots, A_n)$ be the sequence of processing operators of a QSP protocol over $SU(r + 1)$ implementing $\tau = (P, Q) := (P, Q_1, \ldots, Q_r)$, i.e.,
    \begin{align*}
        A_n \Tilde{w}^{(r)} \cdots \Tilde{w}^{(r)} A_0 \ket{0} = (P, Q) \ .
    \end{align*}
    Let $(v^{(k)}, s^{(k)}) := A_k \Tilde{w}^{(r)} \cdots \Tilde{w}^{(r)} A_0 \ket{0}$ be the polynomials at the $k$-th step, where $v^{(k)}$ is a scalar and $s^{(k)}$ has dimension $r$ (notice that for $k = n$ we get $(P, Q)$). We claim that the direct sum $v = v^{(0)} \oplus \cdots \oplus v^{(n-1)}$ is a valid catalyst in triangular form. For $0 < k \le n$, the QSP step $(v^{(k)}, s^{(k)}) = A_k (zv^{(k-1)}, s^{(k-1)})$ implies
    \begin{align*}
        \langle v^{(k)}(x), v^{(k)}(y) \rangle + \langle s^{(k)}(x), s^{(k)}(y) \rangle = \langle xv^{(k-1)}(x), yv^{(k-1)}(y) \rangle + \langle s^{(k-1)}(x), s^{(k-1)}(y) \rangle
    \end{align*}
    By summing over $k$ we get the condition:
    \begin{align*}
        \sum_{k = 1}^{n} \langle v^{(k)}(x), v^{(k)}(y) \rangle - \sum_{k = 0}^{n-1} \langle xv^{(k)}(x), yv^{(k)}(y) \rangle = \langle s^{(0)}(x), s^{(0)}(y) \rangle - \langle s^{(n)}(x), s^{(n)}(y) \rangle
    \end{align*}
    where the $s^{(k)}$ for the intermediate values of $k$ canceled out, and we brought the remaining terms on the other side of the equation. By knowing that $v^{(n)} = P$ and $s^{(n)} = Q$ (and so $(v^{(n)}, s^{(n)}) = \tau$), we can simply rewrite the equation as
    \begin{align*}
        \sum_{k = 1}^{n-1} \langle v^{(k)}(x), v^{(k)}(y) \rangle - \sum_{k = 0}^{n-1} \langle xv^{(k)}(x), yv^{(k)}(y) \rangle = \langle s^{(0)}(x), s^{(0)}(y) \rangle - \langle \tau(x), \tau(y) \rangle
    \end{align*}
    By adding the inner product for $v^{(0)}$ on both sides (and knowing that $(v^{(0)}, s^{(0)}) = A_0 \ket{0}$, which is constant), we obtain
    \begin{align}
        \langle v(x), v(y) \rangle - \langle xv(x), y(y) \rangle = 1 - \langle \tau(x), \tau(y) \rangle \label{eq:constraint-univariate-to-qsp-1}
    \end{align}
    which is exactly~(\ref{eq:unidir-gamma-bound-constraint-partial-l2}).}

    Conversely, consider a catalyst $v = v^{(0)} \oplus \cdots \oplus v^{(n-1)}$, which thus satisfies~(\ref{eq:constraint-univariate-to-qsp-1}). We proceed by induction on $n$: if $n = 0$ then Lemma~\ref{thm:feasible-space-polynomial-univariate} shows that $v = 0$ and thus by~(\ref{eq:constraint-univariate-to-qsp-1}) $\tau(z)$ is equal to $\ket{0}$ up to a unitary $A_0 \in SU(r+1)$. 
    
    When $n > 0$, the constraint of (\ref{eq:constraint-univariate-to-qsp-1}) can be rewritten as
    \begin{align}
        0 & = \langle \tau(x), \tau(y) \rangle - \langle x v^{(n-1)}(x), y v^{(n-1)}(y) \rangle \label{eq:constraint-univariate-to-qsp-2} \\
        & + \langle v^{(n-1)}(x), v^{(n-1)}(y) \rangle - \langle x v^{(n-2)}(x), y v^{(n-2)}(y) \rangle \nonumber \\
        & + \langle v^{(n-2)}(x), v^{(n-2)}(y) \rangle - \langle x v^{(n-3)}(x), y v^{(n-3)}(y) \rangle\nonumber \\
        &\ \ \ \ \ \vdots\nonumber \\
        & + \langle v^{(1)}(x), v^{(1)}(y) \rangle - \langle x v^{(0)}(x), y v^{(0)}(y) \rangle\nonumber \\
        & + \langle v^{(0)}(x), v^{(0)}(y) \rangle - 1\nonumber
    \end{align}
    By $\langle \tau(z), \tau(z) \rangle = 1$, which must hold as a polynomial equation, we get $\langle \tau_0, \tau_n \rangle = 0$ from the $n$-th degree term. {Define $\Pi_+$ to be the projector onto $\vspan{\tau_n}$, and let $\Pi_- := \id - \Pi_-$. From this, define $\tau_{\pm} = \Pi_{\pm} \tau$, and so $\tau = \tau_+ \oplus \tau_-$ up to a unitary transformation $A \in SU(r+1)$, (which will be used to construct the last processing operator in the QSP protocol)}. By construction, $\tau_+(z)$ has only degrees in $\{ 1, \ldots, n \}$, and $\tau_-(z)$ is supported on $\{ 0, \ldots, n-1 \}$, {so when we undo $\Tilde{w}^{(r)}$ to strip a layer
    \begin{align*}
        \tau \stackrel{A^\dag}{\mapsto} \tau_+ \oplus \tau_- \stackrel{(\Tilde{w}^{(r)})^\dag}{\mapsto} z^{-1} \tau_+ \oplus \tau_- \ .
    \end{align*}
    }We claim that $$\langle \tau_+(x), \tau_+(y) \rangle = \langle x v^{(n-1)}(x), y v^{(n-1)}(y) \rangle \ .$$ This would complete the proof by induction because {it would imply $z^{-1} \tau_+(z) = v^{(n-1)}(z)$ up to a phase} and, by (\ref{eq:constraint-univariate-to-qsp-2}) the remainder $v' = v^{(0)} \oplus \cdots \oplus v^{(n-2)}$ would be a catalyst for the state conversion $\ket{0} \mapsto v^{(n-1)} \oplus \tau_-$, a $(n-1)$-degree polynomial state\footnote{{Note that $v^{(n-1)}$ and $\tau_+$ are both scalar polynomials, so the two inner products being equal means that $\tau_+(z) = zv^{(n-1)}(z)$ up to a phase.}}.
    
    By~(\ref{eq:constraint-univariate-to-qsp-2}) we have that the two polynomial vectors
    \begin{align*}
        a(z) & := \tau_+(z) \ket{0}_A \ket{0}_B + \tau_-(z) \otimes \ket{0}_A \ket{1}_B + \sum_{k=0}^{n-1} v^{(k)}(z) \ket{1}_A \ket{k}_B \\
        b(z) & := zv^{(n-1)}(z) \ket{0}_A \ket{0}_B + \ket{0}_A \ket{1}_B + \sum_{k=1}^{n-1} zv^{(k-1)}(z) \ket{1}_A \ket{k}_B
    \end{align*}
    are equal up to a unitary transformation $U$. No polynomial, except for $\tau_+, zv^{(n-1)}$, has a $n$-degree term, hence we can extract the $n$-th Fourier coefficient
    \begin{align*}
        \tau_{+,n}\ket{0}_A \ket{0}_B = \frac{1}{2\pi} \int_0^{2\pi} a(e^{i\theta}) e^{-in\theta} d\theta = U \frac{1}{2\pi} \int_0^{2\pi} b(e^{i\theta}) e^{-in\theta} d\theta = U \qty( v^{(n-1)}_{n-1} \ket{0}_A \ket{0}_B ) \ .
    \end{align*}
    This implies that $\ket{0}_A \ket{0}_B$ is an eigenstate of $U$, and that $|v^{(n-1)}_{n-1}| = |\tau_{+,n}| \neq 0$ by unitarity\footnote{{$\tau_{+,n} \neq 0$ as we are assuming $\tau$ has degree exactly $n$. Otherwise also $v^{(n-1)} \equiv 0$ by Lemma~\ref{thm:feasible-space-polynomial-univariate}}.}. Let $e^{i\phi} = \tau_{+,n}/v^{(n-1)}_{n-1}$ be the associated eigenvalue. We can deduce that
    \begin{align*}
        \tau_+(z) = \bra{0}_A \bra{0}_B \ket{a(z)} = \bra{0}_A \bra{0}_B U \ket{b(z)} = e^{-i\phi} \bra{0}_A \bra{0}_B \ket{b(z)} = e^{-i\phi} z v^{(n-1)}(z)
    \end{align*}
    which implies $\langle \tau_+(x), \tau_+(y) \rangle = \langle x v^{(n-1)}(x), y v^{(n-1)}(y) \rangle$ as claimed.

    We remark once again that this can be seen as a layer stripping argument, {as we are carrying out the chain of transformations}
    \begin{align*}
        \tau \overset{A^\dag}{\mapsto} \tau_+ \oplus \tau_- = e^{-i\phi} z v^{(n-1)} \oplus \tau_- \mapsto z v^{(n-1)} \oplus \tau_- \overset{(\Tilde{w}^{(r)})^\dag}{\mapsto} v^{(n-1)} \oplus \tau_-
    \end{align*}
    and the last processing operator $A_n^\dag$ for the QSP protocol is the composition of the first two transformations.
\end{proof}
{From this argument we can better understand the structure of the catalysts $v$ for this specific case of state conversion.}
\begin{corollary}
    If $v(z) = v^{(0)} \oplus \cdots \oplus v^{(n-1)}$ {is in triangular form}, then $v^{(k)}$ has exactly degree $k$ for every $k$.
\end{corollary}
Choosing a $QL$-decomposition with positive diagonal for the triangular form (which is unique) is equivalent to the constraint $a^*(0) > 0$ that guarantees uniqueness in the literature for the non-linear Fourier transform~\cite{tsaiSU2NonlinearFourier2005,alexisQuantumSignalProcessing2024,alexisInfiniteQuantumSignal2024}.

\subsection{Multivariate case}
\label{apx:catalyst-proofs-multivariate}

{Much like the univariate case, for a polynomial vector $p(z) = \sum_{\vv{k} \in \Z^m} p_{\vv{k}} \vv{z}^{\vv{k}}$, we denote with $p_{\vv{k}}$ its $\vv{k}$-th Fourier coefficient, that is
\begin{align*}
    p_{\vv{k}} = \langle t_{\vv{k}}, p \rangle = \iint_{\T^{m}} \vv{z}^{-\vv{k}} f(\vv{z}) \ \dd\mu(\vv{z}) \ .
\end{align*}
Before proving the results, we show how to obtain~(\ref{eq:gamma-bound-constraint-multivariate-fourier}): the argument for the Gram operators of $\xi, \tau, v$ do not change from the univariate case. On the other hand, the Gram operator for $Ov$ has kernel:
\begin{align*}
    \langle O(\vv{x}) v(\vv{x}), O(\vv{y}) v(\vv{y}) \rangle & = \sum_{i = 1}^m \sum_{j = 1}^m \langle x_i \Pi_i v(\vv{x}), y_j \Pi_j v(\vv{y}) \rangle \\
    & = \sum_{j = 1}^m \langle x_j \Pi_j v(\vv{x}), y_j \Pi_j v(\vv{y}) \rangle & \text{since $\Pi_i \Pi_j = 0$ for $i \neq j$}
\end{align*}
We thus consider the Gram operators $G_{z_j \Pi_j v}$ separately for each $j$:
\begin{align*}
    \langle t_{-\vv{k}}, G_{Ov} t_{-\vv{h}} \rangle & = \int_{\T^m} \int_{\T^m} \vv{x}^{\vv{k}} \langle x_j \Pi_j v(\vv{x}), y_j \Pi_j v(\vv{y}) \rangle \vv{y}^{-\vv{h}} \ \dd\mu(\vv{x}) \dd\mu(\vv{y}) \\
    & = \langle \Pi_j \int_{\T^m} x_j \vv{x}^{-\vv{k}} v(\vv{x}) \ \dd\mu(\vv{x}), \Pi_j \int_{\T^m} y_j \vv{y}^{-{\vv{h}}} v(\vv{y}) \ \dd\mu(\vv{y}) \rangle \\
    & = \langle v_{\vv{k} - \vv{e}_j}, \Pi_j v_{\vv{h} - \vv{e}_j} \rangle
\end{align*}
where we simply notice that $x_j \vv{x}^{-\vv{k}} = \vv{x}^{-(\vv{k} - \vv{e}_j)}$, and we take advantage of the fact that $\Pi_j$ is idempotent and Hermitian.}

\begin{lemmarestate}{\ref{thm:gamma-bound-finite-polynomial-multivariate}}
    Suppose the oracle is $O(\vv{z}) = \diag(\vv{z})$ and that $\xi(\vv{z}), \tau(\vv{z})$ have degree $\le n_j$ in $z_j$ for every $j$. Then any feasible solution $v(\vv{z})$ to the total polynomial state conversion $\xi \mapsto \tau$ must be a polynomial of degree $\le n_j - 1$ in each variable $z_j$.
\end{lemmarestate}
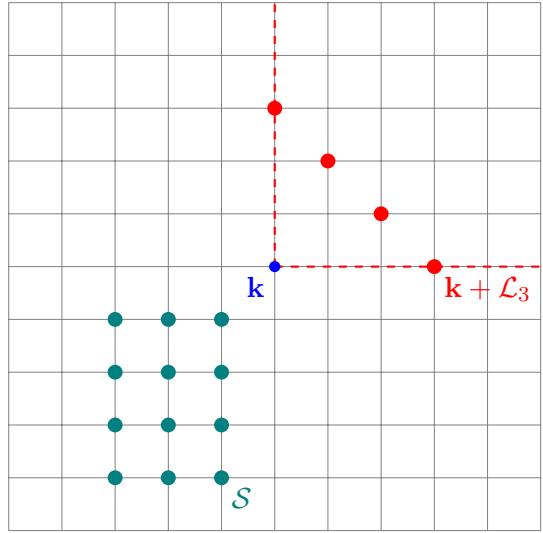
\begin{figure}
    \centering
    \begin{tikzpicture}[scale=0.7]
    \draw[step=1cm,gray,very thin] (0,0) grid (10,10);

    \def\kx{5}
    \def\ky{5}
  
    \foreach \i/\j in {8/5, 7/6, 6/7, 5/8} {
      \fill[red] (\i,\j) circle (4pt);
    }

    \foreach \i in {2,...,4} {
        \foreach \j in {1,...,4}
        \fill[teal] (\i,\j) circle (4pt);
    }
  
    \draw[red, dashed, thick] (\kx, \ky) -- ({\kx+5},\ky);
    \draw[red, dashed, thick] (\kx, \ky) -- (\kx, {\ky+5});

    \fill[blue] (\kx,\ky) circle (3pt);
    \node[below left, blue] at (\kx,\ky) {$\vv{k}$};
    \node[below right, red] at ({\kx+3},\ky) {$\vv{k} + \calL_3$};
    \node[below right, teal] at (4,1) {$\calS$};

  \end{tikzpicture}
    \caption{Visualization of the argument in two variables. The set $\vv{k} + \calL_3$ in red. The region on the top-right of the dashed red lines is the one constituting the divergent subsequence.}
    \label{fig:state-2d-argument}
\end{figure}
\begin{proof}
    The idea is to look for a divergent sum as in Lemma~\ref{thm:feasible-space-polynomial-univariate}. We claim that the hyper-rectangle $$\calS = \{ \vv{h} \in \Z^m : 0 \le h_j < n_j \}$$ contains the support of $v$. For this purpose, suppose for a contradiction that $v_{\vv{k}} \neq 0$ for some $\vv{k} \not\in \calS$. For $s \ge 0$, we define the subset
    \begin{align*}
        \calL_s & = \qty{ \vv{h} \in \Z^m_{\ge 0} : \sum_j h_j = s }
    \end{align*}
    One of the following must hold:
    \begin{enumerate}[(i)]
        \item $(\vv{k} + \calL_s) \cap \calS = \emptyset$ for every $s \in \Z_{\ge 0}$ or
        \item $(\vv{k} - \calL_s) \cap \calS = \emptyset$ for every $s \in \Z_{\ge 0}$
    \end{enumerate}
    Suppose for a contradiction that both claims are false: this implies that $\vv{k} + \vv{d}_1, \vv{k} - \vv{d}_2 \in \calS$ for some $\vv{d}_1, \vv{d}_2 \ge 0$. These two points identify a hyper-rectangle contained in $\calS$, and since we have that, for every $j$
    \begin{align*}
        0 \le (\vv{k} - \vv{d}_2)_j \le k_j \le (\vv{k} + \vv{d}_1)_j < n_j
    \end{align*}
    we would conclude $\vv{k} \in \calS$, a contradiction (see Figure~\ref{fig:state-2d-argument} for a visualization on the case $m = 2$). If (i) holds then, for any $s \in \Z_{> 0}$ and $\vv{h} \in \vv{k} + \calL_s$, (\ref{eq:gamma-bound-constraint-multivariate-fourier}) becomes
    \begin{align*}
        \langle v_{\vv{h}}, v_{\vv{h}} \rangle = \sum_{j = 1}^m \langle v_{\vv{h} - \vv{e}_j}, \Pi_j v_{\vv{h} - \vv{e}_j} \rangle \ .
    \end{align*}
    By summing over all $\vv{h} \in \vv{k} + \calL_1$, we obtain
    \begin{align*}
        \sum_{\vv{h} \in \vv{k} + \calL_1}\langle v_{\vv{h}}, v_{\vv{h}} \rangle & = \sum_{\vv{h} \in \vv{k} + \calL_1} \sum_{j = 1}^m \langle v_{\vv{h} - \vv{e}_j}, \Pi_j v_{\vv{h} - \vv{e}_j} \rangle \ge \langle v_{\vv{k}}, v_{\vv{k}} \rangle
    \end{align*}
    where the inequality comes from the fact that the sum in the middle contains all the pieces of $v_{\vv{k}}$. One can replace $\calL_1$ with $\calL_s$, by an induction argument on $s$ (more precisely, the sum over $\vv{k} + \calL_s$ contains all the pieces of the elements in $\vv{k} + \calL_{s-1}$). Since $v_{\vv{k}} \neq 0$, the non-decreasing sequence $$a_s := \sum_{\vv{h} \in \vv{k} + \calL_s} \langle v_{\vv{h}}, v_{\vv{h}} \rangle$$ does not converge to $0$, contradicting the fact that $v \in L^2$. Analogously if (ii) holds, on the other hand, \begin{align*}
        \langle v_{\vv{k}}, v_{\vv{k}} \rangle = \sum_{j = 1}^m \langle v_{\vv{k} - \vv{e}_j}, \Pi_j v_{\vv{k} - \vv{e}_j} \rangle \le \sum_{\vv{h} \in \vv{k} - \calL_1} \langle v_{\vv{h}}, v_{\vv{h}} \rangle \ .
    \end{align*}
    By an induction on $s$, we obtain that the sequence $$b_s := \sum_{\vv{h} \in \vv{k} - \calL_s} \langle v_{\vv{h}}, v_{\vv{h}} \rangle$$
    does not converge to $0$, contradicting once again $v \in L^2$.
\end{proof}

\begin{theoremrestate}{\ref{thm:gamma-bound-to-mqsp}}
    Consider the total state conversion $$\ket{0} \mapsto \tau(\vv{z}) = P(\vv{z}) \ket{0} + \sum_{k = 1}^r Q_k(\vv{z}) \ket{k} \ .$$
    where $P, Q_k$ have degree $\le n$. {A catalyst in triangular form of the adversary bound for the above state conversion problem can be turned into a M-QSP protocol of length $n$ for $\tau(\vv{z})$.}
\end{theoremrestate}
\begin{proof}
    We proceed by induction on $n$, the case $n = 0$ being trivial. For $n > 0$, let $\tau(\vv{z}) = (P, Q_1, \ldots, Q_k) =: (P, Q)$. {Let $\Pi_+$ be the projector onto the span of all the $n$-degree coefficients of $\tau$
    \begin{align*}
        \vspan{ \tau_{\vv{k}} : \norm{\vv{k}}_1 = n } \ .
    \end{align*}
    Define $\Pi_- := \id - \Pi_+$ and let $\tau_{\pm} := \Pi_{\pm} \tau$. Then $\tau = \tau_+ \oplus \tau_-$ up to a unitary transformation $A$,} where $\tau_+$ is a component of degree $n$ and $\tau_-$ is a component of degree $\le n - 1$. Intuitively, $\tau_-$ will be the component multiplied by $1$ by the inverse signal operator, just like in Theorem~\ref{thm:gamma-bound-to-gqsp}. {We remark that in this case $\tau_+$ need not be scalar, and different dimensions can be multiplied by different variables.} We conclude the proof by showing that
    \begin{align}
        \langle \tau_+(\vv{x}), \tau_+(\vv{y}) \rangle = \sum_{j=1}^m \langle x_j \Pi_j v^{(n-1)}(\vv{x}), y_j \Pi_j v^{(n-1)}(\vv{y}) \rangle = \langle O v^{(n-1)}(\vv{x}), O v^{(n-1)}(\vv{y}) \rangle \ . \label{eq:gamma-bound-to-mqsp-gram-equality}
    \end{align}
    This would imply that there exists a unitary mapping $T$ satisfying
    \begin{align*}
        T Ov^{(n-1)} = T \sum_j z_j \Pi_j v^{(n-1)}(\vv{z}) = \tau_+(\vv{z}) \ .
    \end{align*}
    Hence, the processing operator $A_n = A(T \oplus \id)$ would then carry out the transformation
    \begin{align*}
        v^{(n-1)} \oplus \tau_- \stackrel{O}{\mapsto} O v^{(n-1)} \oplus \tau_- \stackrel{T}{\mapsto} \tau_+ \oplus \tau_- \stackrel{A}{\mapsto} \tau
    \end{align*}
    thus choosing the last processing operator to be $A_n = A(T \oplus \id)$ allowing us to reduce ourselves to the $(n-1)$-degree state conversion $\ket{0} \mapsto v^{(n-1)} \oplus \tau_-$, for which the remainder $v^{(0)} \oplus \cdots \oplus v^{(n-2)}$ is a valid catalyst.

    Similarly to Theorem~\ref{thm:gamma-bound-to-gqsp}, we know by (\ref{eq:unidir-gamma-bound-constraint-partial-l2}) that the Gram matrices of $\ket{0} \oplus O v(\vv{z})$ and $\tau_+(\vv{z}) \oplus \tau_-(\vv{z}) \oplus v(\vv{z})$ are equal. Thus, by rearranging the terms, the two vectors
    \begin{align*}
        a(\vv{z}) & := \tau_+(\vv{z}) \otimes \ket{0}_A \ket{0}_B + \tau_-(\vv{z}) \otimes \ket{0}_A \ket{1}_B + \sum_{k=0}^{n-1} v^{(k)}(\vv{z}) \otimes \ket{1}_A \ket{k}_B \\
        b(\vv{z}) & := O v^{(n-1)}(\vv{z}) \otimes \ket{0}_A \ket{0}_B + \ket{0}_A \ket{1}_B + \sum_{k=1}^{n-1} O v^{(k-1)}(\vv{z}) \otimes \ket{1}_A \ket{k}_B \ .
    \end{align*}
    are equal up to a unitary transformation $U$. The only difference with Theorem~\ref{thm:gamma-bound-to-gqsp} is that $\tau_+, \tau_-, v^{(k)}$ are not necessarily scalars. We introduce the following notation to extract degrees from a polynomial vector:
    \begin{align*}
        [p(\vv{z})]_{\vv{k}} := \int_{\T^m} p(\vv{z}) \vv{z}^{-\vv{k}} \ d\mu(\vv{z}) \ .
    \end{align*}
    The only terms of degree $n$ are $\tau_+, Ov^{(n-1)}$, hence for any $n$-degree term $\vv{z}^\vv{k}$, the following holds
    \begin{align*}
        [\tau_{+}]_{\vv{k}} \otimes \ket{0}_A \ket{0}_B = U [O v^{(n-1)}]_{\vv{k}} \otimes \ket{0}_A \ket{0}_B \ .
    \end{align*}
    Which means that $U$ keeps $\ket{0}_A \otimes \ket{0}_B$ fixed. In other words, $U$ can be represented as
    \begin{align*}
        U =
        \begin{bmatrix}
            T & 0 \\
            0 & \cdot
        \end{bmatrix}
    \end{align*}
    where the first row/column represents the subspace spanned by $\ket{0}_A \otimes \ket{0}_B$. By unitarity of $U$, also $T$ must be unitary. This shows that $\tau_+ = T O v^{(n-1)}$ as claimed.
\end{proof}

\end{document}